\newtheorem*{theorem*}{Theorem}
\newtheorem{theorem}{Theorem}
\newtheorem{lemma}{Lemma}
\newtheorem{definition}{Definition}
\def \endproof {\hfill $\Box$ \smallskip}
\DeclareMathOperator{\Conv}{Conv}
\DeclareMathOperator{\Stack}{Stack}
\DeclareMathOperator{\sign}{sign}
\DeclareMathOperator{\maxdeg}{\Delta_G}
\DeclareMathOperator{\pr}{pr}
\DeclareMathOperator{\Area}{Area}
\DeclareMathOperator{\Vol}{Vol}
\newcommand{\gridsize}{L}
\renewcommand{\P}{\mathcal{P}}
\newcommand{\ns}{\!\!\!\!}
\newcommand{\R}{\mathbb{R}}
\newcommand{\Z}{\mathbb{Z}}
\begin{document}




\title{A Duality Transform for Constructing\\ Small Grid Embeddings of 3d Polytopes}

\author{Alexander Igamberdiev\thanks{LG Theoretische Informatik, FernUniversit\"at in Hagen, Hagen, Germany.
{\tt [alexander.igamberdiev|andre.schulz]@fernuni-hagen.de}.
This work was funded by the German Research Foundation (DFG) under grant SCHU 2458/2-1.} 
\and Andr\'e Schulz\footnotemark[1]}

\maketitle

\begin{abstract}
We study the problem of how to obtain an integer realization of a 3d polytope when
an integer realization of its dual polytope is given. We focus on grid embeddings with
small coordinates and develop novel techniques
based on Colin de Verdi\`ere matrices and the Maxwell--Cremona lifting 
method.

We show that every truncated 3d polytope
with $n$ vertices 
can be realized  on a grid of size $O(n^{9\log6+1})$.
Moreover, for every simplicial 3d polytope with $n$ vertices with maximal vertex degree $\Delta$
and vertices placed on  an $L\times L \times L$ grid,
a dual polytope can be realized on an integer grid of size $O(nL^{3\Delta+9})$.
This implies that for a class $\mathcal{C}$  of  
simplicial 3d polytopes with 
bounded vertex degree 
and  polynomial size grid embedding, the
dual polytopes of $\mathcal{C}$ can
be realized on a polynomial size grid as well. 
\end{abstract}

\maketitle

\section{Introduction}

By Steinitz's theorem the graphs of convex 3d polytopes\footnote{In our terminology polytopes are always considered \emph{convex}.} are exactly 
the planar 3-connected graphs~\cite{S16}. Several methods are known for 
realizing a planar 3-connected graph $G$ as a polytope with graph $G$  on the grid~\cite{DG97,EG95,RRS11,OS94,RG96,Sch11}.
It is challenging to find algorithms that produce  polytopes with
small integer coordinates. Having a realization with small grid size is a desirable
feature, since then the polytope can be stored and processed efficiently. Moreover, grid embeddings imply 
good vertex and edge resolution. Hence, they produce ``readable'' drawings.

In 2d, every planar 3-connected graph with $n$ vertices
can be drawn with straight-line edges on an $O(n)\times O(n)$ grid 
without crossings~\cite{FPP90}, and a drawing with convex faces
can be realized on an $O(n^{3/2}\times n^{3/2})$ grid~\cite{BR06}. For the 
realization as a polytope the currently best algorithm guarantees an integer
embedding with coordinates of size at most $O(147.7^n)$~\cite{BS10,RRS11}.
The current best lower bound is $\Omega(n^{3/2})$~\cite{A61}. Closing
this  gap is an intriguing  open problem
in lower dimensional polytope theory. 

Recently, progress has been made for
a special class of 3d polytopes, the so-called \emph{stacked polytopes}.
A \emph{stacking operation} replaces a triangular face of a polytope
with a tetrahedron, while maintaining the convexity of the embedding (see Fig.~\ref{fig:stackedTruncated}).
A polytope that can be constructed from a tetrahedron and a sequence of stacking
operation is called a stacked 3d polytope, or for the scope of this paper simply a \emph{stacked polytope}. 
The graphs of stacked polytopes are planar 3-trees.
Stacked polytopes can be embedded 
on a grid that is polynomial in $n$~\cite{DS11}. This is, however, the only nontrivial polytope 
class for which such an algorithm is known.

\begin{figure}[t]
\centering
\begin{subfigure}[t]{0.4\linewidth}
\centering
\begin{tikzpicture}[line width=1pt]
	\begin{scope}[scale=1.2, x={(1cm,0cm)}, y={(0.25cm,0.17cm)}, z={(0cm,0.5cm)}]
			\path (0,0,-1) node[name=justtoshift, shape=coordinate]{};
	
			\path (-2,1,0) node[name=s1, shape=coordinate]{};
			\path (2,1,0) node[name=s2, shape=coordinate]{};
			\path (0,-2,0) node[name=s3, shape=coordinate]{};
			\path ($2/5*(s1)+2/5*(s2)+1/5*(s3)+(0,0,3)$) node[name=s4, shape=coordinate]{};
			
			\path ($1/2*(s2)+1/4*(s3)+1/4*(s4)+(0,0,0.4)$) node[name=s234, shape=coordinate]{};
		
			\path[draw] (s4)--(s1);
			\path[draw] (s4)--(s2);
			\path[draw] (s2)--(s3);
			\path[draw] (s3)--(s1);
			\path[draw] (s4)--(s3);		
	
			\begin{scope}[dashed]
				\path[draw] (s1)--(s2);
			\end{scope}
	
			\path[draw] (s234)--(s2);
			\path[draw] (s234)--(s3);
			\path[draw] (s234)--(s4);

	\end{scope}
\end{tikzpicture}
\caption{A tetrahedron after one stacking operation.}
\label{fig:stackedTruncated:stacked}
\end{subfigure}
\hskip3ex
\begin{subfigure}[t]{0.4\linewidth}
\centering
\begin{tikzpicture}[line width =1pt]

	\begin{scope}[yshift=1cm, x={(1cm,0cm)}, y={(0.35cm,0.20cm)}, z={(0cm,1cm)}, scale=0.5]

				\path (4,2,1) node[name=t1, shape=coordinate]{};
				\path (-4,2,1) node[name=t2, shape=coordinate]{};
				\path (0,-4,1) node[name=t3, shape=coordinate]{};
				\path (0,0,-3) node[name=t4, shape=coordinate]{};
				
				\path ($3/4*(t1)+1/4*(t2)$) node[name=t12, shape=coordinate]{};

				\path ($1/3*(t1)+2/3*(t3)$) node[name=t13, shape=coordinate]{};
				
				\path ($3/4*(t1)+1/4*(t4)$) node[name=t14, shape=coordinate]{};

				\path[draw] (t2)--(t3);
				\path[draw] (t3)--(t4);
				\path[draw] (t4)--(t2);
				\path[draw] (t2)--(t12);
				\path[draw] (t3)--(t13);
				\path[draw] (t4)--(t14);

				\path[draw] (t12)--(t13)--(t14)--cycle;
		
	\end{scope}
\end{tikzpicture}
\caption{A tetrahedron after the corresponding truncation.}
\label{fig:stackedTruncated:truncated}
\end{subfigure}

\caption{} 
\label{fig:stackedTruncated}
\end{figure}

\subsection{Our results}
In this paper we introduce a duality transform that maintains a polynomial grid size. 
In other words, we provide a technique that takes a
grid embedding of a simplicial polytope with graph $G$ 
and generates a grid embedding of a polytope whose graph is $G^*$, the dual graph of $G$. 
We call a 3d polytope with graph $G^*$ a \emph{dual polytope.}

We prove the following result:
\begin{theorem*}[Theorem 10]
    Let $G$ be a triangulation with maximal vertex degree $\maxdeg$
    and let  $\P=(u_i)_{1\le i\le n}$ be a realization of $G$ as a convex polytope with integer coordinates.
    Then there exists  a realization $(\phi_f)_{f\in F(G)}$ of
    the dual graph $G^*$ as a convex polytope with integer coordinates bounded by\footnote{
For convenience, throughout the paper we use $|u|$ for the Euclidean norm of the vector $u$.
    }
    $$
    |\phi_f|<O(n\max|u_i|^{3\maxdeg + 9}).
    $$
\end{theorem*}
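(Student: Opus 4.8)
The plan is to realize the dual polytope through the Maxwell--Cremona lifting rather than by naive polar duality, using a Colin de Verdi\`ere (stress) matrix of $G$ extracted from $\P$ as the bookkeeping device that keeps all coordinates integral and polynomially bounded. To see why polar duality alone fails, translate and scale $\P$ by a factor $O(n)$ so that the origin lies strictly inside it: each facet $f=\{i,j,k\}$ of $\P$ lies on a plane $\langle\nu_f,y\rangle=c_f$ with $\nu_f\in\Z^3$, $c_f\in\Z$, $|\nu_f|=O(\max|u_i|^2)$, $c_f=O(\max|u_i|^3)$, and the polar dual $\P^\circ$ is a convex realization of $G^*$ whose vertex for $f$ is $\nu_f/c_f$. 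These are rational but not integral, and since the denominators $c_f$ vary from facet to facet, clearing them all at once multiplies every coordinate by $\operatorname{lcm}_f c_f$, which may be exponential in $n$. So the denominators have to be removed locally, one facet of $\P^\circ$ at a time.

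For the construction, pick a facet $f_0$ of $\P$ and project $\P$ along the outer normal $\nu_{f_0}$; since $\langle\nu_{f_0},\cdot\rangle$ is maximized over $\P$ exactly on $f_0$, the cap $\partial\P\setminus f_0$ maps bijectively onto the triangle $\pi(f_0)$, giving a straight-line planar drawing $(p_i)$ of $G$ with outer face $f_0$ together with a piecewise-linear convex lift of it (the cap, reflected if necessary)---all of this data rational, with numerators and denominators $O(\max|u_i|^{O(1)})$, read off directly from $\P$. By Maxwell's theorem the drawing carries a self-stress $\omega$, positive on every interior edge, which is (up to sign) the off-diagonal part of a Colin de Verdi\`ere matrix of $G$ associated with $\P$. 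The Maxwell--Cremona reciprocal of $(p_i,\omega)$, lifted by the intercepts of the planes of the cap, is then a convex polytope $\P^*$ realizing $G^*$: its vertex for a face $f=\{i,j,k\}$ of $G$ is $\phi_f=(\alpha_f,\beta_f)$, where $z=\langle\alpha_f,(x,y)\rangle+\beta_f$ is the plane of the lifted face $f$; equivalently $\phi_f$ is the common point of the three facet planes of $\P^*$ indexed by $i,j,k$, and the facet of $\P^*$ for a vertex $i$ of $G$ has integral normal proportional to $(p_i,1)$ and a rational offset.

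The $\phi_f$ are rational but not yet integral: clearing the denominators of the primal data leaves, as the only obstruction, a single integer per face $f$ of $G$ of size $O(\max|u_i|^{O(1)})$ (a $3\times3$ determinant built from the three facet normals at $\phi_f$, essentially $2\Area(p_ip_jp_k)$), and again these vary with $f$. To remove them without a global least common multiple, one argues facet by facet of $\P^*$: the facet $Q_i$ for a vertex $i$ of $G$ is a $\deg_G(i)$-gon with $\deg_G(i)\le\maxdeg$, lying in one plane, and one replaces that plane by a nearby integral plane chosen so that all $\deg_G(i)$ integrality conditions bearing on $Q_i$---one per incident face of $G$, each of modulus $O(\max|u_i|^3)$---are satisfied simultaneously; a product of at most $\maxdeg$ such moduli enters, contributing a factor $O(\max|u_i|^{3\maxdeg})$. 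Together with the remaining polynomial factor $O(n\max|u_i|^9)$---the $n$ arising because $\phi_f$ is expressible through the reciprocal diagram as a sum of at most $n$ rotated, stress-weighted edge vectors of the drawing---this produces integral vertices with $|\phi_f|=O(n\max|u_i|^{3\maxdeg+9})$. At each stage the Maxwell--Cremona correspondence certifies that the stress stays positive on interior edges, so the lift of the reciprocal diagram remains a strictly convex polytope with graph $G^*$.

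The step I expect to be the main obstacle is this last one. One has to eliminate the per-face denominators $2\Area(p_ip_jp_k)$ (and the global denominator coming from the primal lift) without ever forming their exponential least common multiple, while at the same time keeping the up to $\maxdeg$ vertices on each dual facet exactly coplanar and keeping the dual lift strictly convex. It is precisely the need to reconcile these local integrality requirements with the global convexity of $\P^*$, carried out facet by facet where a dual facet may have up to $\maxdeg$ vertices, that forces the degree-dependent exponent $3\maxdeg+9$; the Colin de Verdi\`ere matrix and the Maxwell--Cremona lifting are the tools that make these local adjustments mutually consistent.
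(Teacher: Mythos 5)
Your proposal breaks at its very first constructive step. You project $\P$ along the outer normal $\nu_{f_0}$ of a chosen facet and claim that the cap $\partial\P\setminus f_0$ maps bijectively onto the triangle $\pi(f_0)$. That is true only when $f_0$ is the entire upper envelope of $\P$ in the direction $\nu_{f_0}$, i.e.\ when every other facet normal has nonpositive inner product with $\nu_{f_0}$; for a general simplicial polytope the shadow of $\P$ strictly contains $\pi(f_0)$ and the projection is neither onto the triangle nor injective, so you do not obtain a planar drawing of $G$ with outer face $f_0$. This is exactly the special geometric hypothesis under which the paper proves its restricted Theorem~\ref{thm:simplicialDualitySpecial} (an orthogonal projection that happens to be a planar embedding with prescribed boundary face), and the whole point of the general theorem is to avoid it: the paper's proof of Theorem~\ref{thm:simplicialDuality} is deliberately three-dimensional and never passes through a planar drawing of $G$. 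A Schlegel diagram (central projection from a point beyond $f_0$) would repair planarity, but then all your quantitative claims about the drawing and its stress would have to be re-derived, which you do not do.

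The decisive gap, which you yourself flag as the main obstacle, is the integrality step. You propose to replace the supporting plane of each dual facet $Q_i$ by a ``nearby integral plane'' chosen so that the $\deg_G(i)\le\maxdeg$ divisibility conditions attached to $Q_i$ hold, but every dual vertex $\phi_f$ lies on three such planes, so the conditions attached to different facets of the dual are coupled at every vertex: no argument is given that mutually compatible choices exist, that the up to $\maxdeg$ vertices of each $Q_i$ remain exactly coplanar, or that strict convexity survives the perturbation, and the factor ``product of at most $\maxdeg$ moduli'' is asserted rather than derived. The paper resolves precisely this difficulty on the stress side rather than the geometric side: it forms the canonical CDV matrix from the normalized facet polars (Lemma~\ref{lem:6}), bounds its entries from above and below, decomposes it by the Wheel-Decomposition Theorem for CDV matrices (Theorem~\ref{thm:wheelDecCDV}) into integral \emph{large atomic} CDV matrices of the wheels $W_k$, whose entries are products of $\deg(v_k)-1$ determinants of size at most $L^3$ (this is where the exponent $3\maxdeg$ really comes from), and then rounds the scaled coefficients so that the resulting matrix is integral, still a CDV matrix, and still positive (Theorem~\ref{thm:CDVsimplicial}). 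Consistency of the dual vertices is then automatic, since any CDV matrix yields a well-defined assignment $(\phi_f)$ by Lemma~\ref{lem:LoLm1}, positivity gives convexity by Lemma~\ref{lem:LoLm2}, and the telescoping estimate of Theorem~\ref{thm:LoThm} gives $|\phi_f|=O(n\max|u_i|^{3\maxdeg+9})$. Without an analogue of this mechanism---some device that keeps the local roundings globally consistent and convexity-preserving---your facet-by-facet adjustment remains a hope rather than a proof, so the proposal does not establish the theorem.
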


This, in particular, implies, that 
if we only consider simplicial polytopes with bounded vertex degrees
and with integer coordinates bounded by a polynomial in $n$, then the dual 
polytope obtained with our techniques 
has also integer coordinates bounded by a (different) polynomial in $n$.
Although our bound is not purely polynomial, it is in general an improvement over
the standard approaches for  constructing dual polytopes; see Sect.~\ref{subsec:dual}.

 
 For the class of stacked polytopes (although their maximum vertex degree is not bounded)
 we can also apply our approach
  to show that all graphs
 dual to planar 3-trees can be embedded as polytopes on a polynomial size grid. These polytopes
 are known as truncated polytopes. Truncated 3d polytopes are simple polytopes, which
 can be generated from a tetrahedron and a sequence of \emph{vertex truncations}.
 A vertex  truncation is the dual operation to stacking (Fig.~\ref{fig:stackedTruncated}). This means that a degree-3 vertex
 of the polytope is cut off by adding a new bounding hyperplane that separates this vertex
 from the remaining vertices of the polytope. We prove the following theorem.
 
 \begin{theorem*}[Theorem 4]
    Any truncated 3d polytope  with $n$ vertices can be realized
    with integer coordinates of size $O(n^{9\log6+1})$.
 \end{theorem*}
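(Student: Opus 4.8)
\medskip
\noindent\textbf{Proof plan.}\quad The idea is to realize the truncated polytope by a balanced divide-and-conquer along its ``truncation tree'', using Theorem~10 to realize the constant-size base pieces and a \emph{gluing lemma} --- built on the Maxwell--Cremona lifting and Colin de Verdi\`ere machinery developed above --- to assemble the pieces while keeping all coordinates integral and only polynomially large.

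\emph{Combinatorial decomposition.} A truncated polytope $Q$ is dual to a stacked polytope $P$, and the sequence of vertex truncations producing $Q$ corresponds one-for-one to the sequence of stackings producing $P$; these operations form a rooted tree $T$ on $n-3$ nodes. A short argument shows that when a vertex $v$ of $P$ is created it lies on exactly three faces, each of which can be stacked at most once and whose further stackings introduce strictly newer top vertices; hence every node of $T$ has at most three children, so $T$ has bounded degree. Consequently $T$ has a centroid whose removal splits it into a bounded number of subtrees, each of size at most $(n-3)/2$. Translating through the duality, $Q$ is obtained from a bounded number --- which can be arranged to be at most six --- of truncated polytopes $Q_1,\dots,Q_k$, each with at most $n/2+O(1)$ vertices, by performing $O(1)$ additional vertex truncations and gluing each $Q_i$ onto a resulting triangular face (this is the decomposition of stacked polytopes underlying~\cite{DS11}, read on the dual side). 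Iterating yields a recursion of depth $O(\log n)$ whose leaves are truncated polytopes of $O(1)$ size; a leaf is dual to a stacked polytope on $O(1)$ vertices, hence of bounded degree, so Theorem~10 realizes it on an $O(1)$-grid.

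\emph{Gluing lemma.} The geometric heart of the argument is the statement: if a truncated polytope $Q'$ is realized on an $L\times L\times L$ grid together with an integer equilibrium stress of its (planar, $3$-regular) graph that Maxwell--Cremona-lifts to $Q'$, and if $R$ is similarly realized on an $M\times M\times M$ grid, then $R$ can be glued onto any designated triangular face of $Q'$ --- i.e.\ placed so that the union is again a convex truncated polytope realizing ``$Q'$ with $R$ truncated on'' that face --- on a grid of size $O\!\big(6^{9}\max(L,M)\big)$, again via an integer stress, and moreover so that every triangular face of the result still carries a lattice triangle of bounded aspect ratio. The construction rescales the planar drawing of $R$ into the chosen face of the planar drawing of $Q'$ by a carefully chosen rational factor, rescales the stress of $R$ by the reciprocal factor, and splices the two stresses along the shared triangle; convexity of the lifted polytope is then automatic (equivalently, the two Colin de Verdi\`ere matrices combine into a single one with the correct signature), and one only has to verify combinatorial correctness and bound the integers produced when clearing denominators.

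\emph{Assembly and the main obstacle.} To realize $Q$ we realize its $O(1)$-size base pieces via Theorem~10 and glue in the $\le 6$ sub-polytopes of size $\le n/2+O(1)$ one after another by the lemma; each of the $O(\log n)$ levels multiplies the grid bound by $6^{9}$, so the grid size is $\big(6^{9}\big)^{O(\log n)}=n^{O(\log 6)}$, and a careful accounting (together with an $O(n)$ factor from a final clearing of denominators) pins the bound down to $O(n^{9\log 6+1})$. The step I expect to be the real obstacle is the gluing lemma, and within it the need to control \emph{three} competing quantities at every one of the $O(\log n)$ levels: the rescaling factors must be chosen so that the integer stress coefficients grow by only a bounded factor per level (a super-constant per-level blow-up would compound catastrophically), the lift must remain strictly convex and combinatorially correct even though pieces glued at different levels live on vastly different scales, and the ``fat triangular face'' invariant must be preserved so that later gluings remain feasible. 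Each requirement in isolation is routine; it is their simultaneous satisfaction, plus the base case, that carries the weight of the proof.
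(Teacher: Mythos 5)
Your plan hinges entirely on the ``gluing lemma,'' and that lemma is not a routine verification --- it is the whole problem, and as stated it is unsupported and almost certainly false in the form you need. The union of two convex polytopes sharing a triangular facet is not convex in general; to make ``$Q'$ with $R$ truncated onto a face'' convex you must apply a projective or strongly anisotropic affine map to $R$ that flattens it toward the supporting plane of the chosen face, and the amount of flattening depends on the dihedral angles of $Q'$ at that face, not on a universal constant such as $6^9$. Controlling that flattening while keeping coordinates and stresses integral with only a bounded per-level blow-up is precisely the delicate balancing problem that Demaine and Schulz solve (in the 2d stress setting) and that produces the $n^{\log 6}$-type bounds; your claim that convexity ``is then automatic'' once the two stresses are spliced, and that each of the $O(\log n)$ levels costs only a constant factor, is asserted rather than proved, and the $6^9$ per level is reverse-engineered from the target exponent. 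Moreover the interface between the 2d data (planar drawing plus equilibrium stress) and the 3d realization is left vague: an integer stress of the 3-regular graph of $Q'$ does not by itself Maxwell--Cremona-lift to $Q'$ unless $Q'$ projects suitably onto that drawing, and maintaining such a compatible pair through rescaling, splicing, and clearing denominators at every level is exactly where exponential factors normally enter. So the proposal has a genuine gap at its central step.

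For comparison, the paper does not recurse on the truncated polytope at all. It passes to the dual stacked graph $G$, removes the last stacked vertex to get $G_\uparrow$, and invokes the Demaine--Schulz result (Lemma~\ref{lm:ds}) as a black box to obtain a planar convex embedding of $G_\uparrow$ on an $O(n^{2\log 6})$ grid together with a positive \emph{integer} equilibrium stress of size $O(n^{5\log 6})$; all the divide-and-conquer and balancing you are trying to redo lives inside that citation. Then a single lift (Theorem~\ref{thm:framework}: place the drawing in the plane $\{z=1\}$, add the apex at $(0,0,-3)^T$, combine the stress-induced CDV matrix with the tetrahedron's all-ones CDV matrix) produces a cone-convex integer embedding with a positive integer CDV matrix, and the Lov\'asz duality transform (Theorem~\ref{thm:LoThm}) outputs the truncated polytope directly with integer coordinates bounded by $O(n\cdot\max|\omega_{ij}|\cdot\max|p_i|^2)=O(n^{9\log 6+1})$. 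If you want to salvage your route, you would essentially have to reprove the Demaine--Schulz balancing machinery in 3d, which is much harder than using it on the primal side as the paper does.
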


\setcounter{theorem}{1}
\addtocounter{theorem}{-1}

The proof of Theorem~\ref{thm:truncated} uses the strong
available results on planar realizations 
of graphs of the dual (stacked) polytopes (see~\cite{DS11}).

Such results are not available for general simplicial polytopes,
though we make a small step further in Theorem~\ref{thm:reverseMC}.
To prove the general Theorem~\ref{thm:simplicialDuality} we develop
some novel techniques to work with equilibrium stresses
directly in $\R^3$.


\subsection{Duality} 
\label{subsec:dual}
There exist several natural approaches how to construct a dual polytope.
To the best of our knowledge, all of them increase
the coordinates of the original polytope in general by an exponential factor when scaled to integers.

The most prominent construction is polarity with respect to the sphere. Let $P$ be a polytope
that contains the origin. Then $P^* = \{y \in \R^d\colon  \langle x , y\rangle \le 1 \text{ for all } x \in P \}$
is a polytope dual to $P$, called its \emph{polar}. The vertices of $P^*$ 
are intersection points of planes with integral normal vectors, and hence not necessarily
integer points. In order to scale to integrality one has to multiply $P^*$ with the product
of all denominators of its vertex coordinates, which may cause an exponential increase of
 the grid size.
 
An alternative approach goes via polarity with respect to the paraboloid: Every supporting hyperplane of a polytope facet can 
be described as the set of points $(x,y,z)^T$, for which 
the equation $ax+by = z+c$ holds 
($a$, $b$, $c$ are parameters which depend on the hyperplane, 
the construction is not applicable to hyperplanes parallel to the $z$-axis). 
By mapping each facet
 to a vertex of the dual polytope with coordinates $(a,b,c)^T$ we obtain the desired polarity transformation. 
This transformation can also be formulated in terms of
reciprocal diagrams and the Maxwell--Cremona correspondence~\cite{M64}.

Polarity with respect to the paraboloid does not necessarily provide small integer coordinates for two reasons.
First, for a facet whose boundary points have integer coordinates, 
the parameters $a,b,c$ of the supporting hyperplane given by $ax+by = z+c$ are rational.
Thus, the polar polytope is realized with rational coordinates and an exponential factor might be necessary
when scaling to integers.
Second, the construction realizes the dual polytope in the projective space with one point
``over the horizon''. The second property can be ``fixed'' with a projective transformation. This, however, 
makes a large scaling factor for an integral embedding unavoidable in the general case.

\subsection{Structure}

All of our algorithms work in two stages.
We first construct an intermediate object that we call a
{\em cone-convex embedding} of the graph. This object will be equipped with special edge weights, which we store
in a matrix that is called {\em CDV matrix}.
In the second stage we use an adaptation of the duality transform as introduced by Lov{\'a}sz 
(see~\cite{L01}) to transform a cone-convex embedding of the primal graph to an embedding of the dual as a convex polytope.

We present a duality transform for stacked polytopes (Sect.~\ref{sec:truncated}), simplicial polytopes with a degree~3 vertex (Sect.~\ref{sec:WDreverseMC}), and 
for general simplicial polytopes (Sect.~\ref{sec:simplicialDuality}).
The second stage is always carried out in the same way for all our algorithms.
It is therefore presented first in Sect.~\ref{sec:lovasz}.
We conclude our presentation in Sect.~\ref{sec:example} where we present an example of the embedding algorithm 
for truncated polytopes (Theorem~\ref{thm:truncated}).

\subsection{Notation and Conventions} 
 
We denote by $G$ the graph of the original polytope, and by $G^*$
its dual graph. For any graph $H$ we write $V(H)$ for its vertex set, $E(H)$ for its edge set and
$N(H, v)$ for the set of
neighbors of a vertex $v$ in $H$.
Since we consider 3-connected planar graphs, the facial structure of the graph is predetermined
up to a global reflection~\cite[Theorem 11]{W33}. The set of faces is therefore predetermined, and we name it $F(H)$.
%
%
We denote the maximum vertex degree of a graph $G$ as $\maxdeg$. Finally, we write $G[X]$ for the induced
subgraph of a vertex set $X\subseteq V(G)$.

Every embedding of a graph to $\R^d$ is always understood as 
a straight-line embedding.
Thus, an embedding of a graph to $\R^d$ is defined by a map $\mathbf p: V\to \R^d$
that assigns coordinates to every vertex.
For simplicity we denote $\mathbf p(v_i)$ with $p_i$ for a graph with the vertex set $V=(v_i)_{1\le i\le n}$.
Naturally, an embedding $\mathbf p$ of a graph with the vertex set $V=(v_i)_{1\le i\le n}$ to $\R^d$
can be seen as a point in $\R^{nd}$, that is $\mathbf p = (p_i)_{1\le i\le n}\in \R^{nd}$.
We typically use the letter $v$ to denote a vertex of an abstract graph, 
the letter $p$ for a vertex embedded in $R^2$,
and the letter $u$ for a vertex embedded in $\R^3$.

For convenience we use the \emph{square bracket notation} for oriented volumes and areas (for 2d vectors $[p_ip_jp_k]$ is defined similarly):
$$[u_iu_ju_ku_l]:=\det\begin{pmatrix}
x_i &x_j &x_k &x_l\\
y_i &y_j &y_k &y_l\\
z_i &z_j &z_k &z_l\\
 1 &1 &1 &1
\end{pmatrix},\quad
\text{ where }u=\begin{pmatrix}
x\\y\\z
\end{pmatrix}\in \R^3.
$$

Unless explicitly stated otherwise, the graphs we consider are planar and 3-connected. 
We call an embedding of a graph into $\R^d$ an {\em integer embedding} if the coordinates of its vertices are all integers.
If not explicitly stated otherwise, the surfaces of polytopes are oriented
so that the normal vector is an exterior normal vector to this polytope.

\setcounter{theorem}{1}
\addtocounter{theorem}{-1}
\section{Lov{\'a}sz' Duality Transform}
\label{sec:lovasz}

In this section we review some of the methods Lov{\'a}sz introduced in his paper on Steinitz representations \cite{L01}.


\begin{definition}
We call an embedding $\mathbf u = (u_i)_{i\le 1\le n}$ of a planar
3-connected graph $G$ in $\mathbb{R}^3$ a \emph{cone-convex
embedding}, if its projection onto the sphere $\left(\frac{u_i}{|u_i|}\right)_{1\le i\le n}$ 
with edges drawn as geodesic arcs
is a strictly convex embedding of $G$ into the sphere.
\end{definition}

We remark that the embedding of a graph $G$ into the sphere is strictly convex
if the faces of the embedding are 
strictly convex spherical polygons, 
or, in other words, the faces are the intersections of
pointed convex disjoint polyhedral cones with the sphere.
So, an embedding is cone-convex if  the
cones over its faces are pointed, convex and disjoint.
%
%
%
%
Note that the vertices of a cone-convex embedding are not supposed to form a convex polytope.

\begin{definition}
 Let $\mathbf u = (u_i)_{1\le i\le n}$ be an embedding
 of a graph $G$ into $\mathbb{R}^d$. 
 We call a symmetric matrix $M=[M_{ij}]_{1\le i,j\le n}$ a \emph{CDV matrix} of the embedding if
 \begin{enumerate}
 \item $M_{ij}=0$, for $i\ne j, (v_iv_j)\not \in E(G)$, and
 \item  $\sum_{1\le j\le n}M_{ij}u_j=0$, for $1\le i\le n$.
 \end{enumerate}
 We call a CDV matrix \emph{positive} if $M_{ij}>0$ for all $(v_iv_j)\in
E(G)$. 
\end{definition}
We remark that for a positive CDV matrix there are no conditions 
on the signs of the diagonal elements $M_{ii}$.
Moreover, in our applications the diagonal elements
play a purely technical role and never appear in geometrical constructions. Thus, we never bound the diagonal elements
and do not put conditions on their signs.
Note also that the CDV matrices of an embedding form a linear space
with respect to the standard matrix addition and multiplication with scalars. 

We refer to  the second condition in the above definition as the \emph{CDV equilibrium condition}. 
The CDV equilibrium condition can also be expressed in a slightly different, more
geometric form as  
 \begin{equation}\label{eq:CDV}
 \sum_{v_j\in N(G,v_i)}M_{ij}u_j=-M_{ii}u_i,\quad 1\le i\le  n.
 \end{equation}
%

The name CDV matrix was chosen in correspondence with 
Colin de Verdi{\`e}re matrices as defined in~\cite{L01}.
%
In our paper, however,  we use only the geometrical arguments from~\cite{L01} 
and thus do not rely on the more restrictive and more technical notion of Colin de Verdi{\`e}re matrices, 
which has its roots in spectral graph theory.
We remark, that by~\cite[Theorem~7]{L01} and the following discussion,
every positive CDV matrix of a cone-convex embedding of a graph $G$ is
a Colin de Verdi{\`e}re matrix of $G$, whose entries are multiplied with~-1.
%
%

Since the CDV matrix is a natural 3d counterpart to the 2d notion of equilibrium stress 
(which we introduce in Sect.~\ref{subsec:truncated:basicEquiCDV})
we refer to its entries as stresses.
We show the connection between these two notions in Lemma~\ref{lm:equiCDV}
and in Subsect.~\ref{subsec:simplicialDuality:equiCDVproj}.

The following lemma is due to Lov{\'a}sz~\cite{L01}, 
we include the proof since it illustrates 
how to construct a realization out of a CDV matrix.
\begin{lemma}[Lemma 4, \cite{L01}]\label{lem:LoLm1}
 Let $\mathbf u = (u_i)_{1\le i\le n}$ be a cone-convex embedding of a planar 3-connected graph $G$ with a positive
 CDV matrix  $M$. Then every face $f$ in $G$ can be assigned with a 
 vector $\phi_f$, s.t. for each adjacent face $g$ and separating edge $(v_iv_j)$
\begin{equation} \label{eq:phi}
\phi_f-\phi_g=M_{ij}(u_i\times u_j),
\end{equation}
where $f$ lies to the left and $g$ lies to the right from $\overrightarrow{u_iu_j}$.
The set of vectors $(\phi_f)$ is uniquely defined  up to
translations.
\end{lemma}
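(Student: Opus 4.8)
The plan is to show that the 1-form on the dual graph $G^*$ defined by assigning to each oriented dual edge (crossing the primal edge $(v_iv_j)$ from the face $f$ on the left to the face $g$ on the right) the value $M_{ij}(u_i\times u_j)\in\R^3$ is a \emph{coboundary}, i.e. that it is the ``gradient'' of a vertex function $(\phi_f)_{f\in F(G)}$ on $G^*$. Since $G^*$ is connected, such a $\phi$ — if it exists — is unique up to an additive constant (a global translation in $\R^3$), which gives the last sentence of the statement. Existence of $\phi$ is equivalent to the statement that the sum of the prescribed increments around every face of $G^*$ vanishes; the faces of $G^*$ correspond to the vertices of $G$, so concretely I must verify that for each vertex $v_i$ of $G$,
\begin{equation}\label{eq:cycle}
\sum_{v_j\in N(G,v_i)} M_{ij}\,(u_i\times u_j)=0,
\end{equation}
where the sum is taken in the rotational (cyclic) order of the neighbors around $v_i$ — though in fact, being a plain sum over the neighbors, the order is irrelevant for the algebraic identity.

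First I would observe that \eqref{eq:cycle} is immediate from the CDV equilibrium condition in the form \eqref{eq:CDV}: pulling the cross product with $u_i$ out of the sum,
\[
\sum_{v_j\in N(G,v_i)} M_{ij}\,(u_i\times u_j)
= u_i\times\!\!\sum_{v_j\in N(G,v_i)}\!\! M_{ij} u_j
= u_i\times(-M_{ii}u_i)
= -M_{ii}(u_i\times u_i)=0.
\]
So the prescribed edge-increments are closed on every facial cycle of $G^*$. The remaining point is purely topological: on a graph embedded in the sphere $S^2$, a 1-cochain with values in an abelian group whose sum vanishes around every face is a coboundary, because the facial cycles generate the cycle space of a planar (spherical) graph. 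Hence there is a function $\phi:F(G)\to\R^3$ with $\phi_f-\phi_g$ equal to the prescribed value on each separating edge, and it is determined once we fix $\phi$ on one face.

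The one place that needs care — and I would treat it as the main obstacle, though it is more a matter of bookkeeping than of depth — is orientation consistency: one must check that the assignment ``$M_{ij}(u_i\times u_j)$ to the edge oriented so that $f$ is on the left of $\overrightarrow{u_iu_j}$'' is antisymmetric, i.e. reversing the roles of $f$ and $g$ (equivalently swapping $i$ and $j$) negates the value, since $u_j\times u_i=-(u_i\times u_j)$ and $M_{ij}=M_{ji}$. This confirms we have a well-defined $\R^3$-valued 1-cochain rather than an ill-posed assignment, and that the ``left/right'' convention relative to the projected spherical embedding is used consistently. Cone-convexity and positivity of $M$ are not actually needed for \emph{this} lemma (they matter for the subsequent step where $(\phi_f)$ is shown to be a convex polytope); the only inputs here are that $M$ is a CDV matrix of the embedding $\mathbf u$ and that $G$ is planar and $3$-connected, so that $F(G)$ and the left/right incidences are well defined.
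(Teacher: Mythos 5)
Your proposal is correct and follows essentially the same route as the paper: verify that the prescribed increments sum to zero around each face of $G^*$ (a vertex $v_i$ of $G$) by pulling the cross product with $u_i$ out of the sum and invoking the CDV equilibrium condition, then conclude existence by integrating over the connected dual graph, with uniqueness up to translation. Your added remarks on antisymmetry under swapping $i$ and $j$ and on the fact that positivity and cone-convexity are not needed here are sound but inessential refinements of the same argument.
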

%
In the remaining of the paper we refer to a pair of dual edges 
$(u_iu_j)$ and $(\phi_f\phi_g)$ from the above lemma 
without explicitly mentioning their orientation.
\begin{proof}
 To construct the family of vectors $(\phi_f)$, we start by assigning an
arbitrary value to $\phi_{f_0}$ (for an arbitrary face $f_0$); then we proceed
 iteratively picking $\phi$ vectors such that Eq.~\eqref{eq:phi} is fulfilled. To prove the consistency of the construction, we
show that the differences $(\phi_f-\phi_g)$
 sum to zero over every cycle in $G^*$. Since $G$
as well as $G^*$ is planar and 3-connected, it suffices to check this condition for
all elementary cycles of $G^*$, which are the faces of $G^*$. 
Let $\tau(i)$ denote the set of counterclockwise oriented edges of the face in $G^*$ dual to $v_i\in V(G)$. Then, combining~\eqref{eq:CDV} and~\eqref{eq:phi} yields
\begin{align*}
 \sum_{(f,g)\in \tau(i)} \ns (\phi_{f}-\phi_{g})   
 =\ns \sum_{v_j\in N(G,v_i)} \ns M_{ij}(u_i\times u_j)
 &=u_i\times \left(\sum_{v_j\in N(G,v_i)} \ns M_{ij}u_j\right)\\
 &=u_i\times (-M_{ii}u_i)=0.
\end{align*}
The vectors $\phi_f$ are unique up to the initial choice for $\phi_{f_0}$.
\end{proof}

Note that there is a canonical way to derive a CDV matrix from a 3d polytope~\cite{L01}.
Every 3d embedding $\mathbf u$ of a graph $G$ as a polytope possesses a 
CDV matrix defined by the vertices $(\phi_f)_{f\in F(G)}$ of its polar and Eq.~\eqref{eq:phi}.
We call theses matrices {\em canonical CDV matrices}.
For a convex polytope containing the origin in its interior 
the canonical CDV matrix is positive.
Canonical CDV matrices are used in 
Sect.~\ref{sec:simplicialDuality} where we give 
the full definition and details.

The vectors constructed in Lemma~\ref{lem:LoLm1} satisfy the following crucial property:
\begin{lemma} [Lemma 5, \cite{L01}] \label{lem:LoLm2}
 Let $\mathbf u = (u_i)_{1\le i\le n}$ be a cone-convex embedding of a planar 3-connected graph $G$ with a positive
 CDV matrix  $M$. Then for any set of vectors $(\phi_f)_{f\in F(G)}$ fulfilling~\eqref{eq:phi},
the convex hull $\Conv((\phi_f)_{f\in F(G)})$ 
is a convex polytope with graph $G^*$; and the
isomorphism between $G^*$ and the skeleton of $\Conv((\phi_f)_{f\in F(G)})$ is given by $f\to \phi_f$.
\end{lemma}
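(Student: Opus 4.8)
The plan is to manufacture, for every vertex $v_i$ of $G$, a hyperplane $H_i$ that supports $Q:=\Conv((\phi_f)_{f\in F(G)})$ and touches it exactly along the convex polygon spanned by those $\phi_f$ with $v_i\in f$; these polygons will turn out to be the facets of $Q$, and their incidences will be the ones dual to $G$. The first observation is purely formal: by~\eqref{eq:phi} the vector $\phi_f-\phi_g=M_{ij}(u_i\times u_j)$ is orthogonal to both $u_i$ and $u_j$. Hence, running $f$ through the faces incident to a fixed $v_i$ in rotation order, successive $\phi_f$ differ by vectors orthogonal to $u_i$, so all these $\phi_f$ lie on a common plane $H_i=\{x:\langle x,u_i\rangle=c_i\}$, the number $c_i:=\langle\phi_f,u_i\rangle$ being independent of the incident face $f$ chosen. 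Next I would establish local convexity: writing $N(G,v_i)=\{v_{j_1},\dots,v_{j_d}\}$ in rotation order ($d=\deg v_i\ge 3$), the closed polygonal chain on the points $(\phi_f)_{v_i\in f}$ has successive edge vectors $M_{ij_k}(u_i\times u_{j_k})$; it closes up because $\sum_k M_{ij_k}u_{j_k}=-M_{ii}u_i$ by~\eqref{eq:CDV}, exactly as in the proof of Lemma~\ref{lem:LoLm1}. For convexity one uses cone-convexity: the linear map $x\mapsto u_i\times x$ carries the orthogonal projections of $u_{j_1},\dots,u_{j_d}$ onto $u_i^{\perp}$ to these edge vectors turned by a right angle, so it suffices that those projections appear in monotone angular order; and they do, because the pointed convex disjoint cones over the faces incident to $v_i$ tile a neighbourhood of the ray $\R_{>0}u_i$, and projecting along that ray onto $u_i^{\perp}$ turns them into $d$ pointed convex angular sectors tiling $u_i^{\perp}$ whose bounding rays are precisely the projections of $u_{j_1},\dots,u_{j_d}$, in rotation order. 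Thus $(\phi_f)_{v_i\in f}$ are the vertices, in rotation order, of a genuine two-dimensional convex polygon $P_i\subset H_i$.

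The heart of the argument is global convexity: $\langle\phi_g,u_i\rangle\le c_i$ for every face $g$ of $G$, with equality iff $v_i\in g$. Let $\hat g$ maximize $\langle\phi_g,u_i\rangle$ over all faces. For every edge on the boundary of $\hat g$, directed as $\overrightarrow{v_kv_l}$ so that $\hat g$ lies to its left, the face on the right is some other face $g'$, and~\eqref{eq:phi} gives $\langle\phi_{\hat g}-\phi_{g'},u_i\rangle=M_{kl}\langle u_k\times u_l,u_i\rangle$, which is nonnegative by maximality; since $M_{kl}>0$ this forces $\langle u_k\times u_l,u_i\rangle\ge 0$. One checks that, with the boundary of $\hat g$ oriented this way and the cone over $\hat g$ being convex, $u_k\times u_l$ is exactly the inward facet normal of that cone along the two-dimensional face spanned by $u_k,u_l$, so these inequalities say that $u_i$ lies in the closed cone over $\hat g$; that is, the ray through $u_i$ meets the closed spherical region bounded by $\hat g$. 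But that ray passes through the vertex $v_i$, and a vertex lies in the closure of a face of $G$ only if it is a vertex of that face, so $v_i\in\hat g$ and therefore $\langle\phi_{\hat g},u_i\rangle=c_i$. Running the same argument on every maximizing face shows that the maximum equals $c_i$ and is attained precisely on the faces containing $v_i$.

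To finish, I would assemble $Q$. By the previous step each $H_i$ supports $Q$ with $Q\cap H_i=\Conv((\phi_f)_{v_i\in f})=P_i$, hence $P_i$ is a facet; and $Q$ is three-dimensional, since otherwise all $\phi_f$ would be coplanar, forcing every $H_i$ to coincide with that plane and all $u_i$ to be parallel, contradicting that the $u_i/|u_i|$ are distinct points of the sphere. The $H_i$ have pairwise non-parallel normals, so the $P_i$ are $n$ distinct facets. For $(v_iv_j)\in E(G)$, the two faces $f_1,f_2$ of $G$ on that edge give distinct points $\phi_{f_1},\phi_{f_2}\in H_i\cap H_j\cap Q=P_i\cap P_j$ (distinct because $\phi_{f_1}-\phi_{f_2}=M_{ij}(u_i\times u_j)\ne 0$), so $P_i\cap P_j$ is the edge $[\phi_{f_1},\phi_{f_2}]$. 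The $\phi_f$ are pairwise distinct: two faces sharing a vertex $v_i$ are distinct vertices of $P_i$, while if $f\ne f'$ share no vertex we pick $v_i\in f$ and get $\langle\phi_f,u_i\rangle=c_i\ne\langle\phi_{f'},u_i\rangle$ by the global-convexity step. Each $\phi_f$ is a vertex of $Q$ (being a vertex of the facet $P_i$ for any $v_i\in f$) and every vertex of $Q$ is some $\phi_f$, so the vertex set of $Q$ is exactly $\{\phi_f:f\in F(G)\}$. Around a vertex $\phi_f$ the facets $P_i$ with $v_i\in f$, together with the edges dual to the edges of the bounding cycle of $f$, already close up into a single cyclic chain, which must therefore exhaust the vertex figure of $Q$ at $\phi_f$; so every facet of $Q$ is one of the $P_i$. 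Consequently the facets, edges and vertices of $Q$ are in bijection with $V(G)$, $E(G)$ and $F(G)$ with incidences dual to those of $G$, so $Q$ is a convex polytope whose face lattice realizes $G^{*}$, with the isomorphism $f\mapsto\phi_f$.

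I expect the global-convexity step to be the main obstacle. It is the only place where cone-convexity — pointed, convex and pairwise disjoint face-cones, rather than the mere existence of a positive CDV matrix — is genuinely exploited, and the delicate point is to fix the orientation conventions so that ``$u_i$ lies in the cone over the maximizing face $\hat g$'' translates correctly into ``$v_i$ is a vertex of $\hat g$''. The local-convexity step also leans on cone-convexity, through the slightly subtle fact that the projected face-cones tile the plane $u_i^{\perp}$.
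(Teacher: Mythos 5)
Your proof is correct, and it is essentially the argument this paper relies on: the paper does not prove this lemma itself but cites Lov\'asz~\cite[Lemma~5]{L01}, and your reconstruction --- the supporting planes $H_i=\{x:\langle x,u_i\rangle=c_i\}$, local convexity of the polygons $P_i$ from the monotone rotation of the edge vectors $M_{ij}(u_i\times u_j)$, and the maximizing-face argument giving $\langle\phi_g,u_i\rangle\le c_i$ with equality exactly when $v_i\in g$ --- follows the same route as Lov\'asz's original proof. One minor slip: in a cone-convex embedding two vertices may project to antipodal points of the sphere (e.g.\ for an octahedron), so the normals $u_i$ need not be pairwise non-parallel; the distinctness of the facets $P_i$ is better deduced, for instance, from the existence of a face $f$ with $v_i\in f$, $v_{i'}\notin f$ (at most two faces contain both vertices), whence $\phi_f\in P_i\setminus H_{i'}$ --- an immediate repair that does not affect the rest of the argument.
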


Lov{\'a}sz formulates this result with an additional
 constraint $|u_i|=1$ for all vertices of $G$, that is for an embedding of a graph onto the sphere.
However, he never uses this constraint in the proof.
Additionally, he formulates the rescaling property of CDV matrices
showing why the renormalization of $\mathbf u$ doesn't change the result
(we review this property in details later in 
Lemma~\ref{lm:CDVscalable}).

Lemma~\ref{lem:LoLm1} and~\ref{lem:LoLm2} 
imply the following theorem which is a main tool in later constructions:
\begin{theorem}\label{thm:LoThm}
    Let $G$ be a planar 3-connected graph, 
    let $\mathbf u = (u_i)_{1\le i\le n}$ be its cone-convex embedding with integer coordinates
    and let $M$ be a positive CDV matrix with integer entries.
    Then there exists  a realization $(\phi_f)_{f\in F(G)}$ of
    the dual graph $G^*$ as a convex polytope with integer coordinates bounded by

    $$
    |\phi_f|<2n\cdot \max_{(v_i,v_j) \in E(G)}|M_{ij}(u_i\times u_j)|.
    $$
\end{theorem}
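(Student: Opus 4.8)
The plan is to combine Lemma~\ref{lem:LoLm1} and Lemma~\ref{lem:LoLm2} and then control the magnitude of the resulting $\phi_f$ vectors by bounding the number of steps in the iterative construction of Lemma~\ref{lem:LoLm1}. First, apply Lemma~\ref{lem:LoLm1} to the cone-convex embedding $\mathbf u$ and the positive CDV matrix $M$: this yields vectors $(\phi_f)_{f\in F(G)}$ satisfying Eq.~\eqref{eq:phi} for every pair of adjacent faces. By Lemma~\ref{lem:LoLm2}, $\Conv((\phi_f)_{f\in F(G)})$ is then a convex polytope with graph $G^*$, so it remains only to argue about coordinate size and integrality. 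Integrality is immediate: each difference $\phi_f-\phi_g = M_{ij}(u_i\times u_j)$ is an integer vector since $M_{ij}\in\Z$ and $u_i,u_j\in\Z^3$, and the $\phi$-vectors are determined up to a global translation, so we may fix $\phi_{f_0}=0$ for some base face $f_0$ and obtain all $\phi_f\in\Z^3$.

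For the size bound, the idea is that any face $f$ can be reached from the base face $f_0$ by a path in the dual graph $G^*$, and telescoping Eq.~\eqref{eq:phi} along this path expresses $\phi_f$ as a sum of terms $\pm M_{ij}(u_i\times u_j)$, one per edge of the path. Each such term has norm at most $\max_{(v_i,v_j)\in E(G)}|M_{ij}(u_i\times u_j)|$, so it suffices to bound the number of edges in a suitable path by $2n$. Here one uses that $G^*$ is planar and 3-connected with $|F(G)|$ vertices; since $G$ is a planar graph with $n$ vertices, Euler's formula gives $|F(G)| = |E(G)| - n + 2 \le (3n-6) - n + 2 = 2n-4 < 2n$. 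A breadth-first-search tree rooted at $f_0$ in $G^*$ then connects $f_0$ to any face $f$ by a simple path of length at most $|F(G)|-1 < 2n$, and telescoping along that path yields
\[
|\phi_f| \;=\; \Bigl|\sum_{e \text{ on the path}} \pm M_{ij}(u_i\times u_j)\Bigr| \;\le\; (2n-1)\cdot \max_{(v_i,v_j)\in E(G)}|M_{ij}(u_i\times u_j)| \;<\; 2n\cdot \max_{(v_i,v_j)\in E(G)}|M_{ij}(u_i\times u_j)|,
\]
which is the claimed bound.

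I expect the main (though modest) obstacle to be bookkeeping rather than anything deep: one must be careful that the choice of base face $\phi_{f_0}=0$ is legitimate (it is, by the ``unique up to translation'' clause of Lemma~\ref{lem:LoLm1}), and that the orientation signs in the telescoping sum do not matter for the norm estimate (they do not, since we only use the triangle inequality). A small point worth stating explicitly is why a path of length at most $|F(G)|-1$ exists and why $|F(G)| < 2n$; both follow from standard facts about simple planar graphs, but they are the only place where planarity of $G$ (rather than just of $G^*$) enters the bound. Everything else is a direct invocation of the two lemmas of Lov{\'a}sz already established above.
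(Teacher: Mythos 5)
Your proposal is correct and follows essentially the same route as the paper: invoke Lemma~\ref{lem:LoLm1} with $\phi_{f_0}=0$, get convexity from Lemma~\ref{lem:LoLm2}, note integrality from the integrality of $M_{ij}$ and $u_i$, and telescope Eq.~\eqref{eq:phi} along a path in $G^*$ of length less than $2n$. The only difference is that you spell out via Euler's formula why the path length is below $2n$, a step the paper dismisses with ``clearly''.
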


\begin{proof}
We use Lemma~\ref{lem:LoLm1} to construct $(\phi_f)_{f\in F(G)}$ that satisfy~\eqref{eq:phi}
and such that $\phi_{f_0}=(0,0,0)^T$ for a distinguished face $f_0\in F(G)$. Lemma~\ref{lem:LoLm2}
guaranties that $(\phi_f)_{f\in F(G)}$ form a convex polytope with graph $G^*$. 
Since $(\phi_f)$ satisfy~\eqref{eq:phi}, $\phi_{f_0}=(0,0,0)^T$ and 
all $M_{ij}$ as well as all $u_i$ are integral, all $\phi_f$ have integer coordinates as well.

To finish the proof we estimate how large the vectors $(\phi_f)$ are.
We evaluate $\phi_{f_k}$ for some face $f_k\in F(G)$. The following algebraic 
expression holds for all values $\phi_{f_i}$:
 $$
 \phi_{f_k}=\phi_{f_0}+(\phi_{f_1}-\phi_{f_0})+\ldots +(\phi_{f_{k-1}}-\phi_{f_{k-2}})+(\phi_{f_k}-\phi_{f_{k-1}}).
 $$
 Let us now consider 
the shortest path $f_0,f_1,\ldots, f_k$  in $G^*$ connecting the faces $f_0$
and $f_k$. Clearly, $k$ is less than $2n$, and hence
 \begin{align*}
 |\phi_{f_k}|
 &\le 2n\cdot \max_{(f_a,f_b)\in E(G^*)}|\phi_{f_a}-\phi_{f_b}|
 =2n\cdot \max_{(v_i,v_j) \in E(G)}|M_{ij}(u_i\times u_j)|.
 \end{align*}
\end{proof}

\section{A Duality Transform for Truncated polytopes}
\label{sec:truncated}

\subsection{Equilibrium stresses} 
\label{subsec:truncated:basicEquiCDV}
We describe next  the connection between 
convex 2d embeddings with positive equilibrium stresses and 
cone-convex 3d embeddings with positive CDV matrices.
We follow the presentation of~\cite{DS11} and define:
\begin{definition}
An assignment $\omega\colon E(G)\to\mathbb{R}$ of scalars
(denoted by $\omega(i,j)=\omega_{ij}=\omega_{ji}$)
to the edges of a graph $G$ is called a \emph{stress}.
A stress is an \emph{equilibrium stress} for
an embedding $\mathbf u = (u_i)$ of $G$ into $\mathbb{R}^d$ if
for every vertex $v_i\in V(G)$
 $$
 \sum_{v_j\in N(G,v_i)} \omega_{ij}(u_j-u_i)=0.
 $$
We call an equilibrium stress of a 2d embedding with a distinguished boundary face $f_0$
\emph{positive} if it is positive on every edge that does not belong to $f_0$.
 \end{definition}

The concept of equilibrium stress plays a central role in the classical Maxwell--Cremona lifting
approach and it is also a crucial concept in our embedding algorithm.
The following lemma establishes some preliminary connections 
between equilibrium stresses and CDV matrices.


\begin{lemma}\label{lm:equiCDV}
 Let $\mathbf u = (u_i)_{1\le i\le n}$ be an embedding of a graph $G$ into $\mathbb{R}^3$. The following three statements hold:
 \begin{enumerate}
  \item\label{lm:equiCDV:1} Let $G^+$ be the graph $G$ with one additional vertex $v_0$ 
connected with every vertex in $G$,
and let $\mathbf{u^+}=(u_0=(0,0,0)^T,u_1,\ldots,u_n)$ be 
an  embedding of $G^+$ into $\mathbb{R}^3$ equipped with 
an equilibrium stress $\omega$. 
Then the assignment
          $$M_{ij}:=
	\begin{cases}
 		-\sum_{v_k\in N(G^+,v_i)}\omega_{ik}, \quad &1\le i=j\le n,\\
	  	\omega_{ij},				&(i,j)\in E(G),\\
		0, 					&\text{else}
	\end{cases}
	$$
defines a CDV matrix $M$ for the embedding $\mathbf u$ of $G$.


  \item\label{lm:equiCDV:2}  Let $\omega$ be an equilibrium stress for the embedding $\mathbf u$. Then the assignment
    $$
	M_{ij}:=\begin{cases}
	    -\sum_{v_k\in N(G,v_i)}\omega_{ik},  	&1\le i=j\le n, \\
	    \omega_{ij},				&(i,j)\in E(G),\\
	    0, 					&\text{else}
	\end{cases}
    $$
    defines a CDV matrix $M$ for $\mathbf u$.
  \item\label{lm:equiCDV:3}  Let $\mathbf u$ be a flat embedding that lies in a plane 
not containing the origin. Let $M$ be a CDV matrix for $\mathbf u$. 
Then $(M_{ij})_{(i,j)\in E(G)}$ defines an equilibrium stress for $\mathbf u$.
 \end{enumerate}
\end{lemma}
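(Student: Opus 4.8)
\textbf{Proof plan for Lemma~\ref{lm:equiCDV}.}

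The three parts are really three routine verifications of the two defining conditions of a CDV matrix, so my plan is to treat them in order and reuse the computation. For part~\ref{lm:equiCDV:1}, I would first note that $M$ is symmetric (since $\omega_{ij}=\omega_{ji}$) and that the off-diagonal support condition holds by construction: $M_{ij}=0$ whenever $i\ne j$ and $(v_iv_j)\notin E(G)$. The only thing to check is the CDV equilibrium condition $\sum_j M_{ij}u_j=0$ for each $1\le i\le n$. Writing out the left-hand side and using the definition of the diagonal entry, I get
\begin{align*}
\sum_{1\le j\le n} M_{ij}u_j
&= M_{ii}u_i + \sum_{v_j\in N(G,v_i)} \omega_{ij}u_j\\
&= -\Big(\sum_{v_k\in N(G^+,v_i)}\omega_{ik}\Big)u_i + \sum_{v_j\in N(G,v_i)} \omega_{ij}u_j.
\end{align*}
Now $N(G^+,v_i) = N(G,v_i)\cup\{v_0\}$, and since $u_0=(0,0,0)^T$ the term $\omega_{i0}u_0$ vanishes, so the coefficient of $u_i$ is $-\sum_{v_j\in N(G,v_i)}\omega_{ij}$; collecting everything gives $\sum_{v_j\in N(G,v_i)}\omega_{ij}(u_j-u_i)$, which is zero because $\omega$ is an equilibrium stress for $\mathbf{u^+}$ at the vertex $v_i$ (here I use that the equilibrium condition for $G^+$ at $v_i$, after dropping the vanishing $v_0$ term, is exactly the equilibrium condition for $G$ at $v_i$). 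That settles part~\ref{lm:equiCDV:1}.

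Part~\ref{lm:equiCDV:2} is the special case of part~\ref{lm:equiCDV:1} where the extra vertex plays no role: the same computation, now with the diagonal defined directly as $-\sum_{v_k\in N(G,v_i)}\omega_{ik}$, reduces $\sum_j M_{ij}u_j$ to $\sum_{v_j\in N(G,v_i)}\omega_{ij}(u_j-u_i)=0$. So I would simply remark that part~\ref{lm:equiCDV:2} follows by the identical argument (or formally, by applying part~\ref{lm:equiCDV:1} with the equilibrium stress on $G^+$ obtained by setting $\omega_{i0}=0$ for all $i$, which is still an equilibrium stress since those terms contribute $\omega_{i0}(u_0-u_i)$ only if we also demand equilibrium at $v_0$ — to avoid that subtlety I'll just redo the one-line computation directly).

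For part~\ref{lm:equiCDV:3} I go in the reverse direction. Let $H$ be the plane containing the flat embedding, say $H=\{x\in\R^3 : \langle a,x\rangle = 1\}$ for some $a\in\R^3$ (possible precisely because $H$ does not contain the origin). The CDV equilibrium condition at $v_i$ reads $\sum_{v_j\in N(G,v_i)} M_{ij}u_j = -M_{ii}u_i$. I want to deduce $\sum_{v_j\in N(G,v_i)} M_{ij}(u_j-u_i)=0$, i.e. $\sum_{v_j} M_{ij}u_j = \big(\sum_{v_j} M_{ij}\big)u_i$; so it suffices to show $\sum_{v_j\in N(G,v_i)} M_{ij} = -M_{ii}$. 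To get this scalar identity, apply the linear functional $\langle a,\cdot\rangle$ to the CDV equilibrium condition: since every $u_j$ (including $u_i$) lies in $H$, we have $\langle a,u_j\rangle = 1$ for all $j$, hence $\sum_{v_j\in N(G,v_i)} M_{ij} = \langle a, \sum_j M_{ij}u_j\rangle = \langle a, -M_{ii}u_i\rangle = -M_{ii}$, exactly as needed. Substituting back shows $(M_{ij})_{(i,j)\in E(G)}$ is an equilibrium stress for $\mathbf u$.

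There is no real obstacle here; the only point requiring a little care is the bookkeeping with the auxiliary vertex $v_0$ in part~\ref{lm:equiCDV:1} (making sure the equilibrium condition is imposed on $G^+$ at the vertices $v_1,\dots,v_n$ and that $u_0=0$ kills the extra term), and in part~\ref{lm:equiCDV:3} the use of the hypothesis that the carrying plane avoids the origin — which is precisely what lets one write the affine functional as a linear one that is identically $1$ on the embedded vertices.
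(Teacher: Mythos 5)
Your overall strategy is the same as the paper's, and parts 2 and 3 of your argument are correct. In part 3, applying the linear functional $\langle a,\cdot\rangle$ that is identically $1$ on the carrying plane is a clean, equivalent variant of the paper's argument (the paper instead splits $\sum_j M_{ij}u_j$ into a summand parallel to the plane and a multiple of $u_i$, and notes the latter is not parallel to the plane, so both vanish). Also, your worry in part 2 about reducing to part 1 via the zero extension is unfounded: extending $\omega$ by $\omega^+_{i0}:=0$ trivially satisfies the equilibrium condition at $v_0$ as well, which is exactly the reduction the paper makes; your direct one-line computation is of course also fine.

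Part 1, however, is not valid as written: it rests on two false intermediate claims that happen to cancel. First, the coefficient of $u_i$ in $\sum_{1\le j\le n}M_{ij}u_j$ is $M_{ii}=-\sum_{v_j\in N(G,v_i)}\omega_{ij}-\omega_{i0}$, not $-\sum_{v_j\in N(G,v_i)}\omega_{ij}$; the vanishing of $\omega_{i0}u_0$ has no bearing on this coefficient. Second, the equilibrium condition of $\omega$ on $G^+$ at $v_i$ does \emph{not} reduce to the equilibrium condition on $G$ at $v_i$: the $v_0$-term is $\omega_{i0}(u_0-u_i)=-\omega_{i0}u_i$, which does not vanish, so in general $\sum_{v_j\in N(G,v_i)}\omega_{ij}(u_j-u_i)=\omega_{i0}u_i\ne 0$ (such configurations really occur, e.g.\ four coplanar points with $u_0$ at the origin carrying the stress $\omega_{ij}=c_ic_j$ of an affine dependence with $c_0\neq0$). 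Both of your misstatements drop the same term $-\omega_{i0}u_i$ — once from $\sum_j M_{ij}u_j$ and once from the equilibrium identity — so your final conclusion is true, but a reader cannot verify the proof as stated, and the case $\omega_{i0}\neq 0$ is precisely the case part 1 is designed to cover. The repair is the paper's computation: keep the $\omega_{i0}$ term and write $\sum_{1\le j\le n}M_{ij}u_j=\sum_{v_j\in N(G^+,v_i)}\omega_{ij}(u_j-u_i)+\omega_{i0}u_0$, where the first sum vanishes by equilibrium on $G^+$ and the second term vanishes because $u_0=0$.
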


\begin{proof}
\ref{lm:equiCDV:1}. We check that the CDV equilibrium holds by noting that for every $i$:
\begin{align*}
\quad \sum_{1\le j\le n}M_{ij}u_j 
= \sum_{v_j\in N(G,v_i)}M_{ij}u_j+M_{ii}u_i 
&= \!\!\!\!\! \sum_{v_j\in N(G,v_i)} \!\!\!\!\!\omega_{ij}u_j-(\sum_{v_j\in N(G^+,v_i)} \!\!\!\!\! \omega_{ij})u_i\\
&= \!\!\!\!\! \sum_{v_j\in N(G^+,v_i)} \!\!\!\!\!\omega_{ij}(u_j-u_i)+\omega_{i0}u_0=0.
\end{align*}
The second transition holds due to the definition of $M_{ij}$ and $M_{ii}$.
The last transition holds since $\sum_{v_j\in N(G^+,v_i)}\omega_{ij}(u_j-u_i)=0$ by the
definition of equilibrium stress and $u_0=0$.

\ref{lm:equiCDV:2}. We construct the embedding $(u_0=(0,0,0)^T,u_1,\ldots,u_n)$ 
of the graph $G^+:=G+\{v_0\}$
and extend the equilibrium stress $\omega$ of $G$ 
to an equilibrium stress $\omega^+$ of $G^+$
by assigning zeros to all the new edges $\omega^+_{i0}:=0$. 
Then we use part~\ref{lm:equiCDV:1} of the lemma to finish the proof.

\ref{lm:equiCDV:3}. We denote by $\alpha$ the plane of the flat embedding $\mathbf u$. We rewrite the CDV equilibrium condition:
$$
0= \sum_{1\le j \le n}M_{ij}u_j = 
\sum_{v_j\in N(G,v_i)}\ns M_{ij}(u_j-u_i)
    +\left(M_{ii}+\ns \sum_{v_j\in N(G,v_i)} \ns M_{ij}\right)u_i
$$
and notice that, the first summand, if nonzero, is parallel to the plane $\alpha$ while the second is not, so both must equal zero and thus
$$
\sum_{v_j\in N(G,v_i)}\ns M_{ij}(u_j-u_i)=0.
$$
\end{proof}

\subsection{The stacking approach} 
\label{subsec:truncated:framework}

Here we present an approach for constructing 3d cone-convex embeddings
with CDV matrices from 2d convex embeddings with equilibrium stress
by stacking an additional vertex (tetrahedron) to the graph.
%
%
We denote a graph obtained from $G$ by stacking a vertex $v_1$ on a face $(v_2v_3v_4)$ by $\Stack(G;v_1;v_2v_3v_4)$.

\begin{theorem}\label{thm:framework}
Let $\mathbf p = (p_i)_{2\le i\le n}$ be a 2d integer planar convex embedding of a planar 3-connected graph $G^{\uparrow}$
with a designated triangular face $(v_2v_3v_4)$ embedded as the
boundary face. 
Let $\omega$ be a positive integer equilibrium stress for $\mathbf p$.

Then the embedding $\mathbf u = (u_i)_{1\le i\le n}$ of the graph  
$G=\Stack(G_{\uparrow};v_1;v_2v_3v_4)$ defined as
\begin{align*}
u_i &= (3p_i-(p_2+p_3+p_4),1)^T, \quad 2\le i\le n, \\
u_1 &= (0,0,-3)^{T},
\end{align*}
where $(p,1)^T  = (p_x, p_y, 1)^{T} \in \R^3$ for $p=(p_x, p_y)\in \R^2$,
is an integral cone-convex embedding 
and there exists a positive integer CDV matrix $M$ 
for $\mathbf u$ such that
\begin{align*}
M_{ij} &= \omega_{ij}, \quad\text{for each internal edge $(i,j)$
    of the original embedding of }G_{\uparrow},\\
|M_{ij}| &\le \max_{(i,j)\in E(G^{\uparrow})}|\omega_{ij}| + 1, \qquad \forall (i,j)\in E(G).
\end{align*}
%
\end{theorem}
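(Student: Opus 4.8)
The plan is to verify three things in turn: that $\mathbf u$ is a genuine embedding in $\R^3$, that it is cone-convex, and that the claimed matrix $M$ is a positive integer CDV matrix satisfying the stated bounds. For the first and easiest part, note that the affine map $p \mapsto 3p - (p_2+p_3+p_4)$ sends the triangle $p_2p_3p_4$ to a triangle centered at the origin, and lifting everything to height $1$ places $u_2,\dots,u_n$ in the plane $z=1$; the apex $u_1=(0,0,-3)^T$ lies strictly below this plane. Integrality is immediate since $\omega$, the $p_i$, and the coefficients $3$ and $1$ are all integers. The key geometric point is that the origin lies in the interior of the cone over the base triangle $u_2u_3u_4$ (because $0$ is in the interior of that triangle, which sits at height $1$), while $u_1$ points in the opposite half-space, so the cones over the three side faces $(v_1v_av_b)$ with $\{a,b\}\subset\{2,3,4\}$ together with the cone over the rest of the figure tile a neighborhood of the origin.

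For cone-convexity I would argue as follows. The faces of $G = \Stack(G^\uparrow;v_1;v_2v_3v_4)$ are: the internal faces of $G^\uparrow$, which stay at height $1$; and the three new side faces incident to $v_1$. Projecting the height-$1$ points radially onto the sphere is the same, up to the obvious central projection from the plane $z=1$, as looking at the convex planar embedding $\mathbf p$ itself — so the cones over the internal faces are pointed, convex, and pairwise disjoint because $\mathbf p$ is a convex planar embedding. The only thing to check is that the cones over the three new side faces are pointed and convex and fit consistently with the rest; this follows because $u_1$ is on the opposite side of the origin from the base plane, so each side-face cone is a pointed simplicial cone, and the three of them plus the union of internal-face cones cover $\R^3$ around the origin exactly as the faces of a convex polytope do. A clean way to see this is to observe that the stress construction below will exhibit a positive CDV matrix, and by Lemma~\ref{lem:LoLm2} (or rather its setup) positivity of the CDV matrix on a cone-convex candidate forces the configuration to actually be cone-convex; but to avoid circularity I would instead give the direct argument that $\Conv(u_1,\dots,u_n)$ is a convex polytope with graph $G$ containing the origin in its interior — indeed it is the convex hull of a convex polygon at height $1$ together with one point below — and then cone-convexity is immediate since the boundary of a polytope containing the origin radially projects to a strictly convex spherical embedding of its graph.

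For the CDV matrix, the natural move is to use Lemma~\ref{lm:equiCDV}\ref{lm:equiCDV:1}: I want to realize $\mathbf u$ as carrying an equilibrium stress coming from adding the origin $u_0 = (0,0,0)^T$ as a cone vertex $v_0$ joined to all of $v_1,\dots,v_n$. So the task reduces to constructing a \emph{positive} equilibrium stress $\omega'$ on the graph $G^+ = G + \{v_0\}$ for the embedding $(u_0,u_1,\dots,u_n)$ that restricts to $\omega$ on the internal edges of $G^\uparrow$ and keeps all weights small. Here is where the scaling by $3$ pays off. For a vertex $v_i$ with $i \in \{5,\dots,n\}$ (an originally internal vertex of $G^\uparrow$, not on the base triangle), the $z$-coordinate of $u_i$ is $1$ and all its neighbors also have $z$-coordinate $1$, so the $z$-component of the equilibrium condition at $v_i$ forces $\omega'_{i0} = 0$ automatically once the $xy$-part balances — and the $xy$-part balances with exactly the weights $\omega_{ij}$ because the affine map $p\mapsto 3p-(p_2+p_3+p_4)$ preserves equilibrium stresses (an affine image of an equilibrated configuration is equilibrated with the same stress). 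For the base-triangle vertices $v_2,v_3,v_4$ and the apex $v_1$, the $z$-components no longer cancel, and I need to choose $\omega'_{i0}$ (for $i\in\{1,2,3,4\}$) and possibly correct the three base edges $\omega'_{23},\omega'_{24},\omega'_{34}$ and the three side edges $\omega'_{12},\omega'_{13},\omega'_{14}$ so that equilibrium holds at all of $v_1,v_2,v_3,v_4$ and at $v_0$. By the symmetry of the tetrahedron $u_1,u_2,u_3,u_4$ (whose four vertices, suitably, surround the origin — $u_2,u_3,u_4$ at height $1$ forming a triangle around $0$, and $u_1$ below) one expects a small \emph{constant} stress, independent of $\omega$, on the side edges and the $v_i v_0$ edges, while the three base edges $\omega'_{23},\omega'_{24},\omega'_{34}$ pick up their $\omega$-value plus a bounded correction; this is exactly what the stated bound $|M_{ij}|\le \max|\omega_{ij}|+1$ and the equality $M_{ij}=\omega_{ij}$ on internal (non-base) edges encode.

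The main obstacle, and the step that needs genuine computation rather than soft argument, is this last point: pinning down the right correction on the base triangle and the apex so that (a) the full equilibrium holds at $v_0,v_1,v_2,v_3,v_4$ simultaneously, (b) every side edge, apex-cone edge, and corrected base edge receives a \emph{strictly positive} weight, and (c) the corrections are small enough to give the clean $+1$ bound (in particular forcing all the new weights to be integers, for which the choices of $3$ and the integer heights $1,-3$ are evidently calibrated). I would handle it by writing the tetrahedron $u_1,u_2,u_3,u_4$ together with $u_0=0$ explicitly — using that $u_2+u_3+u_4 = (0,0,3)^T$ and $u_1=(0,0,-3)^T$, so $u_1+u_2+u_3+u_4 = (0,0,0)^T = 4u_0$ after recentering — and then the equilibrium stress on the ``star'' $K_5$-like piece $\{v_0v_1,v_0v_2,v_0v_3,v_0v_4,v_1v_2,v_1v_3,v_1v_4\}$ plus the base triangle is determined by a small explicit linear system whose solution I expect to be $\omega'_{0i}=$ small constant, $\omega'_{1i}=$ small constant, and $\omega'_{ij} = \omega_{ij} + (\text{small constant})$ on base edges — after which parts \ref{lm:equiCDV:1} converts this to the desired $M$, positivity is read off directly, and the size bound follows by inspection. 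I would double-check the orientation/sign conventions so that the corrections genuinely keep the base-edge stresses positive (recall $\omega$ is positive on all internal edges of $G^\uparrow$, which includes the base triangle since that is the \emph{boundary} face of $\mathbf p$ — so one must be slightly careful about whether the base edges count as boundary edges of $\mathbf p$, on which $\omega$ need not be positive; this is the subtle point and is precisely why a $+1$ slack, rather than an exact formula, appears in the statement).
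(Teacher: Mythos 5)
Your overall plan runs parallel to the paper's proof (keep $\omega$ on the flat copy of $G_{\uparrow}$ at height $1$ via Lemma~\ref{lm:equiCDV}, then repair positivity using a stress supported on the tetrahedron $u_1u_2u_3u_4$), but two steps as written do not hold up. First, your cone-convexity argument rests on the claim that $\Conv(u_1,\dots,u_n)$ is a convex polytope with graph $G$. It is not: the vertices $u_5,\dots,u_n$ lie strictly inside the triangle $u_2u_3u_4$ in the plane $\{z=1\}$, so the convex hull is just the tetrahedron $u_1u_2u_3u_4$ and its graph is $K_4$. The correct route (the one the paper takes) is to use the definition of cone-convexity directly: the configuration is a tetrahedron containing the origin in its interior with one facet refined by a convex planar subdivision, so the cone over every face (the three side triangles and each small triangle at height $1$) is a pointed convex cone, and these cones have pairwise disjoint interiors. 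No convex-position claim is needed; indeed the paper explicitly notes that cone-convex embeddings need not be convex polytopes.

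Second, the decisive step --- producing the positive integer CDV matrix --- is left as an expectation, and the expected shape of the solution is wrong in an essential way. The CDV matrices of the tetrahedron $u_1,\dots,u_4$ (equivalently, the equilibrium stresses of the cone over it from the origin) form a \emph{one-dimensional} space: since $u_1+u_2+u_3+u_4=0$, every such matrix is a scalar multiple of the all-ones matrix, so the corrections you add to the base edges $(v_2v_3),(v_3v_4),(v_4v_2)$ and the weights you put on the side edges $(v_1v_i)$ are forced to be one and the same number $\lambda$. Because the boundary stresses $\omega_{23},\omega_{34},\omega_{42}$ of a positive equilibrium stress are negative and can be as large in absolute value as the interior stresses, $\lambda$ cannot be ``a small constant independent of $\omega$'' as you predict; one must take $\lambda=\max_{(i,j)\in\{(2,3),(3,4),(4,2)\}}|\omega_{ij}|+1$, which is exactly how the paper closes the proof ($M=M'+\lambda M''$ with $M''$ the all-ones tetrahedral CDV matrix), and which is also why the final bound is $\max|\omega_{ij}|+1$ rather than $\max|\omega_{ij}|+O(1)$ on the interior edges only. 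You do flag the sign issue on the boundary edges at the end, but your earlier prediction of constant side-edge weights contradicts it, and without the one-dimensionality observation (or an equivalent explicit solution of your ``small linear system'') the positivity and the $+1$ bound are not actually established.
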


\begin{proof}

The embedding $(u_i)_{1\le i\le n}$ can be described as follows:
The embedding of $G_{\uparrow}$ 
is realized in the plane $\{z=1\}$,
scaled 3 times,
and translated so that the barycenter of the boundary face coincides with the origin.
The stacked vertex is then placed at $(0,0,-3)^T$.
The embedding is cone-convex since it describes a tetrahedron containing the origin with one face
that is refined with a plane convex subdivision.

Following the structure of $G=\Stack(G_{\uparrow};v_1;v_2v_3v_4)$, we
decompose $G$ into two subgraphs: $G_{\uparrow}=G[\{v_2,\ldots,v_n\}]$
and $G_{\downarrow}:=G[\{v_1,v_2,v_3,v_4\}]$.

We first compute a CDV matrix $[M'_{ij}]_{2\le i,j\le n}$ for the
embedding $(u_i)_{2\le i\le n}$ of $G_{\uparrow}$. 
The plane embedding $(p_i)_{2\le i\le n}$ of  $G_{\uparrow}$ comes with an integer equilibrium stress $\omega$.
Since $(u_i)_{2\le i\le n}$ is just 
a rescaling and translation of $(p_i)_{2\le i\le n}$, clearly, 
$\omega$ is also an equilibrium stress for 
$(u_i)_{2\le i\le n}$ and we use part~\ref{lm:equiCDV:2} of Lemma~\ref{lm:equiCDV}
to transform it into the integer CDV matrix $[M'_{ij}]_{2\le i,j\le n}$.
	
As a second step we compute a CDV matrix $[M''_{ij}]_{1\le i,j\le 4}$ for the embedding
 of the tetrahedron $G_{\downarrow}$. 
The tetrahedron $G_{\downarrow}$ possesses a trivial CDV matrix:
By the construction, $u_1+u_2+u_3+u_4 = 0$. Thus, the matrix
$$
M'':=\begin{pmatrix}
1 &1 &1 &1\\
1 &1 &1 &1\\
1 &1 &1 &1\\
1 &1 &1 &1
\end{pmatrix}
$$
is a CDV matrix for the embedding $(u_1, u_2, u_3, u_4)$ 
of $G^{\uparrow}$.

	
In the final step we extend the two CDV matrices $M'$ and $M''$ to $G$
and combine them. Clearly, a CDV matrix padded with zeros remains a CDV matrix. Furthermore,
any linear combination of CDV matrices is again a CDV matrix.
Thus, we form an integer
CDV matrix for the whole embedding $(u_i)_{1\le i\le n}$ of $G$ by setting:
	$$
	M:=M'+\lambda M'',
	$$
where $\lambda$ is a positive integer chosen so that $M$ is a positive CDV matrix. This can be done as follows.
Recall that  $\omega$ is a positive stress and $M''$ is a positive CDV matrix. Hence, 
 the only six entries in $M$ corresponding to edges of $G$ that may be negative
 are: $M_{23}$, $M_{34}$ and $M_{42}$ (and their symmetric entries), for which 
	$M_{ij}:=M'_{ij}+\lambda M''_{ij}$ with $M'_{ij}=\omega_{ij}<0$ and $M''_{ij}=1$.
	Thus, we choose $\lambda$ such that $M$ is positive at these entries.
	To satisfy this condition we pick
	$$
	    \lambda:=\max_{(i,j)\in \{(2,3),(3,4),(4,2)\}}|M'_{ij}|+1.
	$$
	The bound $|M_{ij}| \le \max_{kl}|\omega_{kl}|+1$ for every edge $(i,j)$ of $G$ trivially follows.
\end{proof}
    
\subsection{Realizations of Truncated Polytopes}

\label{subsec:truncated:truncated}

We can now combine the previous results in Theorem~\ref{thm:frameworkL} before presenting the embedding algorithm for truncated
polytopes in Theorem~\ref{thm:truncated}.
\begin{theorem}\label{thm:frameworkL}
 Let $G=\Stack(G_{\uparrow};v_1;v_2v_3v_4)$ and let $\mathbf p = (p_i)_{2\le i\le n}$
 be a planar 2d embedding of $G_{\uparrow}$
 with integer coordinates, boundary face $(v_2v_3v_4)$, and
 a positive integer equilibrium stress $\omega$.
 Then there exists a realization $(\phi_f)_{f\in F(G^*)}$ of 
 the  graph $G^*$, dual to $G$, as a convex polytope with integer coordinates such that  
 $$
 |\phi_f| = O(n\cdot \max|\omega_{ij}| \cdot\max|p_i|^2).
 $$
\end{theorem}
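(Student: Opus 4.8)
The plan is to compose Theorem~\ref{thm:framework} with Theorem~\ref{thm:LoThm}. First I would take the planar embedding $\mathbf p = (p_i)_{2\le i\le n}$ of $G_\uparrow$ with its positive integer equilibrium stress $\omega$ and feed it into Theorem~\ref{thm:framework}. This produces an integral cone-convex embedding $\mathbf u = (u_i)_{1\le i\le n}$ of $G = \Stack(G_\uparrow; v_1; v_2v_3v_4)$ together with a positive integer CDV matrix $M$ satisfying $|M_{ij}| \le \max_{kl}|\omega_{kl}| + 1$ for every edge $(i,j)$ of $G$. Then I would apply Theorem~\ref{thm:LoThm} to the pair $(\mathbf u, M)$: since $\mathbf u$ is a cone-convex embedding with integer coordinates and $M$ is a positive integer CDV matrix, we obtain a realization $(\phi_f)_{f\in F(G)}$ of $G^*$ as a convex polytope with integer coordinates bounded by
\[
|\phi_f| < 2n \cdot \max_{(v_i,v_j)\in E(G)} |M_{ij}(u_i\times u_j)|.
\]
Note $F(G^*)$ and $F(G)$ index the same dual objects, so this is exactly the family named in the statement.

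The remaining work is to bound the right-hand side in terms of $\max|p_i|$ and $\max|\omega_{ij}|$. From Theorem~\ref{thm:framework}, each $|M_{ij}|$ is $O(\max|\omega_{ij}|)$. For the coordinates of $u_i$: for $2 \le i \le n$ we have $u_i = (3p_i - (p_2+p_3+p_4), 1)^T$, so $|u_i| = O(\max|p_i|)$; and $u_1 = (0,0,-3)^T$ has norm $O(1)$, which is also $O(\max|p_i|)$ since the $p_i$ are integer points spanning a triangle (so $\max|p_i| \ge 1$, or we can simply absorb the constant). Hence $|u_i \times u_j| = O(|u_i|\,|u_j|) = O(\max|p_i|^2)$ for every edge. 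Multiplying the three estimates gives $\max_{(v_i,v_j)} |M_{ij}(u_i\times u_j)| = O(\max|\omega_{ij}| \cdot \max|p_i|^2)$, and the extra factor $2n$ yields the claimed $|\phi_f| = O(n \cdot \max|\omega_{ij}| \cdot \max|p_i|^2)$.

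There is no genuine obstacle here — the theorem is a clean concatenation of two already-established results, and the only care needed is bookkeeping: making sure the cross-product norm bound $|u_i \times u_j| \le |u_i|\,|u_j|$ is applied on the scaled-and-translated points (which only changes constants), and checking that the edge set of $G$ over which we maximize is the same one for which Theorem~\ref{thm:framework} gives the $M_{ij}$ bound (it is, since the three potentially-negative entries $M_{23}, M_{34}, M_{42}$ also satisfy the stated bound after the choice of $\lambda$). If anything, the mildly delicate point is the degenerate-looking case where $\max|p_i|$ is small; but since the $p_i$ are distinct integer points forming a proper triangle, $\max|p_i|$ is bounded below by an absolute constant, so every $O(1)$ term (in particular $|u_1|$ and the additive $+1$ in the stress bound) is harmlessly subsumed into the stated asymptotic expression.
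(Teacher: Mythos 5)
Your proposal is correct and matches the paper's own proof: both apply Theorem~\ref{thm:framework} to get the integral cone-convex embedding and positive integer CDV matrix, then feed these into Theorem~\ref{thm:LoThm} and combine the bounds $|M_{ij}|\le\max|\omega_{kl}|+1$ and $|u_i\times u_j|=O(\max|p_i|^2)$ to obtain the stated estimate. Your extra bookkeeping about $|u_1|$ and the lower bound on $\max|p_i|$ is harmless detail the paper leaves implicit.
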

\begin{proof}
We first apply Theorem~\ref{thm:framework} 
to obtain a cone-convex embedding $\mathbf u = (u_i)_{1\le i\le n}$ of $G$
with integer coordinates and
a positive integer CDV matrix $M$.
We then apply  Theorem~\ref{thm:LoThm}
and obtain a family of vectors $(\phi_f)_{f\in F(G^*)}$ 
that forms a desired realization of $G^*$ as a convex polytope with integer coordinates.

To  estimate how large the coordinates of the embedding $(\phi_f)$ are,
we combine bounds for $\phi_f$ given by Theorem~\ref{thm:LoThm} with the bounds for the 
entries of $M$ given by Theorem~\ref{thm:framework}:
 \begin{align*}
 |\phi_f|
 \le 2n\cdot \max_{(v_i,v_j) \in E(G)}|M_{ij}(u_i\times u_j)|
 &\le 2n\cdot (\max|\omega_{ij}|+1) \cdot\max|u_i|^2\\
 &=O(n\cdot \max|\omega_{ij}| \cdot\max|p_i|^2).
 \end{align*}
\end{proof}

Next we apply Theorem~\ref{thm:frameworkL} to construct an
integer polynomial size grid embedding for truncated polytopes.
To construct small integer 2d embeddings with a small integer equilibrium stress
we use a result by Demaine~and~Schulz~\cite{DS11}:
\begin{lemma}\label{lm:ds}
Any graph of a
stacked polytope with $n$ vertices and any distinguished face $f_0$ 
can be embedded on a $O(n^{2\log6}) \times O(n^{2\log6})$ grid 
with boundary face $f_0$ and with integral positive equilibrium stress $\omega$
such that, for every edge $(i,j)$, we have $|\omega_{ij}|=O(n^{5\log6})$.
\end{lemma}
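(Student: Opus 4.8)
The plan is to prove Lemma~\ref{lm:ds} by appealing directly to the construction of Demaine and Schulz~\cite{DS11}, which embeds the graph of a stacked polytope on a polynomial-size grid, and to track the size of the equilibrium stress that their method produces. First I would recall the structure of a stacked polytope's graph: it is a planar $3$-tree, built from a triangle by repeatedly inserting a new vertex inside a face and joining it to the three face vertices. The recursion tree of these insertions has $n-3$ internal nodes, and—this is the key structural fact exploited in~\cite{DS11}—one may always choose the insertion order so that the recursion tree has depth $O(\log n)$ (a balanced-separator argument on $3$-trees). I would then set the distinguished face $f_0$ to be the outer (boundary) face: since the automorphism group of a planar $3$-tree acts transitively enough, or more directly since the recursive construction can be rooted at any face, one may assume without loss of generality that $f_0$ is one of the three original triangle edges' face.

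Next I would describe the stress and the coordinates. The Demaine--Schulz embedding places each newly stacked vertex at the weighted barycenter of its containing triangle, with carefully chosen integer weights; equivalently, the embedding arises as a Tutte-type spring embedding whose equilibrium stress $\omega$ is built up inductively, one stacking step at a time, exactly in the spirit of Theorem~\ref{thm:framework} above (run ``from the inside out''). At each stacking step the new stresses on the three new edges, and the adjustments to the three old edges of the split face, are bounded by a constant factor times the current maximum stress; since the recursion has depth $O(\log n)$, after clearing denominators the stresses are integers of magnitude $2^{O(\log n)} = n^{O(1)}$, and a careful accounting of the constants gives the stated bound $|\omega_{ij}| = O(n^{5\log 6})$. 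The coordinates are handled the same way: each step scales the relevant sub-triangle by a bounded factor (as in the factor-$3$ scaling in Theorem~\ref{thm:framework}), so over $O(\log n)$ levels the grid size is $2^{O(\log n)} = n^{O(1)}$, and the bookkeeping yields $O(n^{2\log 6})$. Positivity of $\omega$ on all interior edges is immediate from the construction, since every stress introduced at a stacking step is positive and the only decreases applied to old edges are controlled so as to stay positive—this is precisely the role of the additive ``$+1$'' slack in Theorem~\ref{thm:framework}.

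Finally I would assemble these observations into the statement of the lemma: the embedding lives on an $O(n^{2\log 6})\times O(n^{2\log 6})$ grid, it has the prescribed face $f_0$ as its boundary face, and it carries a positive integral equilibrium stress with $|\omega_{ij}| = O(n^{5\log 6})$. The main obstacle is not the existence of the embedding—that is exactly the content of~\cite{DS11}—but rather the quantitative bookkeeping: one must verify that the Demaine--Schulz construction can be carried out with the recursion tree balanced to depth $O(\log n)$ \emph{simultaneously} for the coordinates and for the stress, and that the per-level blow-up factors (which are where the base $6$ in the exponents comes from, via the $O(\log_6 n)$-depth analysis and the geometric growth of weights) multiply out to exactly the claimed polynomial degrees. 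I expect this to reduce to a direct citation plus a short remark extracting the stress bound, since~\cite{DS11} already produces the stress explicitly as part of its Maxwell--Cremona-based argument.
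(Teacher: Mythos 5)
Your overall strategy---prove the lemma by citing Demaine--Schulz \cite{DS11} and extracting the stress bound that their Maxwell--Cremona-based construction produces---is indeed the same route the paper takes. But the mechanism you describe for why the bounds hold is not the mechanism of \cite{DS11}, and it is wrong in a way that breaks the quantitative part of the lemma. Your key structural claim, that ``one may always choose the insertion order so that the recursion tree has depth $O(\log n)$,'' is false: the nesting structure of the stackings in a planar 3-tree is forced by the graph once the outer face is fixed (each stacked vertex has degree~3 and its containing triangle is determined), and for a serpentine sequence of stackings the representation tree has depth $\Theta(n)$. Consequently the accounting ``constant blow-up per level $\times$ $O(\log n)$ levels $= n^{O(1)}$'' does not establish any polynomial bound on coordinates or stresses, let alone the specific exponents $2\log 6$ and $5\log 6$. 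What \cite{DS11} actually does is a weight-balancing argument: integer weights are assigned to the vertices by a balancing procedure, vertices are placed at weighted barycenters, and polynomiality comes from bounding the total weight of the outer face, not from rebalancing the stacking order. (Your attribution of positivity to the ``$+1$ slack in Theorem~\ref{thm:framework}'' is also misplaced; that theorem belongs to the present paper and plays no role inside the Demaine--Schulz construction, where the interior stresses are positive by construction.)

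There is a second, more practical gap: the exponents in the statement cannot be obtained from the published version of \cite{DS11} as is. The paper's proof explicitly relies on the fact that the balancing procedure in Sect.~3.1 of \cite{DS11} contained a flaw, corrected in \cite{DS16}, where the bound of Eq.~(3.3) becomes $w(f_0)\le n^{\log 6}$; one then reruns the computations of Sect.~3.3 of \cite{DS11} with this value (replacing $n$ by $n^{(\log 6)/2}$, adjusting the boundary face to side length $10R^2$ with $R=n^{\log 6}$, and the stress scaling factor to $Y=4R$) to arrive at the grid size $O(n^{2\log 6})$ and stress bound $O(n^{5\log 6})$. Your proposal defers exactly this bookkeeping (``a careful accounting of the constants gives the stated bound''), but since your accounting framework (logarithmic depth, per-level constant factors, ``base $6$ from an $O(\log_6 n)$-depth analysis'') does not match how the weights and stresses actually grow in \cite{DS11}, the deferred step could not be completed as sketched. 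To repair the argument you should drop the balanced-recursion claim, cite the corrected balancing lemma of \cite{DS16}, and trace its bound through the coordinate and stress computations of \cite{DS11} as the paper does.
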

\begin{proof}
This lemma is not explicitly stated in~\cite{DS11} but it is a
by-product of the presented constructions.
In particular, the estimates for the coordinate size are stated there in
Theorem~1 and the integer stresses are defined there by Lemma 9.

Recently, a flaw was discovered in~\cite{DS11}, which affected the estimates.
The flaw was within the balancing procedure, Sect.~3.1 of~\cite{DS11}.
It was corrected in the latest version of the paper~\cite[Lemma~7]{DS16}. 

The fix affected the estimate of Eq.~(3.3) in~\cite{DS11} and the numbers in the following computations (but not the arguments). Since we need
the corrected computations for the bounds of this lemma 
we report briefly the necessary changes using the exact notations of the original paper.
The updated bound of Eq.~(3.3) is $w(f_0)\le n^{\log6}$~\cite[Lemma~7]{DS16} instead of $n^2$. We denote this bound with $R$.
To correct the values in Sect.~3.3
we replace in the computations $n$ with $\sqrt R = n^{(\log6)/2}$.
In particular, in Subsect.~3.3.1 the (preliminary) boundary face is now given by 
$\mathbf p_1 = (0,0), \mathbf p_2 = (\sqrt R,0), \mathbf p_3 = (0, \sqrt R)$.
In Subsect.~3.3.2 the (final) boundary face is now given by 
$\mathbf p_1 = (0,0), \mathbf p_2 = (10 R^2,0), \mathbf p_3 = (0, 10 R^2)$;
and the corrected scaling factor for the stress is $Y = 4 R $.
The lemma follows by taking the updated values from~\cite{DS11}.
\end{proof}
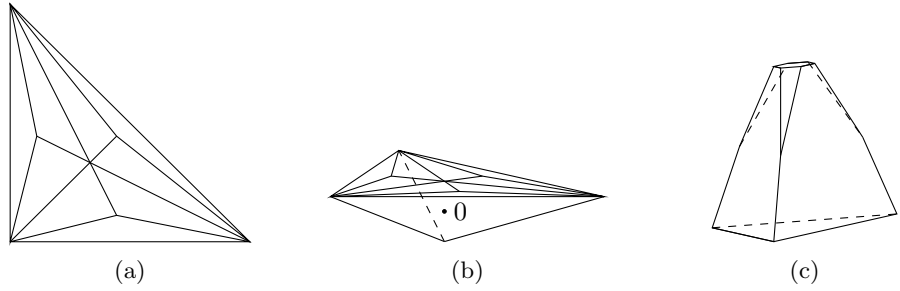
\begin{figure}[t]
\centering
\begin{subfigure}[t]{0.3\linewidth}
\centering	
\begin{tikzpicture}

	\begin{scope}[scale=0.35, x={(1cm,0cm)}, y={(0cm,1cm)}]
		\path (-3,-3) node[name=p2, shape=coordinate]{};
		\path (6,-3) node[name=p3, shape=coordinate]{};
		\path (-3,6) node[name=p4, shape=coordinate]{};
		\path (0,0) node[name=p5, shape=coordinate]{};
		\path (1,-2) node[name=p6, shape=coordinate]{};
		\path (1,1) node[name=p7, shape=coordinate]{};
		\path (-2,1) node[name=p8, shape=coordinate]{};

		\path[draw]  (p2) -- (p3) -- (p4) -- (p2) -- (p8) -- (p4) -- (p7) -- (p3) -- (p6) -- (p2);
		\path[draw]  (p3) -- (p5) -- (p4) ;
		\path[draw] (p2) -- (p5) -- (p7);
		\path[draw] (p6) -- (p5) -- (p8);
	\end{scope}

\end{tikzpicture}
\caption{}
\label{fig:onlyfig:1}
\end{subfigure}	
\begin{subfigure}[t]{0.3\linewidth}	
\centering
\begin{tikzpicture}
	
	\begin{scope}[xshift=4.5cm, scale=0.4, x={(1cm,0cm)}, y={(0.25cm,0.17cm)}, z={(0cm,1cm)}]
		\path (-3,-3,1) node[name=q2, shape=coordinate]{};
		\path (6,-3,1) node[name=q3, shape=coordinate]{};
		\path (-3,6,1) node[name=q4, shape=coordinate]{};
		\path (0,0,1) node[name=q5, shape=coordinate]{};
		\path (1,-2,1) node[name=q6, shape=coordinate]{};
		\path (1,1,1) node[name=q7, shape=coordinate]{};
		\path (-2,1,1) node[name=q8, shape=coordinate]{};
		\path (0,0,-1) node[name=q1, shape=coordinate]{};
		
		\path[draw] (q2) -- (q3) -- (q4) -- (q2) -- (q8) -- (q4) -- (q7) -- (q3) -- (q6) -- (q2);
		\path[draw] (q3) -- (q5) -- (q4);
		\path[draw] (q2) -- (q5) -- (q7);
		\path[draw] (q6) -- (q5) -- (q8);
		\path[draw] (q1) -- (q2);
		\path[draw] (q1) -- (q3);
		\path[draw, dashed] (q1) -- (q4);
		
		\draw[fill] (0,0,0) ellipse(.06 cm and .06 cm);
		\node[right] at (0,0,0) {$0$};
	
	\end{scope}
\end{tikzpicture}
\caption{}
\label{fig:onlyfig:2}
\end{subfigure}	
\begin{subfigure}[t]{0.3\linewidth}	
\centering
\begin{tikzpicture}
	\begin{scope}[xshift=9cm, yshift=-0.7cm, scale=0.04, x={(1cm,0cm)}, y={(0.25cm,0.17cm)}, z={(0cm,1cm)}]
		\path (0,-18,27) node[name=f263, shape=coordinate]{};
		\path (-3,-6,54) node[name=f265, shape=coordinate]{};
		\path  (-6,-3,54) node[name=f258, shape=coordinate]{};
		\path  (-18,0,27) node[name=f284, shape=coordinate]{};
		\path  (-3,3,54) node[name=f485, shape=coordinate]{};
		\path  (3,6,54) node[name=f457, shape=coordinate]{};
		\path  (18,18,27) node[name=f473, shape=coordinate]{};
		\path  (6,3,54) node[name=f375, shape=coordinate]{};
		\path  (3,-3,54) node[name=f356, shape=coordinate]{};
		\path  (0,-18,27) node[name=f362, shape=coordinate]{};
		\path  (0,-27,0) node[name=f321, shape=coordinate]{};
		\path  (-27,0,0) node[name=f124, shape=coordinate]{};
		\path  (27,27,0) node[name=f143, shape=coordinate]{};
		
		\path[draw] (f362) -- (f265) -- (f258) -- (f284) -- (f124) -- (f321);
		\path[draw, dashed] (f284) -- (f485);
		\path[draw] (f485) -- (f457);
		\path[draw, dashed] (f457) -- (f473);
		\path[draw] (f473) -- (f143) -- (f321);
		\path[draw] (f473) -- (f375) -- (f356) -- (f362) -- (f321) -- (f124) ;
		\path[draw, dashed] (f124) -- (f143);
		\path[draw] (f375) -- (f356) -- (f265) -- (f258) -- (f485) -- (f457) -- (f375);
		
	\end{scope}
\end{tikzpicture}
\caption{}
\label{fig:onlyfig:3}
\end{subfigure}
\caption{The 2d embedding of $G_\uparrow$ (a), the cone-convex embedding of $G$ (b), 
and the resulting embedding of the dual (c).} 
\label{fig:onlyfig}
\end{figure}
\begin{theorem}
\label{thm:truncated}
Any truncated 3d polytope  with $n$ vertices can be realized
with integer coordinates of size $O(n^{9\log6+1})\subset O(n^{24.27})$.
\end{theorem}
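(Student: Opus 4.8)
The plan is to combine the structural fact that a truncated polytope is dual to a stacked polytope with the machinery built up through Theorem~\ref{thm:frameworkL} and Lemma~\ref{lm:ds}. First I would observe that if $P$ is a truncated $3$d polytope with $n$ vertices, then its dual graph $G^\ast$ is the graph of some polytope $Q$ that is obtained from a tetrahedron by vertex truncations; dually, $G := (G^\ast)^\ast$ is the graph of a stacked polytope, i.e.\ a planar $3$-tree. One must be slightly careful about the vertex count: truncating a degree-$3$ vertex of a simple polytope replaces one vertex by three, so a truncated polytope with $n$ vertices corresponds to a stacked polytope whose graph $G$ has roughly $n/2$ vertices, and in any case $|V(G)| = O(n)$; I will write $m := |V(G)| = O(n)$. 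Since $G$ is a planar $3$-tree, it arises as $G = \Stack(G_\uparrow; v_1; v_2v_3v_4)$ for a suitable subgraph $G_\uparrow$ (itself a planar $3$-tree on $m-1$ vertices) and a triangular face $(v_2v_3v_4)$; this is exactly the input format required by Theorem~\ref{thm:frameworkL}.

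Next I would invoke Lemma~\ref{lm:ds} applied to the stacked polytope with graph $G_\uparrow$, choosing the designated outer face $f_0$ to be the face $(v_2v_3v_4)$ onto which $v_1$ is stacked. This yields a planar integer embedding $\mathbf p = (p_i)_{2 \le i \le n}$ of $G_\uparrow$ on a grid of size $O(m^{2\log 6}) \times O(m^{2\log 6})$, with boundary face $(v_2v_3v_4)$ and a positive integer equilibrium stress $\omega$ satisfying $|\omega_{ij}| = O(m^{5\log 6})$ for every edge. I would then feed this data directly into Theorem~\ref{thm:frameworkL}, which produces a realization $(\phi_f)_{f \in F(G^\ast)}$ of $G^\ast$ — precisely the graph of our truncated polytope — as a convex polytope with integer coordinates bounded by
\[
|\phi_f| = O\!\left(m \cdot \max|\omega_{ij}| \cdot \max|p_i|^2\right).
\]
Substituting the bounds from Lemma~\ref{lm:ds} gives $|\phi_f| = O\!\left(m \cdot m^{5\log 6} \cdot (m^{2\log 6})^2\right) = O\!\left(m^{1 + 5\log 6 + 4\log 6}\right) = O\!\left(m^{9\log 6 + 1}\right)$, and since $m = O(n)$ this is $O(n^{9\log 6 + 1})$, with $9\log 6 + 1 \approx 24.27$ (using $\log = \log_2$, so $\log 6 \approx 2.585$). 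This is exactly the claimed bound.

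The main obstacle I anticipate is not in the arithmetic but in making the structural reduction airtight. One has to verify that the dual of a truncated polytope really is (the graph of) a stacked polytope in the precise sense used by Lemma~\ref{lm:ds} — that vertex truncation of a simple $3$-polytope is combinatorially dual to stacking a vertex on a triangular face, so that iterating truncations from a tetrahedron dualizes to iterating stackings from a tetrahedron — and that the planar $3$-tree $G$ can indeed be decomposed as $\Stack(G_\uparrow; v_1; v_2 v_3 v_4)$ with $G_\uparrow$ again a planar $3$-tree, which is clear from the recursive definition (peel off the last-stacked vertex). A secondary point to be careful about is the freedom in Lemma~\ref{lm:ds} to prescribe the outer face $f_0$: we need $f_0$ to be the triangle on which the final vertex $v_1$ is stacked, and the lemma does grant an arbitrary distinguished face, so this is fine. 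Finally one should note that, strictly, Lemma~\ref{lm:ds} and Theorem~\ref{thm:frameworkL} are stated for the stacked polytope on $n$ vertices with dual on $O(n)$ vertices; tracking the constant-factor relationship between $|V(G)|$ and the vertex count $n$ of the truncated polytope only affects the hidden constant in the $O(\cdot)$, not the exponent, so the stated bound $O(n^{9\log 6 + 1})$ stands.
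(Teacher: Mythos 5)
Your proposal is correct and follows exactly the paper's own argument: dualize to a stacked polytope on $O(n)$ vertices, peel off the last stacked vertex to write $G=\Stack(G_\uparrow;v_1;v_2v_3v_4)$, apply Lemma~\ref{lm:ds} to $G_\uparrow$ with boundary face $(v_2v_3v_4)$, and feed the resulting embedding and stress into Theorem~\ref{thm:frameworkL}, giving $O(n\cdot n^{5\log 6}\cdot n^{4\log 6})=O(n^{9\log 6+1})$. The structural points you flag (truncation being dual to stacking, the recursive decomposition of a planar 3-tree, and the free choice of outer face) are handled the same way in the paper, which simply notes $G$ has $(n+4)/2$ vertices.
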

\begin{proof}
Let $G^*$ be the graph of the truncated polytope and $G:=(G^{*})^{*}$ its dual.
Clearly, $G$ is the graph of a stacked polytope with $(n+4)/2$ vertices.
We denote the last stacking operation (for some 
sequence of stacking operations producing $G$) as the stacking of the vertex $v_1$ onto
the face $(v_2v_3v_4)$ of the  graph  $G_{\uparrow}:=G[V\setminus\{v_1\}]$. The 
graph $G_{\uparrow}$ is again a graph of a stacked polytope, and hence, 
by Lemma~\ref{lm:ds}, there 
exists an embedding $(p_i)_{2\le i\le n}$ of $G_{\uparrow}$ into $\Z^2$ with an equilibrium stress
$\omega$ satisfying the properties of Theorem~\ref{thm:frameworkL}.
We apply Theorem~\ref{thm:frameworkL} and obtain a polytope embedding $(\phi_f)$ of $G^*$ with bound
$$
|\phi_f| = O(n\cdot \max|\omega_{ij}| \cdot\max|p_i|^2) = O(n^{9\log6+1}).
$$
\end{proof}

Fig.~\ref{fig:onlyfig} shows an example of our algorithm. The computations for this example are presented in Sect.~\ref{sec:example}.

\section{A Duality Transform for Simplicial Polytopes with a Degree-3 Vertex}

In this section we develop a duality transform for
simplicial polytopes with a degree-3 vertex and some special geometric properties.
This result is an easy consequence of new results on 2d equilibrium stresses, which will be presented first:
the {\em wheel-decomposition} theorem and an efficient reverse of the Maxwell--Cremona lifting.

\label{sec:WDreverseMC}
%
%
%
%
%
%

\subsection{Wheel decomposition for equilibrium stresses}
\label{subsec:WDreverseMC:WD}
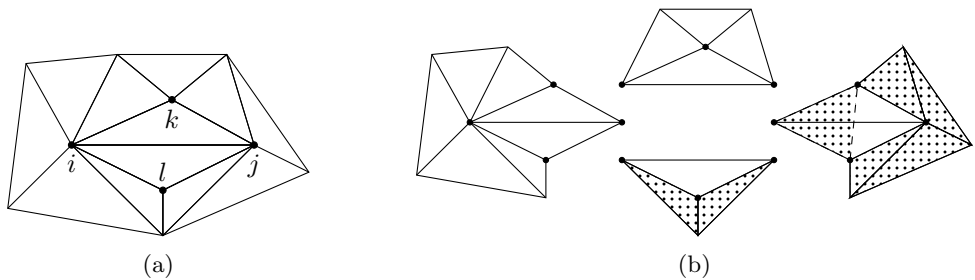
\begin{figure}[h]
\centering
\begin{subfigure}[t]{0.3\linewidth}
\centering
\begin{tikzpicture}[scale=1.2]

	\begin{scope}
		 
		\path (-1,0) node[name=p1, shape=coordinate]{};
		\path (1,0) node[name=p2, shape=coordinate]{};
		\path (0.1,0.5) node[name=p3, shape=coordinate]{};
		\path (0,-0.5) node[name=p4, shape=coordinate]{};
		\path (0,-1) node[name=p5, shape=coordinate]{};
		\path (-1.7,-0.7) node[name=p6, shape=coordinate]{};
		\path (-1.5,0.9) node[name=p7, shape=coordinate]{};
		\path (-0.5,1) node[name=p8, shape=coordinate]{};
		\path (0.7,1) node[name=p9, shape=coordinate]{};
		\path (1.6,-0.3) node[name=p10, shape=coordinate]{};

		\path[draw, fill] (p1) circle (1pt) node[anchor=north] {$i$};
		\path[draw, fill] (p2) circle(1pt) node[anchor=north] {$j$};
		\path[draw, fill] (p3) circle(1pt) node[anchor=north] {$k$};
		\path[draw, fill] (p4) circle(1pt) node[anchor=south] {$l$};

		\draw (p1)--(p2);
		\draw (p1)--(p3);
		\draw (p1)--(p4);
		\draw (p1)--(p5);
		\draw (p1)--(p6);
		\draw (p1)--(p7);
		\draw (p1)--(p8);
		\draw (p2)--(p4)--(p5)--(p6)--(p7)--(p8)--(p3)--cycle;

		\draw (p2)--(p4);
		\draw (p2)--(p1);
		\draw (p2)--(p3);
		\draw (p2)--(p9);
		\draw (p2)--(p10);
		\draw (p2)--(p5);
		\draw (p1)--(p3)--(p9)--(p10)--(p5)--(p4)--cycle;

		\draw (p3)--(p1);
		\draw (p3)--(p2);
		\draw (p3)--(p9);
		\draw (p3)--(p8);
		\draw (p1)--(p2)--(p9)--(p8)--cycle;

		\draw (p4)--(p1);
		\draw (p4)--(p2);
		\draw (p4)--(p5);
		\draw (p1)--(p2)--(p5)--cycle;

	\end{scope}
\end{tikzpicture}
\caption{}
\label{fig:wheelDecomposition:1}
\end{subfigure}
\hskip3ex
\begin{subfigure}[t]{0.6\linewidth}
\centering
\begin{tikzpicture}

	\begin{scope}[xshift=4cm]

		\path (-1,0) node[name=p1, shape=coordinate]{};
		\path (1,0) node[name=p2, shape=coordinate]{};
		\path (0.1,0.5) node[name=p3, shape=coordinate]{};
		\path (0,-0.5) node[name=p4, shape=coordinate]{};
		\path (0,-1) node[name=p5, shape=coordinate]{};
		\path (-1.7,-0.7) node[name=p6, shape=coordinate]{};
		\path (-1.5,0.9) node[name=p7, shape=coordinate]{};
		\path (-0.5,1) node[name=p8, shape=coordinate]{};
		\path (0.7,1) node[name=p9, shape=coordinate]{};
		\path (1.6,-0.3) node[name=p10, shape=coordinate]{};

		\path[draw, fill] (p1) circle(1pt) node[anchor=north] {};
		\path[draw, fill] (p2) circle(1pt) node[anchor=north] {};
		\path[draw, fill] (p3) circle(1pt) node[anchor=north] {};
		\path[draw, fill] (p4) circle(1pt) node[anchor=south] {};

		\draw (p1)--(p2);
		\draw (p1)--(p3);
		\draw (p1)--(p4);
		\draw (p1)--(p5);
		\draw (p1)--(p6);
		\draw (p1)--(p7);
		\draw (p1)--(p8);
		\draw (p2)--(p4)--(p5)--(p6)--(p7)--(p8)--(p3)--cycle;
	\end{scope}

	\begin{scope}[xshift=8cm]

		\path (-1,0) node[name=p1, shape=coordinate]{};
		\path (1,0) node[name=p2, shape=coordinate]{};
		\path (0.1,0.5) node[name=p3, shape=coordinate]{};
		\path (0,-0.5) node[name=p4, shape=coordinate]{};
		\path (0,-1) node[name=p5, shape=coordinate]{};
		\path (-1.7,-0.7) node[name=p6, shape=coordinate]{};
		\path (-1.5,0.9) node[name=p7, shape=coordinate]{};
		\path (-0.5,1) node[name=p8, shape=coordinate]{};
		\path (0.7,1) node[name=p9, shape=coordinate]{};
		\path (1.6,-0.3) node[name=p10, shape=coordinate]{};

		\path[draw, fill] (p1) circle(1pt) node[anchor=north] {};
		\path[draw, fill] (p2) circle(1pt) node[anchor=north] {};
		\path[draw, fill] (p3) circle(1pt) node[anchor=north] {};
		\path[draw, fill] (p4) circle(1pt) node[anchor=south] {};

		\draw (p2)--(p4);
		\draw (p2)--(p1);
		\draw (p2)--(p3);
		\draw (p2)--(p9);
		\draw (p2)--(p10);
		\draw (p2)--(p5);
		\draw (p1)--(p3)--(p9)--(p10)--(p5)--(p4)--cycle;

		\path[draw, pattern=dots] (p2)--(p3)--(p9)--cycle;
		\path[draw, pattern=dots] (p2)--(p9)--(p10)--cycle;
		\path[draw, pattern=dots] (p2)--(p10)--(p5)--cycle;
		\path[draw, pattern=dots] (p2)--(p5)--(p4)--cycle;
		\path[draw, dashed, pattern=dots] (p1)--(p3)--(p4)--cycle;

	\end{scope}

	\begin{scope}[xshift=6cm, yshift=0.5cm]

		\path (-1,0) node[name=p1, shape=coordinate]{};
		\path (1,0) node[name=p2, shape=coordinate]{};
		\path (0.1,0.5) node[name=p3, shape=coordinate]{};
		\path (0,-0.5) node[name=p4, shape=coordinate]{};
		\path (0,-1) node[name=p5, shape=coordinate]{};
		\path (-1.7,-0.7) node[name=p6, shape=coordinate]{};
		\path (-1.5,0.9) node[name=p7, shape=coordinate]{};
		\path (-0.5,1) node[name=p8, shape=coordinate]{};
		\path (0.7,1) node[name=p9, shape=coordinate]{};
		\path (1.6,-0.3) node[name=p10, shape=coordinate]{};

		\path[draw, fill] (p1) circle(1pt) node[anchor=north] {};
		\path[draw, fill] (p2) circle(1pt) node[anchor=north] {};
		\path[draw, fill] (p3) circle(1pt) node[anchor=north] {};

		\draw (p3)--(p1);
		\draw (p3)--(p2);
		\draw (p3)--(p9);
		\draw (p3)--(p8);
		\draw (p1)--(p2)--(p9)--(p8)--cycle;

	\end{scope}

	\begin{scope}[xshift=6cm, yshift=-0.5cm]

		\path (-1,0) node[name=p1, shape=coordinate]{};
		\path (1,0) node[name=p2, shape=coordinate]{};
		\path (0.1,0.5) node[name=p3, shape=coordinate]{};
		\path (0,-0.5) node[name=p4, shape=coordinate]{};
		\path (0,-1) node[name=p5, shape=coordinate]{};
		\path (-1.7,-0.7) node[name=p6, shape=coordinate]{};
		\path (-1.5,0.9) node[name=p7, shape=coordinate]{};
		\path (-0.5,1) node[name=p8, shape=coordinate]{};
		\path (0.7,1) node[name=p9, shape=coordinate]{};
		\path (1.6,-0.3) node[name=p10, shape=coordinate]{};

		\path[draw, fill] (p1) circle(1pt) node[anchor=north] {};
		\path[draw, fill] (p2) circle(1pt) node[anchor=north] {};
		\path[draw, fill] (p4) circle(1pt) node[anchor=south] {};

		\draw (p4)--(p1);
		\draw (p4)--(p2);
		\draw (p4)--(p5);
		\draw (p1)--(p2)--(p5)--cycle;

		\path[draw, pattern=dots] (p4)--(p1)--(p5)--cycle;
		\path[draw, pattern=dots] (p4)--(p2)--(p5)--cycle;

	\end{scope}

\end{tikzpicture}
\caption{}
\label{fig:wheelDecomposition:2}
\end{subfigure}
\caption{Part of a triangulation, participating in 
the wheel-decomposition of $\omega_{ij}$ (a); Wheels $W_i$, $W_l$, $W_j$, $W_k$ ((b), c.c.w) 
with shadowed areas of the triangles participating in 
the definition of \emph{large atomic stresses} for $W_l$ and $W_j$.}
\label{fig:wheelDecomposition}
\end{figure}
The decomposition of an equilibrium stress into a linear combination of ``local'' equilibrium stresses will be the key 
for the next embedding algorithm. The Wheel-Decomposition Theorem presented in this subsection provides the underlying theory.

Before proceeding, let us review how the canonical equilibrium stress associated with an orthogonal projection of a 3d polytope 
in the $\{z=0\}$ plane can be described. The
assignment of heights to the interior vertices of a 2d embedding resulting in a polyhedral surface is called a (polyhedral) \emph{lifting}. By the Maxwell--Cremona correspondence 
the equilibrium stresses of a 2d embedding of a planar 3-connected graph and its liftings
(modulo arbitrary choice of lifting for any one face of the graph) 
are in 1-1 correspondence. 
Moreover, the bijection between liftings and stresses can be defined as follows. Let $\mathbf p$ be a 2d drawing of a triangulation 
and let $\mathbf u$ be the 3d embedding induced by some lifting. We map this lifting to the equilibrium stress $\omega$
by assigning to every edge $(v_iv_j)$ separating the faces $(v_iv_jv_k)$ (on the left) and $(v_iv_jv_l)$ (on the right)
\begin{equation} \label{eq:ProjEqui}
\omega_{ij}:=\frac{[u_iu_ju_ku_l]}{[p_ip_jp_k][p_lp_jp_i]}.
\end{equation}
We refer to this stress as the {\em canonical equilibrium stress} of a projection.
This mapping gives the desired bijection. The expression of Eq.~\eqref{eq:ProjEqui} 
is a slight reformulation of the form presented in Hopcroft~and~Kahn~\cite[Equation~11]{HK92}.
We note that the canonical equilibrium stress is defined only when the denominators
participating in Eq.\eqref{eq:ProjEqui} are nonzero, 
that is when none of the faces of the projection $\mathbf p$ is degenerate.

We continue by studying the spaces of equilibrium stresses for triangulations. 
A graph formed by a vertex $v_c$, called \emph{center}, 
connected to every vertex of a cycle $v_1,\ldots,v_n$, called {\em base},
is called a \emph{wheel} (no other edge is present, see Fig.~\ref{fig:wheelDecomposition}); we denote it as 
$W(v_c;v_1\ldots v_n)$. 
A wheel that is a subgraph of a triangulation $G$ with $v_i\in V(G)$ as a center is denoted by $W_i$.
Every triangulation can be  ``covered'' with a set of wheels $(W_i)_{v_i\in V(G)}$, 
so that every edge is covered four times (Fig.~\ref{fig:wheelDecomposition}).

\begin{lemma}\label{lm:pyrstress}
Let $\mathbf p = (p_c,p_1,\ldots p_n)$ be an embedding of a wheel $W(v_c;v_1\ldots v_n)$ in $\R^2$,
such that for every $1\le i\le n$ the points $p_i$, $p_{i+1}$ and $p_c$ are noncollinear
(we use the cyclic notation for the vertices of the base of the wheel).
Then the following expression defines an equilibrium stress:
$$
\omega_{ij}=
\begin{cases}
   -1/[p_ip_{i+1}p_c],  \quad &j=i+1, 1\le i\le n,\\
   [p_{i-1}p_{i}p_{i+1}]/([p_{i-1}p_ip_c][p_ip_{i+1}p_c]), \quad &j=c, 1\le i\le n.
\end{cases}
$$
The equilibrium stress for the embedding $\mathbf p$ is unique up to a renormalization.
\end{lemma}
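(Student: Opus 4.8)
The plan is to verify directly that the prescribed assignment $\omega$ satisfies the equilibrium condition at every vertex of the wheel, and then to establish one-dimensionality of the stress space separately. The one algebraic fact that does all the work is the linear dependence of any four planar points after homogenization: for $a,b,c,d\in\R^2$ the four vectors $(a,1)^T,(b,1)^T,(c,1)^T,(d,1)^T\in\R^3$ are dependent, and Cramer's rule gives the explicit relations
\begin{equation*}
[bcd]\,a-[acd]\,b+[abd]\,c-[abc]\,d=0\qquad\text{and}\qquad[bcd]-[acd]+[abd]-[abc]=0,
\end{equation*}
the first reading off the two spatial coordinates and the second the homogenizing coordinate (both follow by expanding a $4\times4$ determinant with a repeated row). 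I will apply these with $(a,b,c,d)=(p_{i-1},p_i,p_{i+1},p_c)$; note that the noncollinearity hypothesis is exactly what makes the denominators $[p_ip_{i+1}p_c]$ appearing in the formula nonzero.

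For the equilibrium at a base vertex $v_i$, whose wheel-neighbours are $v_{i-1},v_{i+1},v_c$, I multiply the equilibrium sum $\omega_{i-1,i}(p_{i-1}-p_i)+\omega_{i,i+1}(p_{i+1}-p_i)+\omega_{i,c}(p_c-p_i)$ by the nonzero scalar $[p_{i-1}p_ip_c][p_ip_{i+1}p_c]$. Substituting the prescribed values turns this into $-[p_ip_{i+1}p_c]p_{i-1}-[p_{i-1}p_ip_c]p_{i+1}+[p_{i-1}p_ip_{i+1}]p_c$ plus a scalar multiple of $p_i$; the first displayed identity rewrites the $p_{i-1},p_{i+1},p_c$ terms as $-[p_{i-1}p_{i+1}p_c]p_i$, and then the second displayed identity shows the total coefficient of $p_i$ vanishes. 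For the equilibrium at the centre $v_c$, I again use the vector identity to eliminate $p_c$ and rewrite each summand as
\begin{equation*}
\omega_{c,i}(p_i-p_c)=\frac{p_i-p_{i-1}}{[p_{i-1}p_ip_c]}+\frac{p_i-p_{i+1}}{[p_ip_{i+1}p_c]},
\end{equation*}
after which $\sum_{i=1}^n\omega_{c,i}(p_i-p_c)$ telescopes to zero around the base cycle. This proves $\omega$ is an equilibrium stress.

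For uniqueness up to scaling, I revisit the equilibrium at a base vertex $v_i$ for an \emph{arbitrary} equilibrium stress. Since $p_{i-1},p_i,p_c$ and $p_i,p_{i+1},p_c$ are each noncollinear, $\{p_{i-1}-p_i,\,p_c-p_i\}$ is a basis of $\R^2$ and $p_{i+1}-p_i=\alpha_i(p_{i-1}-p_i)+\beta_i(p_c-p_i)$ with $\alpha_i\neq0$ (else $p_i,p_{i+1},p_c$ would be collinear). Plugging this into the equilibrium relation and matching coefficients forces $\omega_{i-1,i}=-\alpha_i\,\omega_{i,i+1}$ and $\omega_{i,c}=-\beta_i\,\omega_{i,i+1}$. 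Hence the single value $\omega_{1,2}$ propagates around the cycle (each step invertible because $\alpha_i\neq0$) to determine every cycle stress, and then every spoke stress; the solution space is therefore at most one-dimensional, and since the explicit $\omega$ above is a nonzero member it is exactly one-dimensional.

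The only mildly delicate point is sign and orientation bookkeeping in the determinant identities, so that the formula emerges with exactly the stated signs (and the stated choice of which face is ``left'') rather than their negatives; everything else is a short computation. Conceptually the construction is nothing but the Maxwell--Cremona stress of the ``inverted cone'' lift $u_i=(p_i,0)^T$, $u_c=(p_c,-1)^T$ of the wheel read off via Eq.~\eqref{eq:ProjEqui}, which is why such an $\omega$ must exist and be unique; the direct argument above simply avoids the bookkeeping of treating the $n$-gon outer face.
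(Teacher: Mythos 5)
Your proof is correct, and it takes a genuinely different route from the paper. The paper obtains the formula as a special case of the canonical stress of a projection (Eq.~\eqref{eq:ProjEqui}): it lifts the centre to height $1$, keeps the base at height $0$, triangulates the base face so that the formula applies, observes that the auxiliary edges are flat and hence carry zero stress, and reads off the stated expression; uniqueness is then inherited from the one-dimensionality of the space of polyhedral liftings of a wheel via Maxwell--Cremona. You instead verify the equilibrium condition directly from the two homogenization (Cramer) identities $[bcd]a-[acd]b+[abd]c-[abc]d=0$ and $[bcd]-[acd]+[abd]-[abc]=0$ -- the base-vertex computation and the telescoping at the centre both check out -- and you prove uniqueness by elementary linear algebra: at each base vertex the pair $\{p_{i-1}-p_i,\,p_c-p_i\}$ is a basis, $\alpha_i\neq 0$ by noncollinearity of $p_i,p_{i+1},p_c$, so one rim value propagates around the cycle and determines all rim and spoke stresses. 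What your argument buys is self-containedness: it avoids invoking the Maxwell--Cremona correspondence and the canonical-stress formula (whose applicability the paper has to patch by triangulating the nontriangular base face), and it makes the sign bookkeeping explicit and checkable. What the paper's route buys is conceptual economy and consistency with the rest of Section~4, where the same lifting picture is reused (e.g.\ the coefficients in Theorem~\ref{thm:wheelDecStress} are exactly lifting heights), so deriving the atomic stress as a canonical stress of a lift is the natural fit there. Your closing aside about the ``inverted cone'' lift with $u_c=(p_c,-1)^T$ is not load-bearing and its sign depends on the orientation convention in Eq.~\eqref{eq:ProjEqui} (the paper lifts the centre to $+1$); this does not affect the validity of your direct verification.
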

\begin{proof}
 Let $W^l$ be the lifting of $W$ such that $z_c=1$ and $z_i=0$ for $1\le i\le n$.
 Then $W$ is the orthogonal projection of $W^l$ 
 and we can compute the canonical equilibrium stress given by Eq.~\eqref{eq:ProjEqui} on it. 
 However, the canonical equilibrium stress is defined only for simplicial polytopes.
 To make it applicable to the wheel, we arbitrarily triangulate the possibly nontriangular ``base face" $p_1,\ldots p_n$.
 Since the newly introduced edges are flat in the lifting $W^l$, 
 the canonical equilibrium stress is zero on these edges. 
 Thus only the original edges of $W$ carry the stress.
 To finish the proof we note that the stress from the statement of the lemma exactly coincides
 with the computed canonical equilibrium stress, thus it is an equilibrium stress.
 The space of equilibrium stresses on a wheel is 1-dimensional,
 since the space of polyhedral liftings of the wheel is 1-dimensional.
\end{proof}

\begin{definition}
\begin{enumerate}
\item In the setup of Lemma~\ref{lm:pyrstress}, we call the equilibrium stress $\omega$
the \emph{small atomic stress} of the wheel $W$ and denote it as $\omega^a(W)$.
\item We call the stress  $\omega^A(W)$ that is  obtained by the multiplication of $\omega^a(W)$ by the factor
 $\prod_{1\le j\le n} [p_jp_{j+1}p_c]$, the \emph{large atomic stress} of $W$. 
\end{enumerate}
\end{definition}
We note that the large atomic stresses are products of $\deg(v_c)-1$
triangle areas multiplied by $2$, and thus, 
all stresses $\omega_{ij}^A(W)$ are integers if $W$ is realized with integer coordinates.

\begin{theorem}[Wheel-Decomposition Theorem for Equilibrium Stresses]
\label{thm:wheelDecStress}
 Let $G$ be a triangulation and $\mathbf p=(p_i)_{1\le i\le n}$ be an embedding of $G$ to $\R^2$
 such that every face of $G$ is realized as a nondegenerate triangle.
 Then every equilibrium stress $\omega$ on $\mathbf p$
 can be expressed as a linear combination of the small atomic 
 stresses on the wheels $(W_i)_{1\le i\le n}$
 $$
 \omega=\sum_{1\le i\le n}\alpha_{i}\omega^a(W_i).
 $$
The decomposition is not unique:
 valid coefficients $\alpha_i$ are given by the heights (i.e., $z$-coordinates) of the corresponding vertices $p_i$
in any of the Maxwell--Cremona liftings of $\mathbf p$ induced by $\omega$. 
\end{theorem}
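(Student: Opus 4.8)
The plan is to realize $\omega$ as the canonical equilibrium stress of a polyhedral lifting of $\mathbf p$, to decompose that lifting into the elementary ``tent'' liftings over the stars of the individual vertices, and to exploit that the stress map is \emph{linear} in the lifting, so that the corresponding atomic stresses combine with the same coefficients. By the Maxwell--Cremona correspondence recalled above, there is a lifting $\mathbf u=(u_i)$ of $\mathbf p$, with $u_i=(p_i,z_i)^T$, whose canonical equilibrium stress, given by Eq.~\eqref{eq:ProjEqui}, equals $\omega$; moreover this lifting is unique up to adding an affine function to the heights $z_i$. We will show that $\alpha_i:=z_i$ works.

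First I would decompose the piecewise-linear height function $h$ of the lifting (defined on the planar region covered by $\mathbf p$, linear on each triangle) as $h=\sum_{i=1}^{n}z_i\lambda_i$, where $\lambda_i$ is the hat function of $v_i$: the piecewise-linear function that is $1$ at $p_i$, is $0$ at every other vertex, and is linear on every face. This identity holds because both sides are linear on each triangle and agree at all vertices. Each $\lambda_i$ is itself a polyhedral lifting of $\mathbf p$ (all faces are triangles, hence automatically planar), supported on the union of the faces incident to $v_i$ and identically $0$ elsewhere. Next I would record the crucial linearity: in Eq.~\eqref{eq:ProjEqui} the numerator $[u_iu_ju_ku_l]$ is a determinant that is linear in the row of heights $(z_i,z_j,z_k,z_l)$ and vanishes when all heights are $0$, while the denominator involves only $\mathbf p$; hence ``height vector $\mapsto$ canonical equilibrium stress'' is a linear map. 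As the height vector of $\mathbf u$ is $\sum_i z_i e_i$ (with $e_i$ the $i$-th standard basis vector of $\R^n$), its stress equals $\sum_i z_i$ times the stress of the lifting with height vector $e_i$ --- that is, of $\lambda_i$.

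The core step is to identify the canonical equilibrium stress of $\lambda_i$ with the small atomic stress $\omega^a(W_i)$ extended by zeros. Since $\lambda_i$ has zero gradient on every triangle not containing $v_i$, its stress vanishes on every edge whose two incident faces both avoid $v_i$; these are exactly the edges outside the wheel $W_i$, so the stress of $\lambda_i$ is supported on the spokes $(v_iv_m)$ and the rim edges $(v_mv_{m+1})$ ($v_m,v_{m+1}\in N(G,v_i)$) of $W_i$. On these edges the stress is a local quantity, and restricted to the star of $v_i$ the lifting $\lambda_i$ is precisely the lifting $W_i^l$ used in the proof of Lemma~\ref{lm:pyrstress} (center at height $1$, base cycle at height $0$). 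For a spoke $(v_iv_m)$ both incident faces lie in the star and carry the $W_i^l$-heights, so expanding the $4\times4$ determinant of Eq.~\eqref{eq:ProjEqui} along the height row returns, after a short computation with area brackets, exactly the spoke value of Lemma~\ref{lm:pyrstress}. For a rim edge $(v_mv_{m+1})$ the incident faces in $\mathbf p$ are $(v_iv_mv_{m+1})$ and some $(v_mv_{m+1}w)$ with $z_w=0$; expanding $[u_iu_mu_{m+1}u_w]$ along the height row gives $[p_mp_{m+1}p_w]$ up to sign, which cancels the factor $[p_wp_{m+1}p_m]$ in the denominator, so the stress on $(v_mv_{m+1})$ is $-1/[p_mp_{m+1}p_i]$ --- independent of $w$ and equal to $\omega^a(W_i)_{m,m+1}$ (this is precisely why triangulating the base face in the proof of Lemma~\ref{lm:pyrstress} is harmless). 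Hence the stress of $\lambda_i$ equals $\omega^a(W_i)$, and combining with the previous paragraph yields $\omega=\sum_{i=1}^{n}z_i\,\omega^a(W_i)$.

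I expect the main obstacle to be the sign bookkeeping: arguing that the canonical stress genuinely depends only on the two incident lifted triangles (so that the different far triangle across a rim edge is irrelevant) and keeping the left/right face convention of Eq.~\eqref{eq:ProjEqui} consistent with the cyclic labeling of the base in Lemma~\ref{lm:pyrstress}, so that the cancellations produce the stated formulas and not their negatives. The hypothesis that every face of $\mathbf p$ is a nondegenerate triangle is used twice --- to make Eq.~\eqref{eq:ProjEqui} well defined, and to ensure that each triangle of every $W_i$ (being a face of $G$ at $v_i$) is nondegenerate, so that Lemma~\ref{lm:pyrstress} applies. Finally, non-uniqueness is automatic: any two liftings inducing $\omega$ differ by $z_i\mapsto z_i+\langle a,p_i\rangle+b$, a $3$-parameter family, so the weight vector $(z_i)$ --- and hence any $(\alpha_i)$ read off a Maxwell--Cremona lifting induced by $\omega$ --- is determined only modulo this family.
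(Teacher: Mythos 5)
Your argument is correct and is essentially the paper's own proof: the paper likewise invokes the linearity of the Maxwell--Cremona correspondence between liftings and equilibrium stresses and the fact (Lemma~\ref{lm:pyrstress}) that $\omega^a(W_i)$ is the canonical stress of the ``lift $p_i$ to height $1$'' hat lifting, so that the heights $z_i$ serve as coefficients. You merely spell out the details the paper leaves implicit (linearity of Eq.~\eqref{eq:ProjEqui} in the heights, the rim-edge cancellation showing the far triangle is irrelevant, and the affine-function ambiguity behind non-uniqueness), which is a fair but not different route.
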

\begin{proof}

Maxwell--Cremona lifting procedure defines a linear isomorphism between 
the linear spaces of equilibrium stresses and of polyhedral liftings of a planar embedding of a graph.

As shown in Lemma~\ref{lm:pyrstress}, the atomic stress $\omega^a(W_i)$ corresponds to a lifting that lifts the vertex
$p_i$ to the height 1 and leaves all the other vertices untouched. Thus the stress that lifts the $i$-th vertex to $z_i$ is
exactly the linear combination $\sum_{1\le i\le n}z_i \omega^a(W_i)$.

\end{proof}

\subsection{An Efficient Reverse of the Maxwell--Cremona Lifting}
\label{subsec:WDreverseMC:reverseMC}

%
 The direct way to reverse the Maxwell--Cremona lifting procedure would be to project
 the 3d embedding of a graph to a plane and to calculate stresses using Eq.~\eqref{eq:ProjEqui}. If the 3d embedding has polynomial size integer coordinates,
 the calculated stresses are, generally speaking, rational, and 
 when scaled to integers may increase by an exponential factor.
 The following theorem provides a more careful method by allowing a small perturbation of the canonical equilibrium stress as 
 given by Eq.~\eqref{eq:ProjEqui}.
 
 %

\begin{theorem}[Reverse of the Maxwell--Cremona Lifting]
 \label{thm:reverseMC}
Let $\mathbf u = (u_i)_{1\le i\le n}$ be an embedding of a triangulation $G$ into $\Z^3$
such that none of the planes supporting the faces of $G$ is orthogonal to the plane $\{z=0\}$
and no pair of adjacent faces is coplanar.
Let $\mathbf p = (p_i)_{1\le i\le n}$ be the orthogonal projection of $\mathbf u$ to the plane $\{z=0\}$.
Then one can construct an integer equilibrium stress $\omega$ 
on $\mathbf p$ such that
$$
|\omega_{ij}|<8\cdot (2\max_{i\le n}|u_i|)^{2\maxdeg+5}
$$
and $\sign(\omega_{ij})=\sign(\widetilde\omega_{ij})$ for the canonical equilibrium stress $\widetilde\omega$ on $\mathbf p$ as defined by Eq.~\eqref{eq:ProjEqui}. 
\end{theorem}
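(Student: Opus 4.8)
The plan is to expand the canonical equilibrium stress $\widetilde\omega$ in the basis of \emph{large} atomic stresses and then replace its (rational) coefficients by nearby integers obtained by rounding a common scalar multiple. To begin, observe that $\mathbf u$ is itself a polyhedral lifting of its projection $\mathbf p$: since no plane supporting a face of $G$ is orthogonal to $\{z=0\}$, every face of $\mathbf p$ is a nondegenerate triangle, so $\mathbf p$ is an admissible input for the Maxwell--Cremona correspondence and for Theorem~\ref{thm:wheelDecStress}, and $\widetilde\omega$ (well-defined precisely because no face of $\mathbf p$ is degenerate) is exactly the equilibrium stress induced by the lifting $\mathbf u$. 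Applying the Wheel-Decomposition Theorem to this lifting, whose heights are the integers $z_c:=(u_c)_z$, gives
$$
\widetilde\omega=\sum_{1\le c\le n} z_c\,\omega^a(W_c)=\sum_{1\le c\le n}\frac{z_c}{D_c}\,\omega^A(W_c),
$$
where $D_c:=\prod_{j}[p_jp_{j+1}p_c]$ is the (nonzero integer) product of the brackets over the triangles of $W_c$, so that $\omega^a(W_c)=\omega^A(W_c)/D_c$.

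Now fix a positive integer $\lambda$ (to be chosen), let $m_c$ be an integer nearest to $\lambda z_c/D_c$, and set
$$
\omega:=\sum_{1\le c\le n} m_c\,\omega^A(W_c).
$$
Since the $\omega^A(W_c)$ have integer entries (the coordinates are integral) and the $m_c$ are integers, $\omega$ is an integer equilibrium stress on $\mathbf p$; it remains to pick $\lambda$ so that $\omega$ inherits the signs of $\widetilde\omega$ and has small entries. For the sign condition, fix an edge $e=(v_iv_j)$ bordering the faces $(v_iv_jv_k)$ and $(v_iv_jv_l)$. Only the four wheels $W_i,W_j,W_k,W_l$ contain $e$, and each $\omega^A(W_c)_e$ is a product of $\deg(v_c)-1\le\maxdeg-1$ of the brackets $[p_ap_bp_c]$, hence $|\omega^A(W_c)_e|\le B^{\maxdeg-1}$ with $B:=\max_{a,b,c}|[p_ap_bp_c]|$. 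Together with $|m_c-\lambda z_c/D_c|\le\tfrac12$ this gives
$$
|\omega_e-\lambda\widetilde\omega_e|=\Bigl|\sum_{c\in\{i,j,k,l\}}\bigl(m_c-\tfrac{\lambda z_c}{D_c}\bigr)\omega^A(W_c)_e\Bigr|\le 2\,B^{\maxdeg-1}.
$$
On the other hand $\widetilde\omega_e=[u_iu_ju_ku_l]/([p_ip_jp_k][p_lp_jp_i])$ has a nonzero integer numerator --- nonzero because the two adjacent faces are not coplanar --- so $|\widetilde\omega_e|\ge 1/B^2$. Hence, choosing $\lambda$ to be the least integer exceeding $2B^{\maxdeg+1}$ yields $|\omega_e-\lambda\widetilde\omega_e|<\lambda|\widetilde\omega_e|$, so $\sign(\omega_e)=\sign(\lambda\widetilde\omega_e)=\sign(\widetilde\omega_e)$.

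Finally, $|\omega_e|\le\lambda|\widetilde\omega_e|+2B^{\maxdeg-1}\le\lambda\,|[u_iu_ju_ku_l]|+2B^{\maxdeg-1}$, and since the relevant triangle, resp.\ tetrahedron, lies in an axis-aligned box of side $2\max_i|u_i|$, elementary area and volume bounds give $B\le(2\max_i|u_i|)^2$ and $|[u_iu_ju_ku_l]|\le 2(2\max_i|u_i|)^3$; with $\lambda\le 2B^{\maxdeg+1}+1$ this yields $|\omega_{ij}|<8\,(2\max_i|u_i|)^{2\maxdeg+5}$, as claimed. The crux of the argument is the choice of $\lambda$: it must be large enough that rounding cannot flip any sign, which is where the lower bound $|\widetilde\omega_e|\ge 1/B^2$ --- supplied by the integrality of the oriented volume $[u_iu_ju_ku_l]$ --- is indispensable, yet the final bound still grows only polynomially (with exponent $2\maxdeg+5$ rather than something exponential in $n$) precisely because the wheel cover is local, so only four wheels meet any edge, and because each large atomic stress is a product of at most $\maxdeg-1$ bracket-areas. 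Tracking the numerical constants in this last step is the only remaining, and purely routine, work.
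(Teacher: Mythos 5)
Your proof is correct and follows essentially the same route as the paper: expand the canonical stress in the large atomic wheel stresses (with coefficients coming from the lifting $\mathbf u$ itself), scale by a factor chosen — via the lower bound $|\widetilde\omega_e|\ge 1/B^2$ from integrality and non-coplanarity, and the fact that exactly four wheels meet any edge — so that rounding the coefficients to integers cannot flip any sign, then bound the entries. The only differences are cosmetic (nearest-integer rounding and an explicit a priori $\lambda$ in place of the paper's factor $C=4\max_{i,j,k}|\omega^A_{ij}(W_k)|/\min_{i,j}|\widetilde\omega_{ij}|$ with floor rounding), and the constant-tracking you defer does yield the stated bound.
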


\begin{proof}
Let $\gridsize = \max_{i,j}|u_i-u_j|$ be the size of the grid containing the initial embedding.
We start with the canonical equilibrium stress $\widetilde\omega$ as specified by Eq.~\eqref{eq:ProjEqui} 
 for the embedding $\mathbf p$. Since all the coordinates are integers, 
 and the embedding $\mathbf u$ has no flat edges, all stresses are bounded by
$$\frac{1}{\gridsize^4}\le \frac{1}{|[p_ip_jp_k]||[p_lp_jp_i]|}\le|\widetilde\omega_{ij}|\le|[u_iu_ju_ku_l]|\le \gridsize^3.$$

We are left with making these stresses integral 
while preserving a polynomial bound.
The faces of $\mathbf u$ are nonvertical, thus the faces of $\mathbf p$ are nondegenerate and
we may apply the Wheel-Decomposition Theorem to the stress $\widetilde\omega$:
we rewrite it as a linear combination of the \emph{large} atomic stresses 
of the wheels,
 $$
 \widetilde\omega = \sum_{1\le k\le n}\alpha_k \omega^A(W_k).
 $$
We remark that since we use \emph{large} atomic stresses, coefficients $\alpha_k$ are not the $z$-coordinates of $u_k$,
but these $z$-coordinates divided by the multiplicative factor from the definition of the large atomic stress.
 Since all the points $p_i$ have integer coordinates, the large atomic stresses are integers as well.
 Moreover, each of them, as a product of $\deg(v_k)-1$ triangle areas, is bounded by
 $|\omega^A_{ij}(W_k)|\le \gridsize^{2(\maxdeg-1)}$. 
 
 To make the $\widetilde\omega_{ij}$s integral  we round the coefficients $\alpha_k$ down.
To guarantee that the rounding does not alter the signs of the stress,
 we scale the atomic stresses (before rounding) with the factor
 $$C=4\max_{i,j,k}|\omega_{ij}^A(W_k)|/\min_{i,j}|\widetilde\omega_{ij}|$$
 and define as the new stress:
  $$
  \omega := \sum_{1\le k\le n} \lfloor C\alpha_k \rfloor \omega^A(W_k).
  $$
 Clearly, 
  \begin{align*}
  |\omega_{ij}-C\widetilde\omega_{ij}|
  &=\left|\sum_{1\le k\le n}(\lfloor C\alpha_k\rfloor\ -C\alpha_k)\omega_{ij}^A(W_k)\right|\\
  &<\sum_{1\le k\le n}|\omega_{ij}^A(W_k)|
  \le4\max_{i,j,k}|\omega_{ij}^A(W_k)|
  =C\min_{i,j}|\widetilde\omega_{ij}|\le C|\widetilde\omega_{ij}|,
  \end{align*}
where the third inequality holds since exactly four wheels participate in the
wheel decomposition of every single edge.
Thus, $\sign(\omega_{ij})=\sign(C\widetilde\omega_{ij})=\sign(\widetilde\omega_{ij})$. From the last equation it also follows that none of the stresses 
$\omega_{ij}$ are zero.

 Therefore,  the constructed equilibrium stress $\omega$ is integral and of the same sign structure as the canonical equilibrium stress.
  We conclude the proof with an upper bound on its size.
 Since $C<4\gridsize^{2(\maxdeg-1)} \gridsize^{4}$, 
 \begin{align*}
|\omega_{ij}|
&\le \left|\sum_{1\le k\le n}(C\alpha_k\pm1)\omega_{ij}^A(W_k)\right|
  \le C|\widetilde\omega_{ij}| \!+\!\sum_{1\le k\le n} |\omega_{ij}^A(W_k)|\\ 
&\le C\max|\widetilde\omega_{ij}| \!+\! 4\max|\omega_{ij}^A(W_k)|
 \le 4\gridsize^{2\maxdeg+2}\cdot \gridsize^3 + 4\gridsize^{2\maxdeg-2}
 \le 8\gridsize^{2\maxdeg+5}.
  \end{align*}
\end{proof}

\subsection{The duality transform}
\label{subsec:WDreverseMC:duality}
As an instant by-product of the techniques developed in this section
we can now prove the duality transform for simplicial polytopes with
special geometry and a degree-3 vertex.
%
\begin{theorem}
\label{thm:simplicialDualitySpecial}
Let $G_{\uparrow}$ be a triangulation and let $\mathbf u = (u_i)_{2\le i\le n}$ be its realization
as a convex polytope with integer coordinates, such that
its orthogonal projection into the plane $\{z=0\}$ 
is a planar embedding $(p_i)_{2\le i\le n}$ of $G_{\uparrow}$ with 
boundary face $(v_2v_3v_4)$.
Then the exists a realization
$(\phi_{f})_{f\in F(G)}$ of a graph dual to $G=\Stack(G_{\uparrow};v_1;v_2v_3v_4)$ 
with integer coordinates bounded by
$$
|\phi_f|=O(n\max|u_i|^{2\maxdeg+7}).
$$
%
\end{theorem}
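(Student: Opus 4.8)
The plan is to reduce the statement to Theorem~\ref{thm:frameworkL} by extracting from the convex polytope $\mathbf u$ a planar drawing of $G_{\uparrow}$ that carries a positive \emph{integer} equilibrium stress of polynomial size; producing such a stress is exactly what Theorem~\ref{thm:reverseMC} (the reverse of the Maxwell--Cremona lifting) is designed for.

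First I would record the geometry of $\mathbf u$. Since the orthogonal projection of the convex polytope $\mathbf u$ onto $\{z=0\}$ is assumed to be a straight-line planar embedding $\mathbf p=(p_i)_{2\le i\le n}$ of $G_{\uparrow}$ with outer face $(v_2v_3v_4)$, the polytope must be a ``tent'': the face $(v_2v_3v_4)$ lies in a horizontal plane and the remaining faces form a convex cap over the triangle $p_2p_3p_4$ (their union is the graph of a convex function on that triangle). In particular, every face of $G_{\uparrow}$ projects to a nondegenerate triangle, no supporting plane of a face is orthogonal to $\{z=0\}$ (the inner faces project injectively, the bottom face is horizontal), and no two adjacent faces are coplanar since $\mathbf u$ is in convex position. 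Also $\mathbf p$ has integer coordinates with $\max|p_i|\le\max|u_i|$, being obtained from $\mathbf u$ by deleting the $z$-coordinate. Hence all hypotheses of Theorem~\ref{thm:reverseMC} are met.

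Next I would apply Theorem~\ref{thm:reverseMC} to $\mathbf u$, obtaining an integer equilibrium stress $\omega$ on $\mathbf p$ with
$$
|\omega_{ij}|<8\,(2\max|u_i|)^{2\maxdeg+5}
$$
and $\sign(\omega_{ij})=\sign(\widetilde\omega_{ij})$, where $\widetilde\omega$ is the canonical equilibrium stress of the projection given by Eq.~\eqref{eq:ProjEqui}. The point is that, because the cap part of $\mathbf u$ is the graph of a convex function over the triangle $p_2p_3p_4$, its Maxwell--Cremona stress $\widetilde\omega$ is strictly positive on every interior edge of $\mathbf p$; by the sign-preservation property, $\omega$ is then positive on every edge that does not belong to $(v_2v_3v_4)$. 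Thus $\omega$ is a positive integer equilibrium stress for the integer planar embedding $\mathbf p$ of $G_{\uparrow}$ with distinguished boundary face $(v_2v_3v_4)$. Finally I would invoke Theorem~\ref{thm:frameworkL} with $G=\Stack(G_{\uparrow};v_1;v_2v_3v_4)$, the embedding $\mathbf p$, and the stress $\omega$, producing a realization $(\phi_f)_{f\in F(G)}$ of the dual graph as a convex polytope with integer coordinates and $|\phi_f|=O(n\cdot\max|\omega_{ij}|\cdot\max|p_i|^2)$; substituting $\max|\omega_{ij}|=O(\max|u_i|^{2\maxdeg+5})$ and $\max|p_i|^2\le\max|u_i|^2$ yields $|\phi_f|=O(n\max|u_i|^{2\maxdeg+7})$, as desired.

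The main obstacle I anticipate is the positivity step: Theorem~\ref{thm:reverseMC} only guarantees that $\omega$ copies the \emph{sign pattern} of the canonical stress, so one must argue separately that this sign pattern is positive precisely on the non-boundary edges, which in turn rests on the classical fact that a convex cap induces a Maxwell--Cremona stress positive on all interior edges. Checking the remaining hypotheses of Theorem~\ref{thm:reverseMC} (non-verticality of every face plane, including the horizontal bottom face, and absence of coplanar adjacent faces) is routine once one observes that the assumption ``the projection is a planar embedding of $G_{\uparrow}$'' forces the tent structure.
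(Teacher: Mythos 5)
Your proposal is correct and is essentially the paper's own proof: the paper proves Theorem~\ref{thm:simplicialDualitySpecial} in one line by combining Theorem~\ref{thm:reverseMC} with Theorem~\ref{thm:frameworkL}, and your write-up supplies exactly the hypothesis checks (non-vertical face planes, strict convexity, positivity of the canonical stress on interior edges via the convex-cap structure) and the arithmetic $O(n\cdot\max|\omega_{ij}|\cdot\max|p_i|^2)=O(n\max|u_i|^{2\maxdeg+7})$ that this combination requires. One minor inaccuracy: the hypotheses do not force the face $(v_2v_3v_4)$ to be horizontal (only non-vertical, since its projection $p_2p_3p_4$ is the nondegenerate outer face), but nothing in your argument actually uses horizontality, so this is harmless.
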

\begin{proof}
By combining Theorem~\ref{thm:frameworkL} with Theorem~\ref{thm:reverseMC} we obtain the statement of the theorem.
\end{proof}

We remark that the algorithms following the lifting approach, e.g.~\cite{RG96},
generate embeddings that fulfill the conditions of the above theorem. 

The construction presented in this section can
be modified for general simplicial polytopes
without restrictions on the geometry
and without a degree-3 vertex.
However, such a modification requires a substantial amount
of technicalities and leads to much worse estimates
than the 3-dimensional techniques developed in the next section.

\section{A Duality Transform for General Simplicial polytopes}
\label{sec:simplicialDuality}

Unlike the transforms for stacked polytopes (Theorem~\ref{thm:truncated}) and for a special 
case of simplicial polytopes with
a degree-3 vertex (Theorem~\ref{thm:simplicialDualitySpecial}),
the algorithms of this section are intrinsically 3-dimensional and
use neither planar (flat) embeddings in intermediate steps, nor 2-dimensional equilibrium stresses.

\subsection{Canonical CDV matrices}
\label{subsec:simplicialDuality:canonicalCDV}

We begin with an accurate description of the {\em canonical CDV matrix}
for a spacial embedding of a graph, first shortly mentioned in Sect.~\ref{sec:lovasz}.
The following construction is due to Lov{\'a}sz~\cite{L01},
we cite the proof due to its simplicity.

\begin{lemma}[\cite{L01}, Sect.5]
\label{lem:6}
Let $\P=(u_i)_{1\le i\le n}$ be an embedding of a graph $G$
into $\R^3$ as a polygonal surface (straight-line embedding
with each face realized as a flat polygon) such that none of the planes
supporting the faces of $\P$ passes through the origin.
Let $(\phi_f)_{f\in F(G)}$ be the vectors 
normal to the faces of $\P$ normalized so that 
$\langle\phi_f, x\rangle=1$ for every point $x$ of the face $f$.

Then there exists a unique CDV matrix for $G$
such that Eq.~\eqref{eq:phi} holds
for each pair of dual edges $(u_i,u_j)$ and $(\phi_f,\phi_g)$.
\end{lemma}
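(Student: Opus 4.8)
The plan is to construct the CDV matrix $M$ directly from the geometry of the surface $\P$ and then verify the two defining properties of a CDV matrix together with Eq.~\eqref{eq:phi}, finally checking uniqueness. First I would fix, for each face $f\in F(G)$, the normal vector $\phi_f$ as described: since the supporting plane of $f$ does not pass through the origin, there is a unique vector $\phi_f$ with $\langle\phi_f,x\rangle=1$ for all $x$ in the affine hull of $f$. For an edge $(v_iv_j)\in E(G)$ separating faces $f$ (to the left of $\overrightarrow{u_iu_j}$) and $g$ (to the right), the vector $\phi_f-\phi_g$ is orthogonal to the common edge direction $u_j-u_i$, and since both $\phi_f$ and $\phi_g$ pair to $1$ with $u_i$ and with $u_j$, the difference $\phi_f-\phi_g$ annihilates both $u_i$ and $u_j$; hence $\phi_f-\phi_g$ is a scalar multiple of $u_i\times u_j$. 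I would \emph{define} $M_{ij}$ to be exactly that scalar, i.e. the unique real number with $\phi_f-\phi_g=M_{ij}(u_i\times u_j)$. This is symmetric in $i,j$ because swapping the roles of $i$ and $j$ swaps $f$ and $g$ and also flips the sign of $u_i\times u_j$. For $i\ne j$ with $(v_iv_j)\notin E(G)$ set $M_{ij}=0$; this gives the first CDV condition for free.

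The main work is the CDV equilibrium condition $\sum_j M_{ij}u_j=0$, or equivalently choosing the diagonal entries $M_{ii}$ so that~\eqref{eq:CDV} holds. Fix a vertex $v_i$ and consider the faces around it in cyclic order, $f_1,\dots,f_d$, with $f_{k}$ and $f_{k+1}$ sharing the edge from $u_i$ to some neighbor $u_{j_k}$. Telescoping the differences $\phi_{f_{k+1}}-\phi_{f_k}=\pm M_{ij_k}(u_i\times u_{j_k})$ around the full cycle gives $0=\sum_k M_{ij_k}(u_i\times u_{j_k})=u_i\times\bigl(\sum_{v_j\in N(G,v_i)}M_{ij}u_j\bigr)$, so the vector $w_i:=\sum_{v_j\in N(G,v_i)}M_{ij}u_j$ is parallel to $u_i$. (Here I should be careful about orientations and about the face at infinity / the fact that the faces around $v_i$ really do close up into a cycle of $G^*$ — this is exactly the argument already used in the proof of Lemma~\ref{lem:LoLm1}, read in reverse.) Since $u_i\ne 0$ (its supporting planes don't pass through the origin, so in particular $u_i$ itself is nonzero), there is a unique scalar $-M_{ii}$ with $w_i=-M_{ii}u_i$, and I set $M_{ii}$ to that value. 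Then $\sum_j M_{ij}u_j=w_i+M_{ii}u_i=0$, which is the second CDV condition. Together with symmetry and the support condition, $M$ is a CDV matrix, and by construction Eq.~\eqref{eq:phi} holds for every pair of dual edges.

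For uniqueness, suppose $M$ and $M'$ are two CDV matrices both satisfying~\eqref{eq:phi} with the given $(\phi_f)$. For every edge $(v_iv_j)\in E(G)$ we get $M_{ij}(u_i\times u_j)=\phi_f-\phi_g=M'_{ij}(u_i\times u_j)$; since adjacent faces are not coplanar the edge is not flat, so $u_i\times u_j\ne 0$ and hence $M_{ij}=M'_{ij}$. The off-diagonal non-edge entries agree (both zero), and the diagonal entries are then forced by the equilibrium condition $\sum_j M_{ij}u_j=0$ since $u_i\ne 0$. Hence $M=M'$. I expect the only real subtlety to be the bookkeeping in the telescoping step — making precise that the faces incident to $v_i$ form an elementary cycle of $G^*$ and that the left/right orientation conventions in~\eqref{eq:phi} are applied consistently around that cycle — but this is routine given the 3-connected planar structure and mirrors the consistency argument already carried out in Lemma~\ref{lem:LoLm1}.
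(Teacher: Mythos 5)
Your construction follows the paper's proof step for step (define $M_{ij}$ from Eq.~\eqref{eq:phi} on the edges, set non-edge entries to zero, then choose the diagonal via the telescoping argument around each vertex), but there is one genuine gap: you never correctly establish that $u_i\times u_j\neq 0$, i.e.\ that the position vectors $u_i$ and $u_j$ are linearly independent. This fact is needed twice. In the definitional step, ``orthogonal to both $u_i$ and $u_j$, hence a scalar multiple of $u_i\times u_j$'' is only valid when $u_i$ and $u_j$ span a plane; if they were parallel, the orthogonal complement of their span is two-dimensional and $u_i\times u_j=0$, so $M_{ij}$ would be neither well defined nor unique. In the uniqueness step you do invoke $u_i\times u_j\neq 0$, but you justify it by ``adjacent faces are not coplanar, so the edge is not flat'' --- that is not a hypothesis of the lemma (coplanar adjacent faces are allowed and would simply give $M_{ij}=0$), and in any case non-coplanarity concerns $\phi_f\neq\phi_g$, not the linear independence of $u_i$ and $u_j$.

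The correct justification, which is exactly the point the paper makes, uses the hypothesis that no supporting plane passes through the origin: if $u_j=\lambda u_i$ (with $\lambda\neq 1$, since $u_i\neq u_j$ for an embedding), then the line through $u_i$ and $u_j$ consists of the points $\bigl(1+t(\lambda-1)\bigr)u_i$ and hence passes through the origin; this line lies in the supporting planes of both incident faces $f$ and $g$, contradicting the hypothesis. With this one observation inserted, your definition of $M_{ij}$, the symmetry check, the telescoping identity $u_i\times\bigl(\sum_j M_{ij}u_j\bigr)=0$, the choice of $M_{ii}$ using $u_i\neq 0$, and the uniqueness argument all go through and coincide with the paper's proof.
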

\begin{definition}
We call a matrix $M$ defined above the {\em canonical CDV matrix} of an embedding.
\end{definition}

\paragraph{\it Proof of Lemma~\ref{lem:6}}
First, we check that the left and right hand sides of Eq.~\eqref{eq:phi} are parallel vectors and $u_i\times u_j \ne 0$ and thus the equation correctly defines $M_{ij}$ for $(i,j)\in E(G)$. Indeed, for $k\in \{i,j\}$
$$
\langle\phi_f-\phi_g, u_{k}\rangle=\langle\phi_f, u_{k}\rangle-\langle\phi_g, u_{k}\rangle=1-1=0
$$
by the chosen normalization of normals and
$$
\langle u_i\times u_j, u_{k}\rangle=0
$$
by the definition of dot product.
The vectors $u_i$ and $u_j$ span a plane since, 
if they were parallel, planes supporting both faces $f$ and $g$
would pass through the origin.
So, both sides of Eq.~\eqref{eq:phi} are orthogonal to the plane
spanned by the vectors $u_i$ and $u_j$ and thus are parallel.
Since the vectors $u_i$ and $u_j$ are not parallel, $u_i\times u_j\ne 0$.
Thus, Eq.~\eqref{eq:phi} uniquely defines $M_{ij}$
for $(i,j)\in E(G)$.

Second, we set the diagonal $(M_{ii})_{1\le i\le n}$ so that the CDV equilibrium condition holds. 
We can always achieve this due to the following observation:
First note that
\begin{align*}
\left(\sum_{v_j\in N(G,v_i)}M_{ij}u_j \right)\times u_i 
=& \!\!\! \sum_{v_j\in N(G,v_i)} \!\!\!\! M_{ij}(u_j\times u_i)= \sum_{1\le k\le \deg(v_i)}\phi_{f_k}-\phi_{f_{k+1}} = 0,
\end{align*}
where $(\phi_1, \phi_2, \ldots, \phi_{\deg(v_i)})$ is 
the cyclic sequence of faces incident to $v_i$. 
So,
$$
\sum_{v_j\in N(G,v_i)}M_{ij}u_j \parallel u_i 
$$
and since $u_i\ne 0$ there exists a unique $M_{ii}$ such that
$$
\sum_{v_j\in N(G,v_i)}M_{ij}u_j = -M_{ii}u_i.
$$
All the off-diagonal nonedge elements of $M$ are filled with zeroes.
\endproof

\subsection{CDV matrices and equilibrium stresses} 
\label{subsec:simplicialDuality:equiCDVproj}

We show now that the linear spaces of CDV matrices and of equilibrium stresses are indeed isomorphic
by presenting a projective transformation between these two spaces.
This correspondence is also of independent interest
as it highlights the projective nature of the notion of equilibrium stress.
%
%
We start by observing that the CDV matrices can be arbitrarily rescaled together with
the embedding. This result was already noted by Lov{\'a}sz~\cite{L01}; we include the proof for completeness.
The graphs in this subsection are not necessarily be planar nor 3-connected.
\begin{lemma}
\label{lm:CDVscalable}
Let $\mathbf u = (u_i)_{1\le i\le n}$ and $\mathbf r = (r_i)_{1\le i\le n}$ be two embeddings of a graph $G$
into $\R^3$ such that 
$$r_i = \lambda_i u_i, \quad \lambda_i\in \R\setminus\{0\}, \,1\le i \le n.$$

Then the map $\pr_{u\to r}:\R^{n\times n}\to \R^{n\times n}$ defined as

$$
(\pr_{u \to r}M)_{ij} = \frac1{\lambda_i\lambda_j} M_{ij}, \qquad 1\le i,j\le n
$$
is a linear isomorphism between the linear spaces of CDV matrices of $\mathbf u$ and $\mathbf r$ with 
$
\pr_{u\to r}\pr_{r\to u} = id.
$
\end{lemma}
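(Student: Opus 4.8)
The statement asserts two things: that $\pr_{u\to r}$ maps CDV matrices of $\mathbf u$ to CDV matrices of $\mathbf r$, and that $\pr_{u\to r}$ and $\pr_{r\to u}$ are mutually inverse linear maps (which, combined with the first claim applied in both directions, gives the isomorphism). I would organize the proof around these two pieces.

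\textbf{Step 1: Verifying the CDV conditions are preserved.} Let $M$ be a CDV matrix of $\mathbf u$ and set $M' := \pr_{u\to r}M$, so $M'_{ij} = M_{ij}/(\lambda_i\lambda_j)$. Symmetry of $M'$ is immediate from symmetry of $M$ and the symmetric form of the scaling factor. The support condition (condition 1 in the definition of CDV matrix) is also immediate: $M'_{ij}=0$ exactly when $M_{ij}=0$, since $\lambda_i\lambda_j\neq 0$, and the two embeddings have the same underlying graph $G$. The only substantive point is the CDV equilibrium condition (condition 2). For each $i$ I would compute
$$
\sum_{1\le j\le n} M'_{ij}\, r_j = \sum_{1\le j\le n} \frac{M_{ij}}{\lambda_i\lambda_j}\,(\lambda_j u_j) = \frac{1}{\lambda_i}\sum_{1\le j\le n} M_{ij} u_j = \frac{1}{\lambda_i}\cdot 0 = 0,
$$
using that $r_j = \lambda_j u_j$ so the $\lambda_j$ in the denominator cancels the $\lambda_j$ from $r_j$, and then that $M$ satisfies the equilibrium condition for $\mathbf u$. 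This shows $M'$ is a CDV matrix of $\mathbf r$. The same computation with the roles of $\mathbf u$ and $\mathbf r$ interchanged (and $\mu_i := 1/\lambda_i$ playing the role of the scaling factors) shows $\pr_{r\to u}$ maps CDV matrices of $\mathbf r$ to those of $\mathbf u$.

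\textbf{Step 2: Linearity and the inverse identity.} Linearity of $\pr_{u\to r}$ as a map $\R^{n\times n}\to\R^{n\times n}$ is clear since each entry is multiplied by the fixed scalar $1/(\lambda_i\lambda_j)$. For the composition, for any matrix $M$ and all $i,j$,
$$
(\pr_{r\to u}\pr_{u\to r}M)_{ij} = \frac{1}{\mu_i\mu_j}\,(\pr_{u\to r}M)_{ij} = \lambda_i\lambda_j \cdot \frac{1}{\lambda_i\lambda_j} M_{ij} = M_{ij},
$$
where $\mu_i = 1/\lambda_i$, so $\pr_{r\to u}\pr_{u\to r} = \mathrm{id}$ on all of $\R^{n\times n}$; by symmetry $\pr_{u\to r}\pr_{r\to u}=\mathrm{id}$ as well. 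Restricting these global linear isomorphisms to the subspace of CDV matrices of $\mathbf u$ (respectively $\mathbf r$), Step 1 guarantees the image lands in the CDV matrices of $\mathbf r$ (respectively $\mathbf u$), so $\pr_{u\to r}$ restricts to a linear isomorphism between the two spaces of CDV matrices with the stated inverse.

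\textbf{Main obstacle.} Honestly, there is no real obstacle here: the lemma is a routine bookkeeping verification, and the entire content is the one-line cancellation $\sum_j (M_{ij}/\lambda_i\lambda_j)(\lambda_j u_j) = \lambda_i^{-1}\sum_j M_{ij}u_j$. The only thing to be slightly careful about is keeping the domains straight — stating $\pr_{u\to r}$ first as a map on all square matrices (where linearity and the inverse identity are trivial) and only then noting it respects the CDV subspaces, rather than trying to argue surjectivity onto CDV matrices directly. I would also remark explicitly, as the paper does, that planarity and $3$-connectivity play no role, so the lemma holds for arbitrary graphs.
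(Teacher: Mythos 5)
Your proposal is correct and follows essentially the same route as the paper: the one-line cancellation $\sum_j \frac{M_{ij}}{\lambda_i\lambda_j}(\lambda_j u_j)=\frac{1}{\lambda_i}\sum_j M_{ij}u_j=0$ for the equilibrium condition, the entrywise check that $\pr_{r\to u}\pr_{u\to r}=\mathrm{id}$, and the observation that the map is linear since each entry is scaled by a fixed nonzero factor. Your extra remarks (symmetry, the support condition, restricting a global isomorphism of $\R^{n\times n}$ to the CDV subspaces) are fine and only make explicit what the paper leaves implicit.
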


\begin{proof}

We first show that for every CDV matrix $M$ of $\mathbf u$ the image 
$\pr_{u\to r}(M)$ is a CDV matrix for $\mathbf r$.
Indeed,
$$
\sum_{j=1}^{n} \frac1{\lambda_i\lambda_j}M_{ij} r_j =
\frac1{\lambda_i}\sum_{j=1}^{n}M_{ij}\frac1{\lambda_j}r_j =
\frac1{\lambda_i}\sum_{j=1}^{n}M_{ij}u_j = 0 \quad 1\le i\le n.
$$

Next, trivially, $\pr_{u\to r}\pr_{r\to u} = id$:

$$
(\pr_{r\to u}\pr_{u\to r} M)_{ij} = 
\lambda_i\lambda_j (\pr_{u\to r} M)_{ij} = 
\lambda_i\lambda_j\frac1{\lambda_i\lambda_j}  M_{ij} = 
M_{ij}.
$$

Finally, in the matrix form
$$
\pr_{u\to r}M = 
\begin{pmatrix}
1/\lambda_1,& \ldots, & 1/\lambda_n
\end{pmatrix} M 
\begin{pmatrix}
1/\lambda_1 \\ 
\cdots \\
1/\lambda_n
\end{pmatrix}
$$
and thus $\pr_{u\to r}$ is linear.
\end{proof}

The next lemma establishes an isomorphism between the linear spaces of CDV matrices and equilibrium stresses:
\begin{lemma}
\label{lm:equiCDVprojcorr}
Let $\mathbf u = (u_i)_{1\le i\le n}$ be an embedding of a graph $G$ in $\R^3$ such that none of its vertices is at the origin.
Let $\alpha$ be any plane that does not contain the origin and is not parallel to any of the vectors $u_i$.
Let $\mathbf p = (p_i)_{1\le i\le n}$ be the central projection of $\mathbf u$ to $\alpha$:
$$
p_i := \lambda_i u_i, \quad \lambda_i\in \R,\,p_i\in \alpha.
$$

Then
the map $\pr_{\alpha}: \R^{n\times n}\to \R^{E(G)}$ defined as
$$
(\pr_{\alpha} M)_{ij} = \frac1{\lambda_i}\frac1{\lambda_j} M_{ij},\quad (ij)\in E(G)
$$
is a linear isomorphism between the linear space of CDV matrices of $\mathbf u$ and the linear space of equilibrium stresses of $\mathbf p$.

In particular,
\begin{itemize}
\item for any CDV matrix $M$ of $\mathbf u$ the assignment
$$
\omega_{ij} := \frac1{\lambda_i}\frac1{\lambda_j}M_{ij}, \quad (ij)\in E(G)
$$
is an equilibrium stress for $\mathbf p$, and
\item for any equilibrium stress $\omega$ of $\mathbf p$ the assignment
$$
M_{ij} := \lambda_i\lambda_j\omega_{ij}, \quad (ij)\in E(G)
$$
can be extended in a unique way to a CDV matrix $M$ for $\mathbf u$.
\end{itemize}

\end{lemma}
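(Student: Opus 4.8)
The plan is to reduce this statement to the two facts already at our disposal: the rescaling isomorphism of Lemma~\ref{lm:CDVscalable}, and the equivalence between CDV matrices and equilibrium stresses for \emph{flat} embeddings contained in Lemma~\ref{lm:equiCDV} (parts \ref{lm:equiCDV:2} and \ref{lm:equiCDV:3}). First I would observe that $\mathbf p$ is exactly the rescaling $\mathbf r = (\lambda_i u_i)$ of $\mathbf u$ considered in Lemma~\ref{lm:CDVscalable}, with the extra property that all the $p_i$ lie on the plane $\alpha$, which does not pass through the origin. Lemma~\ref{lm:CDVscalable} then gives a linear isomorphism $\pr_{u\to r}$ between the space of CDV matrices of $\mathbf u$ and the space of CDV matrices of $\mathbf p$, acting by $(M_{ij})\mapsto(M_{ij}/(\lambda_i\lambda_j))$ on all entries, in particular on the edge entries. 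So the content that remains is to identify the space of CDV matrices of the flat embedding $\mathbf p$ with the space of equilibrium stresses of $\mathbf p$, via restriction to the edge entries.

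Next I would spell out that identification. By Lemma~\ref{lm:equiCDV}\ref{lm:equiCDV:3}, for any CDV matrix $N$ of the flat embedding $\mathbf p$ the edge entries $(N_{ij})_{(ij)\in E(G)}$ form an equilibrium stress of $\mathbf p$; this gives the forward map ``restrict to edges.'' Conversely, by Lemma~\ref{lm:equiCDV}\ref{lm:equiCDV:2}, any equilibrium stress $\omega$ of $\mathbf p$ extends to a CDV matrix of $\mathbf p$ by filling the diagonal with $N_{ii} := -\sum_{k}\omega_{ik}$ and the off-diagonal non-edge entries with zero. I would then argue that this extension is \emph{unique}: since each $p_i \neq 0$, the equation $\sum_{j} N_{ij} p_j = 0$ together with the prescribed edge entries determines $N_{ii}p_i$, hence $N_{ii}$, and the non-edge off-diagonal entries must vanish by definition of a CDV matrix. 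Therefore ``restrict to edges'' and ``extend canonically'' are mutually inverse linear bijections between the CDV matrices of $\mathbf p$ and the equilibrium stresses of $\mathbf p$. Composing with $\pr_{u\to r}$ of Lemma~\ref{lm:CDVscalable} yields precisely the claimed map $\pr_\alpha$, and its linearity is inherited from the linearity of both composed maps.

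Finally I would read off the two bulleted consequences directly from this composition: given a CDV matrix $M$ of $\mathbf u$, pushing it through $\pr_{u\to r}$ gives a CDV matrix of $\mathbf p$, whose edge entries $\lambda_i^{-1}\lambda_j^{-1}M_{ij}$ are an equilibrium stress of $\mathbf p$ by \ref{lm:equiCDV:3}; conversely, given an equilibrium stress $\omega$ of $\mathbf p$, the assignment $\lambda_i\lambda_j\omega_{ij}$ on edges extends uniquely to a CDV matrix of $\mathbf p$ (by the uniqueness just discussed), and applying $\pr_{r\to u}=\pr_{u\to r}^{-1}$ transports it to a CDV matrix of $\mathbf u$ with edge entries $\lambda_i\lambda_j\omega_{ij}$; uniqueness on $\mathbf u$ follows from uniqueness on $\mathbf p$ together with the bijectivity of $\pr_{u\to r}$. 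I expect the only genuinely delicate point to be the uniqueness of the diagonal extension — one must use that $\alpha$ misses the origin (so $p_i\neq 0$) to conclude that $\sum_{v_j\in N(G,v_i)} N_{ij} p_j$ is forced to be a specific multiple of $p_i$; everything else is bookkeeping with the two cited lemmas. I would also note in passing that the same argument shows $\pr_\alpha$ does not depend on the choice of $\alpha$ up to the natural identifications, which is why the correspondence deserves to be called projective.
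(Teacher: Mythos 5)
Your proposal is correct and follows essentially the same route as the paper: compose the rescaling isomorphism of Lemma~\ref{lm:CDVscalable} (applied to $\mathbf p = (\lambda_i u_i)$) with the restriction-to-edge-entries isomorphism for the flat embedding $\mathbf p$ coming from Lemma~\ref{lm:equiCDV}, parts~\ref{lm:equiCDV:2} and~\ref{lm:equiCDV:3}. Your explicit verification that the diagonal extension is unique (using $p_i\neq 0$) is a small detail the paper leaves implicit, but it is the same argument, not a different one.
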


\begin{proof}
Due to Lemma~\ref{lm:CDVscalable}, the map 
$\pr_{u\to p}:\R^{n\times n}\to \R^{n\times n}$
defined as $\pr_{u\to p}(M)_{ij} = \frac1{\lambda_i\lambda_j}M_{ij}$
is a linear isomorphism between the spaces of CDV matrices of $\mathbf u$ and $\mathbf p$.

Due to Lemma~\ref{lm:equiCDV} (parts~\ref{lm:equiCDV:2} and~\ref{lm:equiCDV:3}), 
the map $f: \R^{n\times n}\to \R^{E(G)}$
that is identical on the edge-entries of a CDV matrix and forgets all the nonedge entries, $f(M)_{ij} = M_{ij}$ for $(ij)\in E(G)$, is a linear isomorphism between the space of CDV matrices of $\mathbf p$ and the space of equilibrium stresses on $\mathbf p$.

To finish the proof we remark that $\pr_{\alpha} = f\cdot \pr_{u\to p}$.
\end{proof}

\subsection{Wheel-decomposition for CDV matrices}
\label{subsec:simplicialDuality:WD}
We use the correspondence between equilibrium stresses
and CDV matrices to define the concept of {\em atomic CDV matrices}
and to formulate and prove the {\em Wheel-Decomposition Theorem} for CDV matrices, 
which is the analogue of Theorem~\ref{thm:wheelDecStress} for CDV matrices.

\begin{lemma}
\label{lm:CDVwheel}
Let $W = W(v_c; v_1, \ldots v_n)$ be a wheel with the center $v_c$.
Let $\mathbf u = (u_c,u_1,\ldots u_n)$ be an embedding of $W$ to $\R^3$
such that for every $1\le i\le n$ the points $u_i$, $u_{i+1}$ and $u_c$ are noncoplanar with the origin
(as usual, we use the cyclic notation for the vertices of the base of the wheel).
Then 
\begin{enumerate}
\item\label{lm:CDVwheel:1} The assignment
$$
M^a_{ij}:=
\begin{cases}
   -\frac{1}{\det(u_iu_{i+1}u_c)},  \quad &j=i+1, 1\le i\le n,\\
   \frac{\det(u_{i-1}u_{i}u_{i+1})}
   {\det(u_{i-1}u_{i}u_c)\det(u_iu_{i+1}u_c)}, \quad &j=c, 1\le i\le n
\end{cases}
$$
can be extended in a unique way to a CDV matrix $M^a$ for $\mathbf u$;
\item\label{lm:CDVwheel:2} A CDV matrix for $\mathbf u$ is unique up to scaling;
\item\label{lm:CDVwheel:3} Let $\alpha$ be any affine plane at the distance 1
from the origin not parallel to any of $u_i$.
Let $\mathbf p$ be the central projection of $\mathbf u$ to this plane, $p_i=\lambda_i u_i\in \alpha,\,\lambda_i\in\R$. 
Then, in the notation of Lemma~\ref{lm:equiCDVprojcorr},
\begin{align*}
M^{a} &=\lambda_c \pr_{\alpha}^{-1}(\omega^a),\\
M^{A} &= \frac1{\lambda_c^n (\prod_{1\le i\le n} \lambda_i)^2} \pr_{\alpha}^{-1}(\omega^A),
\end{align*}
where $\omega^a$ and $\omega^A$ are the {\em small} and {\em large} 
atomic stresses for $\mathbf p$ correspondingly.
To compute the atomic stresses for $\mathbf p$ we suppose 
that the plane $\alpha$ is oriented in the same way 
as the plane $\{z=1\}$.
\end{enumerate}
\end{lemma}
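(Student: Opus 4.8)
The plan is to transport everything through the central projection $\pr_\alpha$ of Lemma~\ref{lm:equiCDVprojcorr}, reducing the three parts to the already-established 2d facts about atomic stresses on wheels (Lemma~\ref{lm:pyrstress} and the definition of $\omega^a$, $\omega^A$). First I would fix an auxiliary plane $\alpha$ at distance $1$ from the origin, not parallel to any $u_i$; such a plane exists since only finitely many directions are excluded, and the noncoplanarity-with-the-origin hypothesis guarantees that for every $i$ the images $p_i,p_{i+1},p_c$ are noncollinear (coplanarity of $u_i,u_{i+1},u_c$ with $0$ is exactly the degeneracy condition of Lemma~\ref{lm:pyrstress} after projecting), so the 2d atomic stresses $\omega^a(W),\omega^A(W)$ for $\mathbf p$ are well-defined. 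Then for part~\ref{lm:CDVwheel:1} and~\ref{lm:CDVwheel:2} I would simply pull $\omega^a$ back: by Lemma~\ref{lm:equiCDVprojcorr} the assignment $M_{ij}=\lambda_i\lambda_j\omega_{ij}$ on edges extends uniquely to a CDV matrix for $\mathbf u$, and since the space of equilibrium stresses on the wheel $\mathbf p$ is $1$-dimensional (Lemma~\ref{lm:pyrstress}) and $\pr_\alpha$ is a linear isomorphism, the space of CDV matrices of $\mathbf u$ is $1$-dimensional as well, giving uniqueness up to scaling.

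The real content is matching the explicit formula. I would start from $p_i=\lambda_i u_i$ and substitute into the $2\times2$-style bracket expressions $[p_ap_bp_c]=\det(p_a,p_b,p_c)$ that appear (through homogenization) in Lemma~\ref{lm:pyrstress}; concretely, writing the 2d oriented area in the plane $\alpha$ as the $3\times3$ determinant of the three position vectors, one gets $\det(p_{i-1},p_i,p_{i+1})=\lambda_{i-1}\lambda_i\lambda_{i+1}\det(u_{i-1},u_i,u_{i+1})$ and similarly $\det(p_i,p_{i+1},p_c)=\lambda_i\lambda_{i+1}\lambda_c\det(u_i,u_{i+1},u_c)$. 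Plugging these into the formula for $\omega^a(W)$ from Lemma~\ref{lm:pyrstress} and then applying $\pr_\alpha^{-1}$, i.e.\ multiplying the $(ij)$ entry by $\lambda_i\lambda_j$, the $\lambda$-factors on the base edges ($j=i+1$) collapse to a single $1/\lambda_c$, and on the spokes ($j=c$) they collapse to a single $1/\lambda_c$ as well; hence $M^a=\lambda_c\,\pr_\alpha^{-1}(\omega^a)$ reproduces exactly the stated $M^a_{ij}$ with $\det(u\cdots u)$ in place of $[p\cdots p]$. The large atomic matrix $M^A$ is then obtained by tracking the extra scalar factor $\prod_j[p_jp_{j+1}p_c]=\lambda_c^{\,n}\big(\prod_i\lambda_i\big)^2\prod_j\det(u_ju_{j+1}u_c)$ that multiplies $\omega^a$ to give $\omega^A$; dividing by it yields the claimed $M^A=\frac{1}{\lambda_c^{\,n}(\prod_i\lambda_i)^2}\pr_\alpha^{-1}(\omega^A)$, which is the definition I would take for the atomic CDV matrices in the next subsection.

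The main obstacle I anticipate is purely bookkeeping: being careful with orientations and with the homogenization that turns a $2$d oriented area $[p_ap_bp_c]$ into the $3\times3$ determinant $\det(p_a,p_b,p_c)$ — this is why the statement stipulates that $\alpha$ is oriented like $\{z=1\}$ and sits at distance exactly $1$ — and then verifying that, after substituting $p_i=\lambda_i u_i$ and multiplying back by $\lambda_i\lambda_j$, every $\lambda_i$ with $i\neq c$ cancels and precisely the powers of $\lambda_c$ (and, for $M^A$, of $\prod_i\lambda_i$) advertised in the statement survive. Once that cancellation is done once and for all, parts~\ref{lm:CDVwheel:1}--\ref{lm:CDVwheel:2} follow immediately from Lemmas~\ref{lm:pyrstress} and~\ref{lm:equiCDVprojcorr}, and part~\ref{lm:CDVwheel:3} is just the record of the scaling constants. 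Note also that the resulting $M^a$, $M^A$ do not depend on the choice of $\alpha$, since any two choices differ by the rescaling isomorphism of Lemma~\ref{lm:CDVscalable} composed with the corresponding change of $\lambda_i$'s, which leaves $M^a=\lambda_c\pr_\alpha^{-1}(\omega^a)$ invariant.
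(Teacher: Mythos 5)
Your proposal is correct and follows essentially the same route as the paper's own proof: fix a plane $\alpha$ at distance $1$ from the origin, pull the small atomic stress of the central projection back through $\pr_{\alpha}^{-1}$ (Lemma~\ref{lm:equiCDVprojcorr}), get uniqueness from the one-dimensionality of wheel stresses (Lemma~\ref{lm:pyrstress}), and match the explicit formula via the identity $[p_ip_kp_l]=\lambda_i\lambda_k\lambda_l\det(u_iu_ku_l)$ together with the extra factor $\lambda_c$. One bookkeeping caveat: if you actually carry out your cancellation for the large matrices, the definitions $M^A=\bigl(\prod_i\det(u_iu_{i+1}u_c)\bigr)M^a$ and $\omega^A=\bigl(\prod_i[p_ip_{i+1}p_c]\bigr)\omega^a$ yield $M^{A}=\frac{1}{\lambda_c^{\,n-1}(\prod_i\lambda_i)^2}\pr_{\alpha}^{-1}(\omega^{A})$, so the exponent of $\lambda_c$ in the formula you report (copied from the statement) is off by one --- a slip already present in the paper, whose proof only verifies the small case and says the large case is ``similar.''
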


\begin{definition}
\label{def:atomicCDV}
We call the CDV matrix $M^a$ defined in Lemma~\ref{lm:CDVwheel} the {\em atomic CDV matrix} for
the embedding $\mathbf u$.

We call the rescaling of $M^a$ with the factor
$\prod_{1\le i\le n} \det(u_iu_{i+1}u_c)$ the 
{\em large atomic CDV matrix} and denote it with
$$M^{A} = \left( \prod_{1\le i\le n} \det(u_iu_{i+1}u_c) \right) \cdot M^a .$$
\end{definition}

Note that for an integer embedding $\mathbf u$ the large atomic CDV matrix $M^{A}$ is an integer matrix.

\begin{proof}[Proof of Lemma~\ref{lm:CDVwheel}]
Pick any affine hyperplane $\alpha$ at the distance 1 from the origin
such that none of $u_i$ is parallel to $\alpha$.
Let $\mathbf p$ be the central projection of $\mathbf u$ to this plane, $p_i=\lambda_i u_i \in \alpha$.
Due to Lemma~\ref{lm:equiCDVprojcorr} the CDV matrices of $\mathbf u$ are in 1-to-1 correspondence with 
the equilibrium stresses of $\mathbf p$, which proves the uniqueness (part~\ref{lm:CDVwheel:2} of the lemma).

To finish the proof we check the expressions in part~\ref{lm:CDVwheel:3}.
Due to Lemma~\ref{lm:equiCDVprojcorr},  $\pr_{\alpha}^{-1}(\omega^a)$
is a CDV matrix, which also proves part~\ref{lm:CDVwheel:1} of the lemma.
We check only the equation for small atomic stresses. 
The case of large atomic stresses can be checked similarly. 
Since none of the triples of points $u_i$, $u_{i+1}$, $u_c$ for $1\le i\le n$
is coplanar with the origin, none of the triples of points $p_i$, $p_{i+1}$, $p_c$ is collinear.
Thus we can construct the small atomic stress for $\mathbf p$: 
$$
\omega^a_{ij}:=
\begin{cases}
   -\frac{1}{[p_ip_{i+1}p_c]},  \quad &j=i+1, 1\le i\le n,\\
   \frac{[p_{i-1}p_{i}p_{i+1}]}
   {[p_{i-1}p_{i}p_c][p_ip_{i+1}p_c]}, \quad &j=c, 1\le i\le n.
\end{cases}
$$
In the formulas above we view $p_i$ as points in the plane $\alpha$.
Since the area is translationary invariant, the choice of the origin within $\alpha$ does not play a role and the areas $[p_kp_lp_m]$ are well defined.
By Lemma~\ref{lm:equiCDVprojcorr}, the image of $\omega^a$ under the reverse projection 
$M:=\pr_{\alpha}^{-1}(\omega^a)$
of $\mathbf p$ to $u_i=\frac1{\lambda_i}p_i$ is a CDV matrix for $\mathbf u$ with
$$
M_{ij}:=\lambda_i\lambda_j\omega^a_{ij}.
$$
We additionally rescale $M$ with the factor $\lambda_c$ to 
$$
M^a:=\lambda_c M
$$
and remark that
$$
\frac{[p_ip_kp_l]}{\lambda_i\lambda_k\lambda_l}=\det(u_iu_ku_l).
$$
A straightforward computation finishes the proof.
\end{proof}

The Wheel-Decomposition Theorem for CDV matrices 
is now in context of Lemma~\ref{lm:equiCDVprojcorr} 
a straightforward consequence of 
the Wheel-Decomposition Theorem for equilibrium stresses 
as given in Theorem~\ref{thm:wheelDecStress}:
\begin{theorem}[Wheel-Decomposition Theorem for CDV Matrices]
\label{thm:wheelDecCDV}
Let $G$ be a triangulation with $n>3$ vertices.
Let $\P=(u_i)_{1\le i\le n}$ be an embedding of $G$ in $\R^3$
such that none of the planes supporting the faces of $\P$ passes through the origin.
Let $M$ be a CDV matrix for $\P$.
Then there exists a set of real coefficients $(\alpha_i)_{1\le i\le n}$ such that
$$M = \sum_{1\le i\le n} \alpha_i M^a(W_i),$$
where $M^a(W_i)$ is the atomic CDV matrix for the wheel $W_i\subset G$ centered at the vertex $v_i$.
The set of coefficients is unique up to the ``parallel translation" transformation 
$$
\alpha_i \to \alpha_i + \langle v, u_i \rangle \qquad \forall i
$$
for any vector $v$ in $\R^3$.
\end{theorem}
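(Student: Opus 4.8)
The plan is to deduce the theorem from its two-dimensional counterpart, Theorem~\ref{thm:wheelDecStress}, by transporting every object through the projective correspondence of Lemma~\ref{lm:equiCDVprojcorr}. First I would note that the hypothesis forces $u_i\neq 0$ for all $i$, since otherwise any face incident to $v_i$ would have a supporting plane through the origin. Then I would choose a generic affine plane $\alpha$ at distance $1$ from the origin that is parallel to none of the $u_i$ (the excluded planes form a measure-zero set), and let $\mathbf p=(p_i)_{1\le i\le n}$ be the central projection of $\P$ onto $\alpha$, so that $p_i=\lambda_i u_i$ with $\lambda_i\neq 0$. Using the identity $[p_ip_jp_k]=\lambda_i\lambda_j\lambda_k\det(u_iu_ju_k)$ (as in the proof of Lemma~\ref{lm:CDVwheel}) together with the assumption that no face plane passes through the origin, every face of $\mathbf p$ is a nondegenerate triangle, so $\mathbf p$ satisfies the hypotheses of Theorem~\ref{thm:wheelDecStress}.

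Next I would set $\omega:=\pr_{\alpha}(M)$; by Lemma~\ref{lm:equiCDVprojcorr} this is an equilibrium stress for $\mathbf p$ and $\pr_{\alpha}$ is a linear isomorphism between the space of CDV matrices of $\P$ and the space of equilibrium stresses of $\mathbf p$. Theorem~\ref{thm:wheelDecStress} then yields $\omega=\sum_{1\le i\le n}\beta_i\,\omega^a(W_i)$, where the $\beta_i$ are the vertex heights in some Maxwell--Cremona lifting of $\mathbf p$ induced by $\omega$, with the family $(\beta_i)$ determined up to adding to each $\beta_i$ the value at $p_i$ of a common affine functional on $\alpha$. Applying the linear map $\pr_{\alpha}^{-1}$ and using $\pr_{\alpha}^{-1}(\omega^a(W_i))=\tfrac1{\lambda_i}M^a(W_i)$, which is part~\ref{lm:CDVwheel:3} of Lemma~\ref{lm:CDVwheel} (with center $v_c=v_i$), gives $M=\sum_{1\le i\le n}\tfrac{\beta_i}{\lambda_i}M^a(W_i)$, so $\alpha_i:=\beta_i/\lambda_i$ establishes existence. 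Here I would also check the small compatibility point that extending $\omega^a(W_i)$ by zeros to all of $G$ corresponds under $\pr_{\alpha}^{-1}$ to extending $M^a(W_i)$ by zeros: at a vertex $v_j$ outside $W_i$ all off-diagonal entries vanish and $u_j\neq 0$, which forces the diagonal entry there to be $0$ as well.

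For uniqueness, the isomorphism $\pr_{\alpha}$ and the identity $\pr_{\alpha}^{-1}(\omega^a(W_i))=\tfrac1{\lambda_i}M^a(W_i)$ put the solution set $\{(\alpha_i):\sum_i\alpha_iM^a(W_i)=M\}$ in bijection with $\{(\beta_i):\sum_i\beta_i\omega^a(W_i)=\omega\}$ via $\beta_i=\lambda_i\alpha_i$, and by Theorem~\ref{thm:wheelDecStress} the ambiguity in $(\beta_i)$ is the addition of an affine functional of $p_i$. Writing $\alpha=\{x\in\R^3:\langle m,x\rangle=1\}$, every affine functional $p\mapsto\langle w,p\rangle+d$ agrees on $\alpha$ with the linear functional $\langle w+dm,\,p\rangle$; hence the ambiguity in $\beta_i$ is exactly $\beta_i\mapsto\beta_i+\langle v,p_i\rangle=\beta_i+\lambda_i\langle v,u_i\rangle$ for $v\in\R^3$, and dividing by $\lambda_i$ the ambiguity in $\alpha_i$ is exactly $\alpha_i\mapsto\alpha_i+\langle v,u_i\rangle$, as claimed.

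The step I expect to demand the most care is the last one: translating the planar ``freedom to lift one face arbitrarily'' — an affine functional on the two-dimensional plane $\alpha$ — into the three-dimensional parallel-translation move $\alpha_i\mapsto\alpha_i+\langle v,u_i\rangle$, which relies on the elementary but easy-to-mishandle fact that affine functionals on $\alpha$ are precisely the restrictions of linear functionals on $\R^3$. The remaining work — existence of a suitable plane $\alpha$, nondegeneracy of the projected faces, and the bookkeeping that extends wheel-local atomic objects to $G$ — is routine given Theorem~\ref{thm:wheelDecStress}, Lemma~\ref{lm:equiCDVprojcorr}, and Lemma~\ref{lm:CDVwheel}.
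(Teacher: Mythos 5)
Your proof is correct and follows essentially the same route as the paper: project centrally to a plane $\alpha$, transport $M$ to an equilibrium stress via Lemma~\ref{lm:equiCDVprojcorr}, apply Theorem~\ref{thm:wheelDecStress}, pull back with part~\ref{lm:CDVwheel:3} of Lemma~\ref{lm:CDVwheel}, and convert the planar lifting ambiguity into the transformation $\alpha_i\mapsto\alpha_i+\langle v,u_i\rangle$. Your added checks (nondegeneracy of the projected faces, the zero-extension of the wheel-local objects, and the identification of affine functionals on $\alpha$ with restrictions of linear functionals on $\R^3$) are small elaborations of steps the paper treats implicitly.
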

\begin{proof}
Let $\alpha$ be any affine plane that does not contain the origin 
and is not parallel to any of the vectors $u_i$.
Let $\mathbf p = (p_i)_{1\le i\le n}$ be the central projection of $\P$ to this plane and 
$\omega=\pr_{\alpha}(M)$ be the central projection of $M$ (see Lemma~\ref{lm:equiCDVprojcorr}):
\begin{align*}
p_i &=\lambda_i u_i:\quad p_i \in \alpha,\quad &&1\le i\le n,\\
\omega_{ij} &= \frac1{\lambda_i\lambda_j}M_{ij},\quad &&(i,j)\in E(G).
\end{align*}

Due to Lemma~\ref{lm:equiCDVprojcorr}, $\omega$ is an equilibrium stress for $\mathbf p$.
Since the faces of $\P$ are not coplanar with the origin, the faces of $\mathbf p$ are nondegenerate triangles.
Thus we can apply the Wheel-Decomposition Theorem for equilibrium stresses to get
\begin{equation}
\label{eq:localwheel}
\omega = \sum_{1\le k\le n} \alpha_k \omega^a(W_k),
\end{equation}
where $\omega^a(W_k)$ is the small atomic stress for the wheel centered at the vertex $v_k$.
We use the second part of Lemma~\ref{lm:equiCDVprojcorr} and project Eq.~\eqref{eq:localwheel} back to $\P$:
$$
\pr_{\alpha}^{-1}\omega = 
\sum_{1\le k\le n} \alpha_k  \pr_{\alpha}^{-1}\omega^a(W_k).
$$
By construction, the left-hand side equals $M$.
By the third part of Lemma~\ref{lm:CDVwheel}, the summands on the right-hand side equal the rescaled atomic CDV matrices of the wheels,
$$
\pr_{\alpha}^{-1}\omega^a(W_k) = 
\frac1{\lambda_k} 
M^a (W_k),\quad 1\le k\le n.
$$
Thus,
$$
M=\sum_{1\le k\le n} \pr_{\alpha}^{-1}(\alpha_k) M^a(W_k),
$$
where $\pr_{\alpha}^{-1}(\alpha_k) = \frac{\alpha_k}{\lambda_k}$.

To finish the proof we analyze the uniqueness of the coefficients.
The coefficients in the planar decomposition
\eqref{eq:localwheel} are the heights of the corresponding vertices in any of the Maxwell--Cremona liftings of $\mathbf p$  with help of $\omega$.  
They are unique up to the transformation 
$\alpha_k \to \alpha'_k = \alpha_k + \langle v, p_k\rangle $, where $v$ is any vector in $\R^3$.
The corresponding transformation for the coefficients of the 
wheel-decomposition of CDV matrices is:
\begin{align*}
\pr_{\alpha}^{-1}(\alpha_k) 
\to \pr_{\alpha}^{-1}(\alpha'_k) = \frac{\alpha'_k}{\lambda_k} 
=& \frac{\alpha_k+\langle v, p_k \rangle}{\lambda_k}\\
=& \pr_{\alpha}^{-1}(\alpha_k)+\langle v, \frac{p_k }{\lambda_k}\rangle
=\pr_{\alpha}^{-1}(\alpha_k)+\langle v, u_k \rangle.
\end{align*}
\end{proof}

\subsection{The duality transform}
\label{subsec:simplicialDuality:simplicialDuality}
We can now wrap up to conclude with the main theorem of this section. We start by introducing a theorem that presents the crucial
step in the dual transform for general simplicial polytopes.
\begin{theorem}
\label{thm:CDVsimplicial}
Let $\P=(u_i)_{1\le i\le n}$ be a convex simplicial polytope in $\R^3$ 
with integer coordinates, containing the origin in its interior.
Then there exists an integral positive CDV matrix $M$ for $\P$ such that its entries are bounded by
$$
|M_{ij}| = O(\max_{1\le i\le n}|u_i| ^ {3\maxdeg + 7}),
$$
where $\maxdeg$ is the maximal vertex degree in $G$.
\end{theorem}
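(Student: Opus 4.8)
The plan is to start from the canonical CDV matrix $\widetilde M$ of $\P$, which exists and is positive by Lemma~\ref{lem:6} and the remark that for a convex polytope containing the origin in its interior the canonical CDV matrix is positive; however, its entries $\widetilde M_{ij}$ defined by Eq.~\eqref{eq:phi} are in general rational (they come from normals $\phi_f$ with rational coordinates), so scaling to integrality directly could blow up the size exponentially. Instead I would mimic the argument of Theorem~\ref{thm:reverseMC} but carried out entirely in $\R^3$ using the machinery of Subsections~\ref{subsec:simplicialDuality:WD}: decompose $\widetilde M$ into atomic CDV matrices and round the coefficients.

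Concretely, the steps are: First, bound the entries of $\widetilde M$ from above and (crucially) from below in absolute value, using that all $u_i$ are integral with $|u_i|\le \max|u_i|=:L$; the key sub-bounds are that each $3\times 3$ determinant $\det(u_au_bu_c)$ is a nonzero integer (since no face plane passes through the origin), hence $\ge 1$ in absolute value and $\le 6L^3$, and similarly the numerator determinants are bounded, giving $L^{-O(\maxdeg)}\lesssim|\widetilde M_{ij}|\lesssim L^{O(\maxdeg)}$ — this mirrors the estimate chain in Theorem~\ref{thm:reverseMC}. Second, apply the Wheel-Decomposition Theorem for CDV matrices (Theorem~\ref{thm:wheelDecCDV}) to write $\widetilde M=\sum_{1\le k\le n}\alpha_k M^a(W_k)$, then rescale to large atomic CDV matrices $M^A(W_k)$, which are \emph{integer} matrices (as noted after Definition~\ref{def:atomicCDV}) with $|M^A_{ij}(W_k)|\le (6L^3)^{\maxdeg-1}$ since each is a product of $\deg(v_c)-1$ integer determinants. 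Third, scale the coefficients by a suitable integer $C$ of size $C=O\!\bigl(\max_{i,j,k}|M^A_{ij}(W_k)|/\min_{i,j}|\widetilde M_{ij}|\bigr)$, round down to $\lfloor C\beta_k\rfloor$, and set $M:=\sum_k \lfloor C\beta_k\rfloor\,M^A(W_k)$; since each edge is covered by exactly four wheels in the decomposition, the rounding error on each entry is at most $4\max|M^A_{ij}(W_k)|\le C\min|\widetilde M_{ij}|$, so $\sign(M_{ij})=\sign(C\widetilde M_{ij})=\sign(\widetilde M_{ij})>0$ and in particular $M$ is an integer \emph{positive} CDV matrix. Finally, collect the bound: $|M_{ij}|\le C|\widetilde M_{ij}|+4\max|M^A_{ij}(W_k)| \lesssim (L^{O(\maxdeg)}/1)\cdot L^{O(\maxdeg)}\cdot L^{O(\maxdeg)}$, and tracking the exponents carefully yields $O(L^{3\maxdeg+7})$.

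The main obstacle I expect is the lower bound $\min_{i,j}|\widetilde M_{ij}|\ge L^{-O(\maxdeg)}$ with the right exponent: one must express $\widetilde M_{ij}$ (defined abstractly through Eq.~\eqref{eq:phi}) in terms of integer determinants of the $u$'s so that its denominator is controlled. The cleanest route is to use Lemma~\ref{lm:CDVwheel}, part~\ref{lm:CDVwheel:3}, together with Lemma~\ref{lm:equiCDVprojcorr}, to get an explicit rational formula for the canonical CDV entries analogous to Eq.~\eqref{eq:ProjEqui} — essentially $\widetilde M_{ij}=\det(u_iu_ju_ku_l)\big/\bigl(\det(u_iu_ju_k)\det(u_lu_ju_i)\bigr)$ for the two faces $(v_iv_jv_k)$ and $(v_iv_jv_l)$ sharing the edge — whereupon the numerator is a nonzero integer (no two adjacent faces are coplanar, which holds for a simplicial polytope) bounded by $O(L^3)$ and each denominator factor is a nonzero integer bounded by $O(L^3)$. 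This gives $\min|\widetilde M_{ij}|\ge 1/O(L^6)$ and $\max|\widetilde M_{ij}|\le O(L^3)$. A secondary technical point is ensuring the wheel-decomposition coefficients $\beta_k=\alpha_k/\lambda_k$ can be read off without introducing new denominators beyond what the large-atomic rescaling already clears; since we only need $C\beta_k$ up to rounding and $C$ absorbs the relevant factors, this is not a real difficulty, but it must be stated carefully. Once these bounds are in hand, the exponent bookkeeping — $C\le 4\cdot(6L^3)^{\maxdeg-1}\cdot O(L^6)=O(L^{3\maxdeg+3})$, times $\max|\widetilde M_{ij}|=O(L^3)$, plus the lower-order term — delivers the claimed $O(\max|u_i|^{3\maxdeg+7})$.
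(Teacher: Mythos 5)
Your proposal is correct and follows essentially the same route as the paper's proof: start from the canonical CDV matrix (positive since the origin is interior), decompose it via the Wheel-Decomposition Theorem into integer large atomic CDV matrices, rescale the coefficients by a factor $C$ comparable to $\max|M^A_{ij}(W_k)|/\min|\widetilde M_{ij}|$, round down, and use the fact that each edge lies in exactly four wheels to preserve signs and collect the $O(\max|u_i|^{3\maxdeg+7})$ bound. The only deviation is in bounding the canonical entries: you use an explicit determinant-ratio formula (a 3d analogue of Eq.~\eqref{eq:ProjEqui}), which is indeed valid and gives $1/O(L^6)\le|\widetilde M_{ij}|\le O(L^3)$, whereas the paper instead bounds $|\phi_f-\phi_g|$ from below by passing to a common denominator and $|\phi_f|$ from above by a volume argument, obtaining $1/L^8\le|M^c_{ij}|\le 4L^2$; both suffice, and your variant is, if anything, marginally sharper.
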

\begin{proof}
The proof goes through 3 steps:
\begin{enumerate}
\item[(I)] We construct as $M^c$ the canonical CDV matrix for $\P$ and bound its entries.
\item[(II)] We use the Wheel-Decomposition Theorem to decompose $M^c$
as a linear combination of the {\em large} atomic CDV matrices $M^c = \sum_{1\le k\le n}\alpha_k M^A(W_k)$ and we bound the entries of $M^A(W_k)$,
\item[(III)] We define the final CDV matrix  $M := \sum_{1\le k\le n}\lfloor C\alpha_k\rfloor M^A(W_k)$ with $C=4\frac{\max(M^A_{ij}(W_k))}{\min(M^c_{ij})}$ and check its correctness.
\end{enumerate}
Throughout the proof we set $\gridsize := \max_{ij}|u_i-u_j|$.\\
Step (I): Let $M^c$ be the canonical CDV matrix for $\P$:
$$
\phi_f-\phi_g = M^c_{ij}(u_i\times u_j)
$$
for every pair of dual edges $(u_iu_j)$ and $(\phi_f\phi_g)$.
In the next steps we need to bound $|M^c_{ij}|$. 
Obviously,
$$ \frac{\max|\phi_f-\phi_g|}{\min|u_i\times u_j|}  \ge |M^c_{ij}| \ge \frac{\min|\phi_f-\phi_g|}{\max|u_i\times u_j|}. 
$$
To bound $|M^c_{ij}|$ further we need a lower bound for $|\phi_f-\phi_g|$ (estimate a) and an upper bound for  $|\phi_f|$ (estimate b).

(estimate a) We remark that
$\phi_f$ is the solution of the linear system
$$
\langle\phi_f, u^f_1\rangle = 1, 
\langle\phi_f, u^f_2\rangle = 1,
\langle\phi_f, u^f_3\rangle = 1,$$
where the vertices $u^f_1u^f_2u^f_3$ belong to the face $f$.
Thus, since $u_i$ are integral,
$$\phi_f = \frac1{\det(u^f_1u^f_2u^f_3)} Z_3$$
for $Z_3$ being some vector in $\mathbb{Z}^3$.
Similarly, 
$$\phi_g = \frac1{\det(u^g_1u^g_2u^g_3)} Z'_3.$$
Again, $Z'_3$ is some vector in $\mathbb{Z}^3$.
Thus, going to the common denominator,
$$|\phi_f-\phi_g| = \frac1{\det(u^f_1u^f_2u^f_3)
\det(u^g_1u^g_2u^g_3)} Z''_3$$
for some third $Z''_3$ in $\mathbb{Z}^3$.
So, since $\phi_f\ne \phi_g$,
$$
|\phi_f-\phi_g|\ge \frac1{\det(u^f_1u^f_2u^f_3)
\det(u^g_1u^g_2u^g_3)}
\ge \frac1{\gridsize^6}.
$$

(estimate b) To bound $|\phi_f|$ from above we 
consider the pyramid over the face $f=(u^f_1u^f_2u^f_3)$ with the tip at $\mathbf{0}=(0,0,0)^T$.
Its (nonoriented) volume $\Vol(\mathbf{0},u^f_1,u^f_2,u^f_3)$ can be computed as
$$
\frac13 h_f \Area(u^f_1u^f_2u^f_3) \
= \Vol(\mathbf{0},u^f_1,u^f_2,u^f_3),
$$
where $h_f$ is the height to the base $f$. 
Since all $u_i$s are integral, $\Vol(\mathbf{0},u^f_1,u^f_2,u^f_3)\ge \frac{1}{6}$.
Trivially, $\Area(u^f_1u^f_2u^f_3)\le L^2$.
Since $\phi_f$ is the polar vector for the plane supporting the face $f$, 
and $|h_f|$ is the distance from the origin to this plane,
$|h_f| = \frac1{|\phi_f|}$.
Thus,
$$
|\phi_f| \le 
\frac{\Area(u^f_1u^f_2u^f_3)}{3\Vol(\mathbf{0},u^f_1,u^f_2,u^f_3)} \le 2L^2.
$$

Using the two bounds of (estimate a) and (estimate b) gives: 
\begin{align}
|M^c_{ij}| 
	&\ge \frac{\min|\phi_f-\phi_g|}{\max|u_i\times u_j|} 
	\ge \frac{1/\gridsize^6}{\gridsize^2} 
	= \frac1{\gridsize^8}, \label{eq:boundC2}\\
|M^c_{ij}| 
	&\le \frac{\max|\phi_f-\phi_g|}{\min|u_i\times u_j|} 
	\le 2\max|\phi_f|
	\le 4\gridsize^2.\nonumber
\end{align}

%

Step (II): We use the Wheel-Decomposition Theorem to decompose $M^c$  into 
a linear combination of the {\em large} atomic CDV matrices of wheels:
\begin{equation}
\label{eq:localwheel2}
M^c = \sum_{1\le k\le n}\alpha_k M^A(W_k).
\end{equation}
As a next step we bound the entries $|M^A_{ij}|$ from above. By definition,
\begin{align}
\label{eq:boundC1}
|M^A_{ij}(W_k)| 
	&\le (\max_{o,p,q}|\det(u_ou_pu_q)|)^{\deg(v_k)-1} 
	\le \gridsize^{3(\deg(v_k)-1)}.
\end{align}

Step (III):
The large atomic CDV matrices in the decomposition~\eqref{eq:localwheel2} are integers. 
To make the whole sum integral we round the coefficients. 
Though, direct rounding may influence the sign of the resulting stresses. 
To overcome this effect we scale the coefficients before rounding.
We pick the scaling factor 
$$
C := 4\left\lceil \frac{\max_{i,j,k} |M^A_{ij}(W_k)|}{\min_{i,j} |M^c_{ij}|}\right\rceil
$$
and construct 
$$
M := \sum_{1\le k\le n} \lfloor C\alpha_k\rfloor M^A(W_k).
$$
The matrix $M$ is a CDV matrix for $\P$ with integer coefficients. It remains to prove that the signs of $M_{ij}$ and $M^c_{ij}$ coincide, which would validate that $M$ is positive.
Indeed,
\begin{align*}
|M_{ij} - C M^c_{ij}| 
&= |\sum_{1\le k\le n}(\lfloor C\alpha_k\rfloor - C\alpha_k) M^A_{ij}(W_k)|\\
&\le \sum_{1\le k\le n} |M^A_{ij}(W_k)|
\le 4\max|M^A_{ij}(W_k)|
\le C \min|M^c_{ij}|.
\end{align*}

The third transition holds since exactly 4 wheels participate
in the wheel-decomposition of any single edge.
In the remainder of the proof we bound the entries of $M$.
Due to the bounds on $M^c$ and $M^A$ given by Eq.~\eqref{eq:boundC2} and Eq.~\eqref{eq:boundC1}, the constant $C$ is bounded by
$$
C \le 4\gridsize^{3(\maxdeg-1)+8}.
$$
The constructed CDV matrix $M$ is therefore bounded by
\begin{align*}
|M_{ij}| 
\le& |CM^c_{ij}|+4\max|M^A_{ij}(W_k)| \\
\le& |C|\max|M^c_{ij}|+4\max|M^A_{ij}(W_k)| \\
\le& 4\gridsize^{3(\maxdeg-1)+8} 4\gridsize^2 + 4 \gridsize^{3(\maxdeg-1)}.
\end{align*}
\end{proof}

The final theorem of the paper is now a direct consequence of Theorem~\ref{thm:LoThm} and Theorem~\ref{thm:CDVsimplicial}:
\begin{theorem}
\label{thm:simplicialDuality}
    Let $G$ be a triangulation with maximal vertex degree $\maxdeg$
    and let  $\P=(u_i)_{1\le i\le n}$ be a realization of $G$ as a convex polytope with integer coordinates.
    Then there exists  a realization $(\phi_f)_{f\in F(G)}$ of
    the dual graph $G^*$ as a convex polytope with integer coordinates bounded by
    $$
    |\phi_f|<O(n\max|u_i|^{3\maxdeg + 9}).
    $$
\end{theorem}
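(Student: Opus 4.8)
The plan is to reduce the statement to Theorem~\ref{thm:CDVsimplicial} and Theorem~\ref{thm:LoThm} after one harmless normalization: both of those require the origin to lie in the interior of the polytope, so the first step is to move it there without losing integrality and without inflating the coordinates by more than a constant. A lattice translation alone cannot do this (the interior of the standard tetrahedron contains no lattice point), so I would dilate first. Choose four affinely independent vertices $u_{i_0},\dots,u_{i_3}$ of $\P$ (they exist since $\P$ is $3$-dimensional), set $w:=u_{i_0}+u_{i_1}+u_{i_2}+u_{i_3}\in\Z^3$, and replace $\P$ by $\P'=(u_i')_{1\le i\le n}$ with $u_i':=4u_i-w$. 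Then $\P'$ is again an integer realization of the same triangulation $G$ (a dilation followed by an integer translation changes neither the combinatorics nor integrality), one has $\max_i|u_i'|\le 8\max_i|u_i|$, and the origin equals the barycenter $\tfrac{1}{4}(u_{i_0}'+\dots+u_{i_3}')$ of the full-dimensional simplex $\Conv(u_{i_0}',\dots,u_{i_3}')\subseteq\P'$, hence lies in the interior of $\P'$.

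Next I would record that $\P'$, being a convex polytope with the origin in its interior, is a cone-convex embedding of $G$: each facet spans a plane avoiding the origin, so the cone from the origin over each facet is pointed and convex, and the cones over two distinct facets share only the single boundary point cut out on the common ray through the origin, hence have disjoint interiors; this is exactly the characterization of cone-convexity noted after the definition of a cone-convex embedding. Moreover $\P'$ is simplicial, since $G$ is a triangulation. Thus $\P'$ is an integer cone-convex simplicial embedding of $G$ to which the machinery applies verbatim.

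Now Theorem~\ref{thm:CDVsimplicial}, applied to $\P'$, produces an integral positive CDV matrix $M$ for $\P'$ with $|M_{ij}|=O(\max_i|u_i'|^{3\maxdeg+7})=O(\max_i|u_i|^{3\maxdeg+7})$ (the factor $8^{3\maxdeg+7}$ incurred by the dilation is absorbed in the $O$-notation exactly as the factor $2^{3\maxdeg+7}$ coming from $\gridsize\le 2\max_i|u_i|$ is absorbed inside Theorem~\ref{thm:CDVsimplicial} itself). Feeding the integer cone-convex embedding $\P'$ together with the integer positive CDV matrix $M$ into Theorem~\ref{thm:LoThm} yields an integer realization $(\phi_f)_{f\in F(G)}$ of $G^*$ as a convex polytope satisfying
$$
|\phi_f|<2n\cdot\max_{(v_i,v_j)\in E(G)}\bigl|M_{ij}(u_i'\times u_j')\bigr|\le 2n\cdot O\!\left(\max_i|u_i|^{3\maxdeg+7}\right)\cdot O\!\left(\max_i|u_i|^{2}\right)=O\!\left(n\max_i|u_i|^{3\maxdeg+9}\right),
$$
which is the asserted bound.

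Since the three ingredients are already in place, there is no genuine difficulty here; the only points deserving attention are the normalization step — it is the dimension-plus-one dilation factor $4$ that guarantees an interior lattice point while costing only a constant factor in the coordinates — and the routine verification that the normalized polytope is literally a cone-convex embedding in the sense of the definition, so that Theorem~\ref{thm:LoThm} applies without modification.
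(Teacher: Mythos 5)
Your proposal is correct and follows the paper's proof exactly: the paper derives Theorem~\ref{thm:simplicialDuality} as a direct consequence of Theorem~\ref{thm:CDVsimplicial} (integral positive CDV matrix with entries $O(\max|u_i|^{3\maxdeg+7})$) combined with Theorem~\ref{thm:LoThm}, which is precisely your chain of reasoning and yields the same bound $O(n\max|u_i|^{3\maxdeg+9})$. The only difference is that you spell out the dilation-plus-translation normalization placing a lattice-preserving copy of the polytope around the origin (needed because Theorem~\ref{thm:CDVsimplicial} assumes the origin is interior), a step the paper leaves implicit; your treatment of it, including the constant-factor coordinate blow-up absorbed in the $O$-notation, is sound.
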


\section{Example}

\label{sec:example}
\tikzset{vertex/.style=} 
\tikzset{edge/.style=} 
\tikzset{axis/.style=} 
As an example for our embedding algorithm for truncated polytopes we
show how to embed the \emph{truncated tetrahedron}.
This polytope is obtained from 
the tetrahedron by truncating all of its four vertices.
The graph $G^*$ of this polytope and its dual $G$ are depicted in Fig.~\ref{fig:onlyfig}.
We start with the planar embedding of $G_{\uparrow}=G[v_2,\ldots, v_8]$ that is defined on Fig.~\ref{fig:2d}.

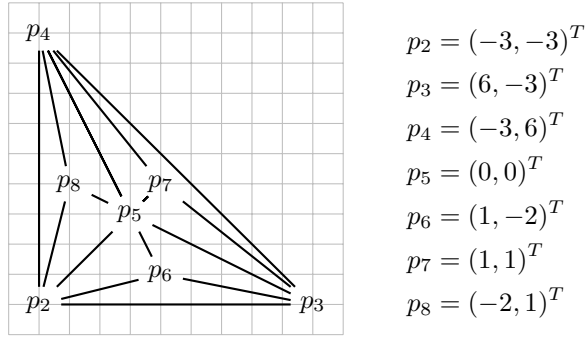
\begin{figure}[ht]
\begin{longtable}{p{4cm}p{4cm}p{4cm}}
\parbox{6cm}{
 \begin{tikzpicture}

  \begin{scope}[scale=0.4]
  \draw[step=1cm,color=lightgray] (-4,-4) grid (7,7);
	\node[vertex] (v2) at (-3,-3) {$p_2$};
	\node[vertex] (v3) at (6,-3) {$p_3$};
	\node[vertex] (v4) at (-3,6) {$p_4$};
	\node[vertex] (v5) at (0,0) {$p_5$};
	\node[vertex] (v6) at (1,-2) {$p_6$};
	\node[vertex] (v7) at (1,1) {$p_7$};
	\node[vertex] (v8) at (-2,1) {$p_8$};

	\draw[edge, thick] (v2) -- (v3) -- (v4) -- (v2);
	\draw[edge, thick] (v3) -- (v7) -- (v4) -- (v8) -- (v2) -- (v6) -- (v3);
	\draw[edge, thick] (v6) -- (v5) -- (v7) -- (v5) -- (v8);
	\draw[edge, thick] (v3) -- (v5) -- (v4) -- (v5) -- (v2);
  \end{scope}
 \end{tikzpicture}
}
&
\parbox{4cm}{
\begin{align*}
p_2&=(-3,-3)^T\\
p_3&=(6,-3)^T\\
p_4&=(-3,6)^T\\
p_5&=(0,0)^T\\
p_6&=(1,-2)^T\\
p_7&=(1,1)^T\\
p_8&=(-2,1)^T
\end{align*}
}
\end{longtable}
\caption{The original plane embedding of  $G_{\uparrow}$.}
\label{fig:2d}
\end{figure}
The embedding has the integer stress 
$\omega_{52}=\omega_{53}=\omega_{54}=1$, $\omega_{23}=\omega_{34}=\omega_{42}=-2$, 
and all other stresses have value $3$.

Following Theorem~\ref{thm:framework}, we embed the drawing of $G_{\uparrow}$ 
onto the plane $\{z=1\}$. 
Unlike the general case of Theorem~\ref{thm:framework}, 
we do not have to scale 3 times and translate, 
since the baricenter of $p_2, p_3, p_4$ is initially at the origin.
We add the point $u_1:=(0,0,-3)^{T}$ 
.

The corresponding CDV matrices $[M''_{ij}]_{1\le i,j\le 4}$ and $[M']_{2\le i,j\le 8}$ are:
\begin{equation}\label{eqn:matrices}
M''=\begin{pmatrix}
1& 1& 1& 1\\
1& 1& 1& 1\\
1& 1& 1& 1\\
1& 1& 1& 1
\end{pmatrix}, \quad
M'=\begin{pmatrix}
-3&	-2&	-2&	1&	3&	0&	3\\
-2&	-3&	-2&	1&	3&	3&	0\\
-2&	-2&	-3&	1&	0&	3&	3\\
1&	1&	1&	-12&	3&	3&	3\\
3&	3&	0&	3&	-9&	0&	0\\
0&	3&	3&	3&	0&	-9&	0\\
3&	0&	3&	3&	0&	0&	-9
\end{pmatrix}.
\end{equation}
We extend $M'$ and $M''$ to the whole $G$ and form the final CDV matrix $M=M'+3M''$:
\begin{equation*}
M=\begin{pmatrix}
3&	3&	3&	3&	0&	0&	0&	0\\
3&	0&	1&	1&	1&	3&	0&	3\\
3&	1&	0&	1&	1&	3&	3&	0\\
3&	1&	1&	0&	1&	0&	3&	3\\
0&	1&	1&	1&	-12&	3&	3&	3\\
0&	3&	3&	0&	3&	-9&	0&	0\\
0&	0&	3&	3&	3&	0&	-9&	0\\
0&	3&	0&	3&	3&	0&	0&	-9
\end{pmatrix}.
\end{equation*}

We can now apply Theorem~\ref{thm:LoThm} and compute the vectors $(\phi_f)$. We first assign $\phi_{(236)}=(0,-18,27)^{T}$.
(One can start with assigning of any vector to any face -- 
the resulting embeddings will be the same up to translation. 
This assignment makes the resulting coordinates look slightly nicer.)
The remaining vectors are then iteratively computed with~\eqref{eq:phi}. We obtain as a result:
{\footnotesize
\begin{align*}
&\quad&&\phi_{(236)}=(0,-18,27)^{T}\\
\phi_{(265)}-\phi_{(236)}&= M_{26}(u_2\times u_6) = (-3,12,27)^{T}	\quad&&\phi_{(265)}=(-3,-6,54)^{T}\\
\phi_{(258)}-\phi_{(265)} &= M_{25}(u_2\times u_5) = (-3,3,0)^{T}	\quad&&\phi_{(258)}=(-6,-3,54)^{T}\\
\phi_{(284)}-\phi_{(258)} &= M_{28}(u_2\times u_8) = (-12,3,-27)^{T}	\quad&&\phi_{(284)}=(-18,0,27)^{T}\\
\phi_{(485)}-\phi_{(284)} &= M_{48}(u_4\times u_8) = (15,3,27)^{T}	\quad&&\phi_{(485)}=(-3,3,54)^{T}\\
\phi_{(457)}-\phi_{(485)} &= M_{45}(u_4\times u_5) = (6,3,0)^{T}	\quad&&\phi_{(457)}=(3,6,54)^{T}\\
\phi_{(473)}-\phi_{(457)} &= M_{47}(u_4\times u_7) = (15,12,-27)^{T}	\quad&&\phi_{(473)}=(18,18,27)^{T}\\
\phi_{(375)}-\phi_{(473)} &= M_{37}(u_3\times u_7) = (-12,-15,27)^{T}	\quad&&\phi_{(375)}=(6,3,54)^{T}\\
\phi_{(356)}-\phi_{(375)} &= M_{35}(u_3\times u_5) = (-3,-6,0)^{T}	\quad&&\phi_{(356)}=(3,-3,54)^{T}\\
\phi_{(362)}-\phi_{(356)} &= M_{36}(u_3\times u_6) = (-3,-15,-27)^{T}	\quad&&\phi_{(362)}=(0,-18,27)^{T}\\
\phi_{(321)}-\phi_{(362)} &= M_{32}(u_3\times u_2) = (0,-9,-27)^{T}	\quad&&\phi_{(321)}=(0,-27,0)^{T}\\
\phi_{(124)}-\phi_{(321)} &= M_{12}(u_1\times u_2) = (-27,27,0)^{T}	\quad&&\phi_{(124)}=(-27,0,0)^{T}\\
\phi_{(143)}-\phi_{(124)} &= M_{14}(u_1\times u_4) = (54,27,0)^{T}	\quad&&\phi_{(143)}=(27,27,0)^{T}.
\end{align*}}
The final result is depicted in Fig.~\ref{fig:final}. The embedding requires a $54\times 54\times 54$ grid.
\begin{figure}[h]
\centering
\begin{tikzpicture}[axis/.style={->,blue}]

	\begin{scope}[xshift=9cm, yshift=-0.7cm, scale=0.06, x={(1cm,0cm)}, y={(0.25cm,0.17cm)}, z={(0cm,1cm)}]
		\draw[axis] (0,0,0) -- (30,0,0) node[anchor=west]{$x$};
	\draw[axis] (0,0,0) -- (0,30,0) node[anchor=west]{$y$};
	\draw[axis] (0,0,0) -- (0,0,30) node[anchor=west]{$z$};

		\path (0,-18,27) node[name=f263, shape=coordinate]{};
		\path (-3,-6,54) node[name=f265, shape=coordinate]{};
		\path  (-6,-3,54) node[name=f258, shape=coordinate]{};
		\path  (-18,0,27) node[name=f284, shape=coordinate]{};
		\path  (-3,3,54) node[name=f485, shape=coordinate]{};
		\path  (3,6,54) node[name=f457, shape=coordinate]{};
		\path  (18,18,27) node[name=f473, shape=coordinate]{};
		\path  (6,3,54) node[name=f375, shape=coordinate]{};
		\path  (3,-3,54) node[name=f356, shape=coordinate]{};
		\path  (0,-18,27) node[name=f362, shape=coordinate]{};
		\path  (0,-27,0) node[name=f321, shape=coordinate]{};
		\path  (-27,0,0) node[name=f124, shape=coordinate]{};
		\path  (27,27,0) node[name=f143, shape=coordinate]{};
		
		\path[draw,thick] (f362) -- (f265) -- (f258) -- (f284) -- (f124) -- (f321);
		\path[draw, dashed] (f284) -- (f485);
		\path[draw,thick] (f485) -- (f457);
		\path[draw, dashed] (f457) -- (f473);
		\path[draw,thick] (f473) -- (f143) -- (f321);
		\path[draw,thick] (f473) -- (f375) -- (f356) -- (f362) -- (f321) -- (f124) ;
		\path[draw, dashed] (f124) -- (f143);
		\path[draw, thick] (f375) -- (f356) -- (f265) -- (f258) -- (f485) -- (f457) -- (f375);
	\end{scope}
\end{tikzpicture}
\caption{The final embedding of the truncated tetrahedron.}
\label{fig:final}
\end{figure}
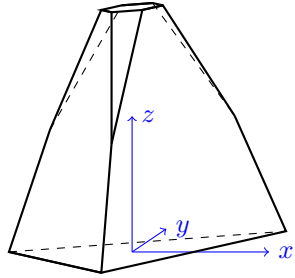

\subsection*{Acknowledgements} We thank an anonymous reviewer for helpful comments that improved the presentation greatly.

\bibliographystyle{abbrv} 
\bibliography{truncated}

\end{document}